\documentclass[twoside,11pt,letter]{article}
\usepackage{times}
\usepackage{amsmath}
\usepackage{amsfonts}
\usepackage{amssymb}
\usepackage{amsthm}
\usepackage{graphicx}
\usepackage{mathrsfs}
\usepackage{color}
\usepackage{hyperref}
\usepackage{url}

\def\bC{\mathbb{C}}
\def\bN{\mathbb{N}}
\def\bR{\mathbb{R}}
\def\bZ{\mathbb{Z}}
\def\ket#1{|{#1}\rangle}

\def\cA{\mathcal{A}}
\def\cB{{\mathcal{B}}}
\def\cH{{\mathcal{H}}}
\def\cU{{\mathcal{U}}}
\def\cV{{\mathcal{V}}}

\def\fc{{\mathfrak{c}}}
\def\fd{{\mathfrak{d}}}
\def\fe{{\mathfrak{e}}}

\def\im{\mathrm{Im}}
\def\re{\mathrm{Re}}

\newcommand{\tr}{\mathrm{tr}}
\newcommand{\diag}{\mathrm{diag}}
\newcommand{\Ad}{\mathrm{Ad}}
\newcommand{\cexp}{\mathcal{E}}  

\newtheorem{theorem}{Theorem}[section]

\newtheorem{corollary}[theorem]{Corollary}
\newtheorem{fact}[theorem]{Fact}
\theoremstyle{definition}

\theoremstyle{definition}
\newtheorem{example}[theorem]{Example}
\theoremstyle{plain}
\newtheorem{lemma}[theorem]{Lemma}
\newtheorem{proposition}[theorem]{Proposition}

\newenvironment{remark}[1][Remark.]{\begin{trivlist}
\item[\hskip \labelsep {\bfseries #1}]}{\end{trivlist}}

\begin{document}

\title{\textbf{A complete ultrametric on von Neumann's incomplete tensor products}}
\author{\textbf{Andrew Lesniewski}\\
Department of Mathematics\\
Baruch College\\
One Bernard Baruch Way\\
New York, NY 10010\\
USA}
\maketitle

\begin{abstract}
We revisit von Neumann's theory of infinite tensor products of Hilbert spaces. On the set $\Gamma$ of equivalence classes of $C_0$-sequences, which labels the incomplete tensor products inside the complete tensor product, we introduce a natural pseudo-ultrametric $d$: the distance between two classes is the convergence exponent of the series $\sum_j|\langle\varphi_j,\psi_j\rangle-1|$ formed from any pair of representatives. We show that $d$ is well defined on equivalence classes, satisfies the strong triangle inequality, and is complete. Distinct classes may lie at distance zero, so $d$ separates points only after passing to the quotient $\widetilde\Gamma$ of $\Gamma$ by the relation $d=0$; the pair $(\widetilde\Gamma,d)$ is then a complete ultrametric space. As an application, we show that a product unitary $\bigotimes_j U$ whose factor $U$ satisfies $\inf_{\|x\|=1}|\langle x,Ux\rangle-1|>0$ displaces every class to the maximal distance $1$. Guided by the intended application -- a toy model of Everettian branching, in which the sectors of the infinite tensor product play the role of worlds -- we also develop a gauge-invariant variant $\tilde d$ of the metric, based on von Neumann's weak equivalence and matched to the quasi-equivalence of product states on the quasi-local algebra. The displacement of a class under a product unitary, measured by $\tilde d$, is class dependent and realizes every value in $[0,1]$. We interpret $\tilde d$ as a decoherence exponent: it measures the polynomial rate at which two branches of the universal state vector become operationally distinct as ever larger portions of the environment are monitored. Finally, we relate $\tilde d$ to the Araki--Woods classification of ITPFI factors: under the modular flow of the reference state, the displacement detects Connes' invariant $T(M)$ in the Powers case and grades the tracial boundary in the asymptotically tracial case.
\end{abstract}

\newpage

\section{\label{introSec}Introduction}

In his classic 1939 memoir \cite{vN39}, von Neumann constructed the complete infinite tensor product $\bigotimes_{\iota\in I}\cH_\iota$ of an arbitrary family of complex Hilbert spaces, and showed that it decomposes into an orthogonal direct sum of \emph{incomplete} tensor products $\bigotimes^\fc_{\iota}\cH_\iota$. The summands are labeled by the set $\Gamma$ of equivalence classes $\fc$ of so-called $C_0$-sequences, where two $C_0$-sequences $\varphi=(\varphi_\iota)$ and $\psi=(\psi_\iota)$ are equivalent if the series $\sum_\iota|\langle\varphi_\iota,\psi_\iota\rangle-1|$ converges. Except in trivial cases, the label set $\Gamma$ is very large (of the cardinality of the continuum or larger), and it is customarily treated as an unstructured index set.

The purpose of this note is to point out that $\Gamma$ carries a natural metric-like structure which quantifies the \emph{degree of inequivalence} of two classes. Namely, for two $C_0$-sequences $\varphi$ and $\psi$ indexed by $\bN$ we set
\begin{equation*}
d(\varphi,\psi)=\inf\Big\{p\geq 0:\:\sum_{j=1}^\infty\,\frac{1}{j^p}\,|\langle \varphi_j,\psi_j\rangle-1|<\infty\Big\},
\end{equation*}
i.e.\ $d(\varphi,\psi)$ is the convergence exponent of the series defining the equivalence relation. Equivalent sequences are at distance zero, while ``maximally inequivalent'' sequences (for example, sequences of pairwise orthogonal unit vectors) are at distance one. We prove that $d$ depends only on the equivalence classes of its arguments (Lemma \ref{classInv}), that it is a pseudo-ultrametric on $\Gamma$ (Lemma \ref{ultraLem}), and that it is complete (Lemma \ref{complLem}). The prefix ``pseudo'' cannot be removed: distinct classes may lie at distance zero (see the remark following Lemma \ref{basicLem}), and so $d$ becomes a genuine ultrametric only on the quotient $\widetilde\Gamma$ of $\Gamma$ by the relation $d=0$. Our main result, Theorem \ref{mainThm}, states that $(\widetilde\Gamma,d)$ is a complete ultrametric space.

In Section \ref{weakSec} we develop a gauge-invariant variant $\tilde d\leq d$ of the metric, in which the defining series $\sum_j|\langle\varphi_j,\psi_j\rangle-1|$ is replaced by $\sum_j\big|\,|\langle\varphi_j,\psi_j\rangle|-1\big|$, and the underlying equivalence relation by von Neumann's \emph{weak} equivalence. The variant is insensitive to componentwise phase changes $\varphi_j\mapsto e^{i\alpha_j}\varphi_j$, and the entire theory -- well-definedness on classes, the strong triangle inequality, completeness -- carries over to it, with markedly simpler proofs (Theorem \ref{tdThm}).

The remainder of the paper is given over to examples, drawn from quantum theory and from the theory of operator algebras, in which this metric structure does real work rather than sit as a formal accretion. They serve two purposes at once. Taken singly, each puts a concrete question to the ultrametric: how far a unitary evolution displaces a sector, how quickly two branches of the universal wavefunction lose operational contact, and which hyperfinite factor a given sequence of weights generates. Taken together, they are our evidence that the construction is not an accident of any one setting -- the very convergence exponent that reads as a rate of decoherence in the physical model reappears, under the modular flow, as an operator-algebraic invariant -- so that the degree of inequivalence, treated for more than eighty years as the yes-or-no alternative of von Neumann's dichotomy, is more faithfully seen as a single geometric quantity beneath all three.

First, we consider product unitaries. If $U_\iota$ is a unitary operator on $\cH_\iota$ for each $\iota$, then $\mathbf{U}(\otimes\varphi_\iota)=\otimes\, U_\iota\varphi_\iota$ defines a unitary operator on the complete tensor product \cite{vN39}, \cite{N70}, which maps each incomplete tensor product $\bigotimes^\fc\cH_\iota$ onto $\bigotimes^{\mathbf{U}\fc}\cH_\iota$. We show in Section \ref{unitSec} that if all factors are equal to a single unitary $U$ with $\inf_{\|x\|=1}|\langle x,Ux\rangle-1|>0$ (for $\dim\cH<\infty$ this is equivalent to $1\notin\sigma(U)$), then $d(\fc,\mathbf{U}\fc)=1$ for \emph{every} class $\fc$: such a product unitary displaces every sector of the complete tensor product to the maximal possible distance. If, moreover, no eigenvalue of $U$ is a root of unity, the entire orbit $\{\mathbf{U}^k\fc\}$ is a $1$-separated set (Corollary \ref{orbitCor}). The gauge-invariant displacement $\tilde d(\fc,\mathbf{U}\fc)$, by contrast, depends on the class, and Example \ref{qubitEx} shows that in the simplest qubit setting it realizes every value in $[0,1]$, interpolating between the fixed classes built from eigenvectors of $U$ (``pointer states'') and the maximally displaced classes built from equal superpositions.

The motivation for these results is a toy mathematical model of the many-worlds (Everett) interpretation of quantum mechanics \cite{E57}, described in Section \ref{everettSec}: the universe is an infinite tensor product of qubits, its worlds are the sectors $\bigotimes^\fc\cH_j$, time evolution proceeds in discrete unitary steps, and conditioned product unitaries split the universal state vector into branches inhabiting mutually disjoint worlds. The sector decomposition is a superselection structure relative to the quasi-local observables (Proposition \ref{sectorProp}), the physically meaningful notion of a world is a weak equivalence class, and the metric $\tilde d$ acquires the interpretation of a decoherence exponent: up to subpolynomial corrections, the overlap of two worlds, as seen by an observer monitoring the first $N$ degrees of freedom of the environment, decays like $\exp(-N^{\tilde d})$. Section \ref{everettSec} also locates the model relative to Hepp's quasi-local measurement model \cite{H72} and Bell's critique of it \cite{B75}, to the failure of strong continuity for product one-parameter groups \cite{R74}, which forces the discreteness of the time steps, and to recent work of Svozil \cite{S26} on sectorization as a mechanism for irreversibility.

The final section turns from dynamics to classification. The party structure of the Bell-pair example generates, for general weights, the ITPFI factors of Araki and Woods \cite{AW68}, and it is natural to ask whether the metric sees their classification. For an arbitrary party-local dynamics it does not: at a fixed Powers factor, the displacement realizes every value in $[0,1]$. For the dynamics canonically attached to the reference state -- its one-sided modular flow -- it does: the displacement exponent is the indicator function of the complement of Connes' invariant $T(M)$, and so determines the type in the Powers case, while at the tracial boundary it computes the convergence exponent of the Araki--Woods $\mathrm{II}_1$ series, grading the failure of the party factor to be of type $\mathrm{II}_1$. The degeneracy locus of the pseudometric reappears there as a subpolynomial collar around the $\mathrm{II}_1$ region -- the same marginal stratum that carries the branching transition of the Everett model.

One caveat is in order. The definition of $d$ uses the weights $j^{-p}$ and thus presupposes a fixed enumeration of the index set; different enumerations lead, in general, to different values of $d$ (see the remark at the end of Section \ref{ultraSec}). Accordingly, from Section \ref{ultraSec} on we assume that $I=\bN$ with its standard enumeration. The material in Section \ref{tensSec} is valid for index sets of arbitrary cardinality. In the physical reading of Section \ref{everettSec}, the enumeration dependence acquires content: the enumeration encodes the structure of the environment, and $d$, $\tilde d$ measure rates of decoherence relative to that structure.

The paper is organized as follows. Section \ref{tensSec} reviews von Neumann's theory and records the weighted transfer lemma used later. Section \ref{ultraSec} introduces the pseudo-ultrametric $d$, proves the completeness theorem, and characterizes the permutations of the index set that leave $d$ invariant. Section \ref{weakSec} develops the gauge-invariant variant $\tilde d$. Section \ref{unitSec} studies the displacement of classes under product unitaries, including the qubit and Bell-pair examples. Section \ref{everettSec} presents the branching model and its interpretation. Section \ref{sec:AW} relates the metric to the Araki--Woods classification, through the displacement of the reference class under its modular flow.

\section{\label{tensSec}Complete infinite tensor product of Hilbert spaces}

We begin by reviewing the basics of von Neumann's theory of infinite tensor products of Hilbert spaces \cite{vN39}. We let $I$ denote the set of indices. This set is assumed to be infinite of any cardinality; as pointed out by von Neumann, the assumption of denumerability of $I$ is not a significant simplification. The notation $J\Subset I$ is used to indicate that $J$ is a finite set of indices. Let $(\cH_\iota)$, $\iota\in I$, be a family of complex Hilbert spaces. Throughout, unordered sums and products over $\iota\in I$ are understood in the sense of Moore--Smith convergence of the net of finite partial sums and products \cite{K55}; for the precise notions of convergence and quasi-convergence of infinite products of complex numbers we refer to \cite{vN39}, Chapter 2.

A \textit{$C$-sequence} $\varphi$, $\varphi=(\varphi_\iota)$, where $\varphi_\iota\in\cH_\iota$, is a sequence such that the product $\prod\|\varphi_\iota\|$ exists. Its value is zero if and only if one of the elements $\varphi_\iota=0$, or the product diverges to zero.

A \textit{$C_0$-sequence} is a sequence $\varphi=(\varphi_\iota)$, where $\varphi_\iota\in\cH_\iota$, such that
\begin{equation}\label{c0Def}
\sum|\|\varphi_\iota\|-1|<\infty.
\end{equation}
Note that \eqref{c0Def} forces the family of norms $\|\varphi_\iota\|$ to be bounded, $\|\varphi_\iota\|\leq C$, uniformly in $\iota$. Alternatively, and equivalently, condition \eqref{c0Def} can be formulated as
\begin{equation}\label{c0Eq}
\sum|\|\varphi_\iota\|^2-1|<\infty.
\end{equation}
This is true, since $|\|\varphi_\iota\|-1|\leq |\|\varphi_\iota\|^2-1|\leq (1+C)\, |\|\varphi_\iota\|-1|$.

Every $C_0$-sequence is a $C$-sequence, and every $C$-sequence with $\prod\|\varphi_\iota\|>0$ is a $C_0$-sequence \cite{vN39}.

For two $C_0$-sequences $\varphi=(\varphi_\iota)$ and $\psi=(\psi_\iota)$ we define the relation
\begin{equation}\label{equivDef}
\varphi\sim\psi,\quad\text{ if and only if }\quad\sum|\langle \varphi_\iota,\psi_\iota\rangle-1|<\infty.
\end{equation}

\begin{lemma}\label{vNineqLem}
For any three vectors $x,y,z$ in a unitary space, with $y\neq 0$, the following inequality holds:
\begin{equation}\label{vNIneq}
|\langle x,z\rangle-1|\leq\Big|\frac{\langle x,y\rangle\langle y,z\rangle}{\|y\|^2}-1\Big|+\frac12\,\Big(\|x\|^2-\frac{|\langle x,y\rangle|^2}{\|y\|^2}+\|z\|^2-\frac{|\langle y,z\rangle|^2}{\|y\|^2}\Big).
\end{equation}
\end{lemma}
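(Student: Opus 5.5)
Normalize $y$ so that $\|y\|=1$; both sides of \eqref{vNIneq} depend on $y$ only through $y/\|y\|$, so nothing is lost. Then decompose orthogonally along $y$:
\[
x=\langle y,x\rangle y+x',\qquad z=\langle y,z\rangle y+z',\qquad x',z'\perp y .
\]
The inner product is linear in the second slot, as the notation $\langle\varphi_\iota,\psi_\iota\rangle$ suggests. This gives
\[
\langle x,z\rangle=\langle x,y\rangle\langle y,z\rangle+\langle x',z'\rangle .
\]
By Pythagoras, $\|x'\|^2=\|x\|^2-|\langle x,y\rangle|^2$ and $\|z'\|^2=\|z\|^2-|\langle y,z\rangle|^2$. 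These are exactly the two deficits appearing in the bracket on the right-hand side of \eqref{vNIneq}.

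The triangle inequality then yields
\[
|\langle x,z\rangle-1|\le|\langle x,y\rangle\langle y,z\rangle-1|+|\langle x',z'\rangle| .
\]
For the remainder, Cauchy--Schwarz followed by the arithmetic--geometric mean inequality gives
\[
|\langle x',z'\rangle|\le\|x'\|\,\|z'\|\le\tfrac12\big(\|x'\|^2+\|z'\|^2\big).
\]
Substituting the Pythagorean identities and undoing the normalization of $y$, which reinserts the factors $\|y\|^{-2}$, gives \eqref{vNIneq}.

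No step is a genuine obstacle here. The only point requiring care is the bookkeeping of conjugate-linearity. One must check that the projection coefficient of $x$ pairs as $\langle x,y\rangle\langle y,z\rangle$, and not with a conjugate. This holds since $\langle \langle y,x\rangle y,\langle y,z\rangle y\rangle=\overline{\langle y,x\rangle}\langle y,z\rangle=\langle x,y\rangle\langle y,z\rangle$. The cross terms vanish because $x',z'\perp y$.
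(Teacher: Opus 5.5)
Your proof is correct and is essentially the paper's argument: the paper writes the same orthogonal decomposition along $e_1=y/\|y\|$ in explicit Gram--Schmidt coordinates, so your remainder $\langle x',z'\rangle$ is its term $\overline{a}_{22}a_{32}$, and it closes with the same arithmetic--geometric mean bound. The only cosmetic difference is that you use Cauchy--Schwarz together with the exact Pythagorean identity for $\|z'\|^2$, whereas the paper drops the $|a_{33}|^2$ term to get $|a_{32}|^2\leq\|z\|^2-|\langle y,z\rangle|^2/\|y\|^2$; the two are equivalent.
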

\begin{proof}
We construct normalized, pairwise orthogonal vectors $e_1,e_2,e_3$ spanning the subspace generated by $x,y,z$ (if this subspace has dimension less than three, the corresponding coefficients below are zero), such that
\begin{equation*}
\begin{split}
y&=\|y\|e_1,\\
x&=a_{21}e_1+a_{22}e_2,\\
z&=a_{31}e_1+a_{32}e_2+a_{33}e_3.
\end{split}
\end{equation*}
Then
\begin{equation*}
\begin{split}
\|x\|^2&=|a_{21}|^2+|a_{22}|^2,\\
\|z\|^2&=|a_{31}|^2+|a_{32}|^2+|a_{33}|^2,
\end{split}
\end{equation*}
and
\begin{equation*}
\begin{split}
\langle x,y\rangle&=\overline{a}_{21}\,\|y\|,\\
\langle y,z\rangle&=a_{31}\,\|y\|,\\
\langle x,z\rangle&=\overline{a}_{21}a_{31}+\overline{a}_{22}a_{32}.
\end{split}
\end{equation*}
As a consequence,
\begin{equation*}
\langle x,z\rangle=\frac{\langle x,y\rangle\langle y,z\rangle}{\|y\|^2}+\overline{a}_{22}a_{32}.
\end{equation*}
Furthermore,
\begin{equation*}
|a_{22}|^2=\|x\|^2-\frac{|\langle x,y\rangle|^2}{\|y\|^2}\,,
\end{equation*}
and
\begin{equation*}
\begin{split}
|a_{32}|^2&=\|z\|^2-|a_{31}|^2-|a_{33}|^2\\
&\leq\|z\|^2-\frac{|\langle y,z\rangle|^2}{\|y\|^2}\,.
\end{split}
\end{equation*}
Inequality \eqref{vNIneq} follows as $|\overline{a}_{22}a_{32}|\leq\frac12(|a_{22}|^2+|a_{32}|^2)$.
\end{proof}

The following lemma is Lemma 3.3.3 in \cite{vN39}.
\begin{lemma}\label{equivLem}
The relation $\sim$ is an equivalence relation on the set of $C_0$-sequences.
\end{lemma}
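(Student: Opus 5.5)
Reflexivity and symmetry are immediate, so the work is in transitivity. For reflexivity, $|\langle\varphi_\iota,\varphi_\iota\rangle-1|=|\|\varphi_\iota\|^2-1|$, and this is summable by \eqref{c0Eq}. For symmetry, $|\langle\psi_\iota,\varphi_\iota\rangle-1|=|\overline{\langle\varphi_\iota,\psi_\iota\rangle-1}|$, so the defining series in \eqref{equivDef} is literally the same.

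For transitivity, I assume $\varphi\sim\psi$ and $\psi\sim\chi$ and apply Lemma \ref{vNineqLem} componentwise with $x=\varphi_\iota$, $y=\psi_\iota$, $z=\chi_\iota$. Since $\prod\|\psi_\iota\|$ converges, the factors $\|\psi_\iota\|$ tend to $1$. In fact $\|\psi_\iota\|\ge 1/2$ for all but finitely many $\iota$, and finitely many indices do not affect convergence, so we may discard them. The task is then to show that each of the three terms on the right of \eqref{vNIneq} is summable.

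\emph{First term.} Write $a_\iota=\langle\varphi_\iota,\psi_\iota\rangle$, $b_\iota=\langle\psi_\iota,\chi_\iota\rangle$ and $n_\iota=\|\psi_\iota\|^2$. By hypothesis $\sum|a_\iota-1|<\infty$, $\sum|b_\iota-1|<\infty$, and $\sum|n_\iota-1|<\infty$. The identity
\[
a_\iota b_\iota-n_\iota=(a_\iota-1)b_\iota+(b_\iota-1)+(1-n_\iota)
\]
holds, and $|b_\iota|$ is uniformly bounded because the norms of a $C_0$-sequence are bounded. Dividing by $n_\iota\ge 1/4$ shows that $\sum|a_\iota b_\iota/n_\iota-1|<\infty$.

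\emph{Second and third terms.} These are the "orthogonal defect" terms $\|\varphi_\iota\|^2-|a_\iota|^2/n_\iota$ and its analogue for $\chi$; this step is the main obstacle. Both terms are nonnegative by Cauchy--Schwarz, and the plan is to bound them by quantities already known to be summable. Write
\[
\|\varphi_\iota\|^2-\frac{|a_\iota|^2}{n_\iota}=(\|\varphi_\iota\|^2-1)+\Big(1-\frac{|a_\iota|^2}{n_\iota}\Big),
\]
and split the last bracket further as
\[
1-\frac{|a_\iota|^2}{n_\iota}=\frac{(n_\iota-1)+(1-|a_\iota|^2)}{n_\iota}.
\]
Here $|1-|a_\iota|^2|\le|1-|a_\iota||\,(1+|a_\iota|)\le|1-a_\iota|\,(1+C^2)$, again using the uniform norm bound $C$. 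Each piece is therefore summable: the pieces come from \eqref{c0Eq} for $\varphi$ and $\psi$, and from $\varphi\sim\psi$. The $\chi$ term is handled the same way using $\psi\sim\chi$ and \eqref{c0Eq} for $\chi$.

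Summing the three bounds over $\iota$ gives $\sum|\langle\varphi_\iota,\chi_\iota\rangle-1|<\infty$, that is, $\varphi\sim\chi$. All sums here are unordered sums of nonnegative terms, so Moore--Smith convergence reduces to boundedness of the finite partial sums, and the comparisons above apply directly for index sets $I$ of arbitrary cardinality.
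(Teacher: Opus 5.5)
Your proof is correct and matches the paper's argument essentially step for step: you discard the finitely many indices where $\|\psi_\iota\|$ is small, apply Lemma \ref{vNineqLem} with $(x,y,z)=(\varphi_\iota,\psi_\iota,\chi_\iota)$, and bound the three resulting terms using the same algebraic identities. One small fix: $\|\psi_\iota\|\ge 1/2$ off a finite set should be justified directly from \eqref{c0Def}, since convergence of $\prod\|\psi_\iota\|$ alone does not force the factors to tend to $1$ when the product is zero.
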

\begin{proof}
Clearly, $\varphi\sim\varphi$, as $\varphi$ is a $C_0$-sequence. It is also clear that $\varphi\sim\psi$ implies $\psi\sim\varphi$, as $|\langle \varphi_\iota,\psi_\iota\rangle-1|=|\langle \psi_\iota,\varphi_\iota\rangle-1|$. The transitivity property of $\sim$ is less obvious.

To establish it, we assume that $\varphi\sim\psi$ and $\psi\sim\chi$, and claim that $\varphi\sim\chi$. Since $\varphi$, $\psi$, and $\chi$ are $C_0$-sequences, their norms are bounded by a constant $C\geq 1$, and there is a finite set $F\Subset I$ such that $\|\psi_\iota\|^2\geq 1/2$ for $\iota\notin F$. As finitely many terms do not affect the convergence of a series, it suffices to estimate the terms with $\iota\notin F$. For such $\iota$, we abbreviate
\begin{equation*}
\begin{split}
A&=\langle\varphi_\iota,\psi_\iota\rangle,\\
B&=\langle\psi_\iota,\chi_\iota\rangle,\\
s&=\|\psi_\iota\|^2,
\end{split}
\end{equation*}
so that $|A|,|B|\leq C^2$ and $1/2\leq s\leq C^2$. From the algebraic identity
\begin{equation*}
\frac{AB}{s}-1=\frac{(A-1)B+(B-1)+(1-s)}{s}
\end{equation*}
we obtain
\begin{equation}\label{firstTermEst}
\Big|\frac{AB}{s}-1\Big|\leq 2\big(C^2|A-1|+|B-1|+|s-1|\big).
\end{equation}
Furthermore, writing $s-|A|^2=(s-1)+(1-|A|^2)$ and noting $|1-|A|^2|\leq(1+C^2)\,|A-1|$, we find that
\begin{equation}\label{secondTermEst}
\begin{split}
\|\varphi_\iota\|^2-\frac{|A|^2}{s}&=\big(\|\varphi_\iota\|^2-1\big)+\frac{(s-1)+(1-|A|^2)}{s}\\
&\leq\big|\|\varphi_\iota\|^2-1\big|+2|s-1|+2(1+C^2)|A-1|.
\end{split}
\end{equation}
Analogously,
\begin{equation}\label{thirdTermEst}
\|\chi_\iota\|^2-\frac{|B|^2}{s}\leq\big|\|\chi_\iota\|^2-1\big|+2|s-1|+2(1+C^2)|B-1|.
\end{equation}
Applying Lemma \ref{vNineqLem} with $x=\varphi_\iota$, $y=\psi_\iota$, $z=\chi_\iota$, and combining \eqref{firstTermEst} -- \eqref{thirdTermEst}, we conclude that there is a constant $C'$, independent of $\iota$, such that
\begin{equation*}
\begin{split}
|\langle\varphi_\iota,\chi_\iota\rangle-1|&\leq C'\Big(|\langle\varphi_\iota,\psi_\iota\rangle-1|+|\langle\psi_\iota,\chi_\iota\rangle-1|\\
&+\big|\|\psi_\iota\|^2-1\big|+\big|\|\varphi_\iota\|^2-1\big|+\big|\|\chi_\iota\|^2-1\big|\Big),
\end{split}
\end{equation*}
for all $\iota\notin F$. Summing over $\iota$ and invoking $\varphi\sim\psi$, $\psi\sim\chi$, and \eqref{c0Eq}, we obtain $\sum|\langle\varphi_\iota,\chi_\iota\rangle-1|<\infty$, i.e.\ $\varphi\sim\chi$.
\end{proof}

We denote the set of the equivalence classes $\fc=[\varphi]$ of $C_0$-sequences by $\Gamma$. Note that if $\varphi$ and $\psi$ are $C_0$-sequences with $\varphi_\iota=\psi_\iota$ except for finitely many indices $\iota$, then $\varphi\sim\psi$: for the remaining indices $\langle\varphi_\iota,\psi_\iota\rangle-1=\|\varphi_\iota\|^2-1$, and the sum of these terms converges by \eqref{c0Eq}.

The following lemma is Lemma 3.3.7 in \cite{vN39}.
\begin{lemma}\label{unitRep}
Each equivalence class $\fc$ has a representative $\varphi\in\fc$ such that $\|\varphi_\iota\|=1$, for all $\iota$.
\end{lemma}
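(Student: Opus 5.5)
The idea is to normalize a given representative componentwise and repair the components where this is impossible. Let $\psi\in\fc$ be any $C_0$-sequence. By \eqref{c0Def} only finitely many indices can have $\|\psi_\iota\|\le 1/2$. In particular, the set $F=\{\iota:\psi_\iota=0\}$ is finite. We define
\begin{equation*}
\varphi_\iota=\psi_\iota/\|\psi_\iota\| \quad (\iota\notin F), \qquad \varphi_\iota=e_\iota \quad (\iota\in F),
\end{equation*}
where $e_\iota\in\cH_\iota$ is an arbitrary unit vector. This tacitly assumes $\cH_\iota\neq\{0\}$. If some $\cH_\iota=\{0\}$, then there is no $C_0$-sequence at all, since \eqref{c0Def} would fail at that index, so this case does not arise. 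Every component of $\varphi$ then has norm one, so $\varphi$ is trivially a $C_0$-sequence.

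It remains to show $\varphi\sim\psi$, so that $\varphi\in\fc$. Because changing finitely many terms does not affect convergence of the series, it suffices to handle $\iota\notin F$. For such $\iota$,
\begin{equation*}
\langle\varphi_\iota,\psi_\iota\rangle=\frac{\langle\psi_\iota,\psi_\iota\rangle}{\|\psi_\iota\|}=\|\psi_\iota\|,
\end{equation*}
and therefore $|\langle\varphi_\iota,\psi_\iota\rangle-1|=\big|\|\psi_\iota\|-1\big|$. The sum of these terms over $\iota\notin F$ converges by \eqref{c0Def}, which gives $\varphi\sim\psi$.

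There is no real obstacle in this argument. The only points needing care are that the zero components form a finite set, which lets us replace them freely, and that the inner product computation is exact rather than merely an estimate. Unordered sums pose no difficulty, because all terms are nonnegative and the convergent sum over $I\setminus F$ differs from the full sum only by finitely many terms.
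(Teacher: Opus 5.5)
Your proof is correct and is essentially the paper's argument: normalize the components, replace finitely many bad ones by arbitrary unit vectors, and use the exact identity $|\langle\varphi_\iota,\psi_\iota\rangle-1|=\big|\|\psi_\iota\|-1\big|$ together with \eqref{c0Def}. The only difference is bookkeeping. The paper first patches every component with $\|\psi_\iota\|<1/2$, invoking the remark that finite modifications stay in $\fc$, and then normalizes. You patch only the zero components and absorb them as finitely many terms of the series, which works equally well.

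One side remark is wrong and should be corrected. A single $\cH_\iota=\{0\}$ does not make \eqref{c0Def} fail: that index contributes only the term $1$ to a sum that can still converge, so $C_0$-sequences do exist. In that case the lemma is actually false, since $\cH_\iota$ contains no unit vector. So the case cannot be argued away; it has to be excluded by the standing assumption $\cH_\iota\neq\{0\}$ for all $\iota$, which the paper makes tacitly. Your claim holds only when infinitely many $\cH_\iota$ vanish, and then $\Gamma$ is empty.
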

\begin{proof}
Let $\psi\in\fc$. Since $\psi$ is a $C_0$-sequence, $|\|\psi_\iota\|-1|\leq 1/2$, and hence $\|\psi_\iota\|\geq 1/2$, except for possibly finitely many $\iota\in I$. Modify $\psi$ by replacing $\psi_\iota$ with an arbitrary unit vector for the finitely many indices $\iota$ with $\|\psi_\iota\|<1/2$; by the observation preceding the lemma, the modified sequence belongs to $\fc$, and so we may assume that $\|\psi_\iota\|\geq 1/2$ for all $\iota$. Set
\begin{equation*}
\varphi_\iota=\frac{1}{\|\psi_\iota\|}\,\psi_\iota,
\end{equation*}
for all $\iota$. Then $\varphi$ is trivially a $C_0$-sequence, and
\begin{equation*}
\big|\langle\varphi_\iota,\psi_\iota\rangle-1\big|=\big|\|\psi_\iota\|-1\big|,
\end{equation*}
which is summable over $\iota$ by \eqref{c0Def}. Hence $\varphi\sim\psi$, i.e.\ $\varphi\in\fc$.
\end{proof}

Let $\bigotimes\cH_\iota$ denote the complete infinite tensor product of the spaces $\cH_\iota$, $\iota\in I$, and let $\bigotimes^\fc\cH_\iota$ denote the incomplete tensor product corresponding to the class $\fc=[\varphi]\in\Gamma$. The vector $\otimes\varphi_\iota$ which corresponds to the $C$-sequence $(\varphi_\iota)$ is called the tensor product of the vectors $\varphi_\iota$. If $\prod\|\varphi_\iota\|=0$, we define $\otimes\varphi_\iota=0$, the zero vector.

For any elements $x$ and $y$ of a unitary space we easily verify the following identities:
\begin{equation}
\begin{split}\label{xyId}
1-\re\,\langle x,y\rangle&=\frac12\big(\|x-y\|^2-(\|x\|^2-1)-(\|y\|^2-1)\big),\\
\im(\langle x,y\rangle-1)&=\im\,\langle x,y\rangle.
\end{split}
\end{equation}
We will apply these identities to establish several useful properties of the relation $\sim$.

The first application is the following lemma (Lemma 3.3.4 in \cite{vN39}).
\begin{lemma}\label{vN334}
If $\varphi'$ and $\varphi$ are $C_0$-sequences, then $\varphi'\sim\varphi$ if and only if
\begin{equation}\label{normSqCond}
\sum\,\|\varphi'_\iota-\varphi_\iota\|^2<\infty,
\end{equation}
and
\begin{equation}\label{imCond}
\sum\,|\im\,\langle\varphi'_\iota,\varphi_\iota\rangle|<\infty.
\end{equation}
\end{lemma}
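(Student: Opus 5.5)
The plan is to split $\langle\varphi'_\iota,\varphi_\iota\rangle-1$ into real and imaginary parts and use the identities \eqref{xyId}. Since
\begin{equation*}
\max\big(|\re(\langle\varphi'_\iota,\varphi_\iota\rangle-1)|,\,|\im\langle\varphi'_\iota,\varphi_\iota\rangle|\big)\leq|\langle\varphi'_\iota,\varphi_\iota\rangle-1|\leq|\re(\langle\varphi'_\iota,\varphi_\iota\rangle-1)|+|\im\langle\varphi'_\iota,\varphi_\iota\rangle|,
\end{equation*}
the relation $\varphi'\sim\varphi$ holds if and only if both $\sum|1-\re\langle\varphi'_\iota,\varphi_\iota\rangle|<\infty$ and \eqref{imCond} hold. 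The latter is exactly the second condition of the lemma. It therefore remains to show that, for $C_0$-sequences, summability of $|1-\re\langle\varphi'_\iota,\varphi_\iota\rangle|$ is equivalent to \eqref{normSqCond}.

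For this, apply the first identity in \eqref{xyId} with $x=\varphi'_\iota$ and $y=\varphi_\iota$:
\begin{equation*}
\|\varphi'_\iota-\varphi_\iota\|^2=2\big(1-\re\langle\varphi'_\iota,\varphi_\iota\rangle\big)+(\|\varphi'_\iota\|^2-1)+(\|\varphi_\iota\|^2-1).
\end{equation*}
By \eqref{c0Eq}, the two norm terms are absolutely summable over $\iota$ because both sequences are $C_0$.
\begin{itemize}
\item If $\sum|1-\re\langle\varphi'_\iota,\varphi_\iota\rangle|<\infty$, the right-hand side is absolutely summable, so \eqref{normSqCond} follows.
\item Conversely, solve the identity for $1-\re\langle\varphi'_\iota,\varphi_\iota\rangle$. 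It is then expressed as half of $\|\varphi'_\iota-\varphi_\iota\|^2$ minus summable norm terms. Since $\|\varphi'_\iota-\varphi_\iota\|^2\geq0$, its absolute value is at most $\tfrac12\|\varphi'_\iota-\varphi_\iota\|^2+\tfrac12|\|\varphi'_\iota\|^2-1|+\tfrac12|\|\varphi_\iota\|^2-1|$. This is summable under \eqref{normSqCond}.
\end{itemize}
Combining the two directions with the real/imaginary splitting gives the lemma.

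There is no real obstacle. The only point needing care is that $1-\re\langle\varphi'_\iota,\varphi_\iota\rangle$ can be negative, so one must bound its absolute value rather than the quantity itself. This is handled by the triangle inequality as above. Unordered sums cause no trouble either, since all the series involved have nonnegative terms, and termwise comparison of nonnegative families preserves Moore--Smith summability.
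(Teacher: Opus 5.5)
Your proposal is correct and follows the paper's proof essentially verbatim. You split $\langle\varphi'_\iota,\varphi_\iota\rangle-1$ into real and imaginary parts, and use the first identity in \eqref{xyId} together with \eqref{c0Eq} to show that summability of the real parts is equivalent to \eqref{normSqCond}. The imaginary parts then give exactly \eqref{imCond}.
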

\begin{proof}
Write $z_\iota=\langle\varphi'_\iota,\varphi_\iota\rangle-1$. Since $|z_\iota|\leq|\re\, z_\iota|+|\im\, z_\iota|\leq 2\,|z_\iota|$, the series $\sum|z_\iota|$ converges if and only if both $\sum|\re\, z_\iota|$ and $\sum|\im\, z_\iota|$ converge. By the first identity in \eqref{xyId},
\begin{equation*}
\Big|\,|\re\, z_\iota|-\frac12\,\|\varphi'_\iota-\varphi_\iota\|^2\Big|\leq\frac12\Big(\big|\|\varphi'_\iota\|^2-1\big|+\big|\|\varphi_\iota\|^2-1\big|\Big),
\end{equation*}
and the right hand side is summable over $\iota$ by \eqref{c0Eq}. Hence $\sum|\re\, z_\iota|<\infty$ if and only if \eqref{normSqCond} holds. By the second identity in \eqref{xyId}, $\sum|\im\, z_\iota|<\infty$ is condition \eqref{imCond}. The claim follows.
\end{proof}

From this point on we assume that the index set is $I=\bN$, with its standard enumeration. In preparation for the next section, we record a weighted version of Lemma \ref{vN334}. For $p\geq 0$ we use the weights $j^{-p}\leq 1$, $j\in\bN$.

\begin{lemma}\label{transferLem}
Let $\varphi$ and $\psi$ be $C_0$-sequences, and let $p\geq 0$. Then
\begin{equation*}
\sum_{j=1}^\infty\,\frac{1}{j^p}\,|\langle\varphi_j,\psi_j\rangle-1|<\infty
\end{equation*}
if and only if both
\begin{equation*}
\sum_{j=1}^\infty\,\frac{1}{j^p}\,\|\varphi_j-\psi_j\|^2<\infty
\qquad\text{and}\qquad
\sum_{j=1}^\infty\,\frac{1}{j^p}\,|\im\,\langle\varphi_j,\psi_j\rangle|<\infty.
\end{equation*}
\end{lemma}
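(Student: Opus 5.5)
The argument follows the proof of Lemma \ref{vN334}, now carrying the weights $j^{-p}$ through every step. The one point to check is that the error terms stay summable once the weights are inserted. Since $0<j^{-p}\leq 1$, any series already known to converge absolutely still converges after multiplication by $j^{-p}$. This fact is used in the error-term step.

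Set $z_j=\langle\varphi_j,\psi_j\rangle-1$. The elementary inequality $|z_j|\leq|\re\,z_j|+|\im\,z_j|\leq 2|z_j|$ holds termwise, and multiplying it by the positive weight $j^{-p}$ preserves it. Hence $\sum_j j^{-p}|z_j|<\infty$ if and only if both $\sum_j j^{-p}|\re\,z_j|<\infty$ and $\sum_j j^{-p}|\im\,z_j|<\infty$. By the second identity in \eqref{xyId}, $\im\,z_j=\im\,\langle\varphi_j,\psi_j\rangle$, so the imaginary condition is exactly the second condition of the lemma.

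For the real part, the first identity in \eqref{xyId} gives, exactly as in the proof of Lemma \ref{vN334},
\begin{equation*}
\Big|\,|\re\,z_j|-\tfrac12\|\varphi_j-\psi_j\|^2\Big|\leq\tfrac12\Big(\big|\|\varphi_j\|^2-1\big|+\big|\|\psi_j\|^2-1\big|\Big).
\end{equation*}
Multiplying by $j^{-p}\leq1$, the weighted right-hand side is dominated by the unweighted one. That series is summable by \eqref{c0Eq}, since $\varphi$ and $\psi$ are $C_0$-sequences. Therefore $\sum_j j^{-p}|\re\,z_j|$ and $\tfrac12\sum_j j^{-p}\|\varphi_j-\psi_j\|^2$ differ termwise by the terms of an absolutely convergent series of nonnegative numbers, so one converges if and only if the other does.

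Combining this equivalence for the real part with the identity for the imaginary part proves the lemma. I expect no genuine obstacle. The only step needing care is the error-term comparison, which relies on $p\geq 0$: if $p<0$, the weights $j^{-p}$ could grow and destroy the summability of the norm-defect terms supplied by \eqref{c0Eq}.
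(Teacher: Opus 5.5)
Your proof is correct and is essentially the paper's argument. The paper also reduces to the proof of Lemma \ref{vN334}, noting only that $j^{-p}\leq 1$ lets the weighted norm-defect terms be dominated by the unweighted series from \eqref{c0Eq}, so that the weighted sums of $|\re(\langle\varphi_j,\psi_j\rangle-1)|$ and $\tfrac12\|\varphi_j-\psi_j\|^2$ converge or diverge together.
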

\begin{proof}
The proof is the same as that of Lemma \ref{vN334}: it suffices to note that, since $j^{-p}\leq 1$,
\begin{equation*}
\begin{split}
\sum_{j=1}^\infty\,\frac{1}{j^p}\,\Big(\big|\|\varphi_j\|^2-1\big|+\big|\|\psi_j\|^2-1\big|\Big)
&\leq\sum_{j=1}^\infty\,\Big(\big|\|\varphi_j\|^2-1\big|+\big|\|\psi_j\|^2-1\big|\Big)\\
&<\infty,
\end{split}
\end{equation*}
by \eqref{c0Eq}, so that the weighted sums of $|\re(\langle\varphi_j,\psi_j\rangle-1)|$ and of $\frac12\|\varphi_j-\psi_j\|^2$ converge or diverge together.
\end{proof}

\section{\label{ultraSec}Ultrametric topology on equivalence classes}

For two $C_0$-sequences $\varphi=(\varphi_j)$ and $\psi=(\psi_j)$ we set
\begin{equation}\label{dDef}
d(\varphi,\psi)=\inf\Big\{p\geq 0:\:\sum _{j=1}^\infty\,\frac{1}{j^p}\,|\langle \varphi_j,\psi_j\rangle-1|<\infty\Big\}.
\end{equation}
In other words, $d(\varphi,\psi)$ is the convergence exponent of the series $\sum_j|\langle\varphi_j,\psi_j\rangle-1|$.

\begin{remark}
The set on the right hand side of \eqref{dDef} is an ``upper interval'': if the series converges for some $p_0$, then it converges for all $p\geq p_0$, as $j^{-p}\leq j^{-p_0}$. Consequently,
\begin{equation}\label{upSet}
p>d(\varphi,\psi)\quad\Longrightarrow\quad\sum _{j=1}^\infty\,\frac{1}{j^p}\,|\langle \varphi_j,\psi_j\rangle-1|<\infty,
\end{equation}
a fact that we will use repeatedly.
\end{remark}

\begin{remark}
The quantity $d(\varphi,\psi)$ admits an equivalent description in terms of partial sums. Let $(a_j)$ be a bounded sequence of nonnegative numbers, and let $\Sigma_N=\sum_{j\leq N}a_j$. Then
\begin{equation}\label{partialSums}
\inf\Big\{p\geq0:\:\sum_{j=1}^\infty\,\frac{1}{j^p}\,a_j<\infty\Big\}=\limsup_{N\to\infty}\,\frac{\log^+\Sigma_N}{\log N}\,,
\end{equation}
with the convention that the right hand side is $0$ when $\Sigma_N$ is bounded. Indeed, if the series converges at $p$, then
\begin{equation*}
\begin{split}
\Sigma_N&\leq N^p\sum_j \frac{1}{j^p}\,a_j \\
&=O(N^p).
\end{split}
\end{equation*}
Conversely, if $\Sigma_N=O(N^q)$ with $q<p$, then summation by parts, together with $j^{-p}-(j+1)^{-p}\leq p\,j^{-p-1}$, shows that the series converges at $p$. Applied to $a_j=|\langle\varphi_j,\psi_j\rangle-1|$, formula \eqref{partialSums} exhibits $d(\varphi,\psi)$ as the polynomial growth rate of the partial sums of the series in \eqref{equivDef} -- the rate of its divergence. This description will be used in the physical interpretation of Section \ref{everettSec}.
\end{remark}

\begin{lemma}\label{basicLem}
The function $d(\varphi,\psi)$ has the following properties:
\begin{itemize}
\item[(i)]{$0\leq d(\varphi,\psi)\leq 1$;}
\item[(ii)]{if $\varphi\sim\psi$, then $d(\varphi,\psi)=0$;}
\item[(iii)]{$d(\varphi,\psi)=d(\psi,\varphi)$.}
\end{itemize}
\end{lemma}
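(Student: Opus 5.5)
The plan is to check the three items directly from definition \eqref{dDef}, relying only on the boundedness of the terms and on the upper-interval property \eqref{upSet}. Set $a_j=|\langle\varphi_j,\psi_j\rangle-1|$ throughout.

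For (i), nonnegativity holds because the infimum in \eqref{dDef} is taken over a subset of $[0,\infty)$. The upper bound needs a uniform bound on the $a_j$. Since $\varphi$ and $\psi$ are $C_0$-sequences, \eqref{c0Def} gives $\|\varphi_j\|,\|\psi_j\|\leq C$ for all $j$. Cauchy--Schwarz then yields $a_j\leq C^2+1$ for every $j$. For any $p>1$ this gives
\begin{equation*}
\sum_{j}j^{-p}a_j\leq (C^2+1)\sum_j j^{-p}<\infty .
\end{equation*}
So every $p>1$ lies in the set in \eqref{dDef}, and taking the infimum gives $d(\varphi,\psi)\leq 1$. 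This boundedness step is the only point that uses anything beyond the definition, and it is mild. Note that the series need not converge at $p=1$ itself, and the argument does not require it.

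For (ii), if $\varphi\sim\psi$ then $\sum_j a_j<\infty$ by \eqref{equivDef}. This is exactly the series in \eqref{dDef} at $p=0$, since $j^{0}=1$. Hence $0$ belongs to the set, and $d(\varphi,\psi)=0$.

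For (iii), the identity $\langle\psi_j,\varphi_j\rangle=\overline{\langle\varphi_j,\psi_j\rangle}$ gives $|\langle\psi_j,\varphi_j\rangle-1|=|\overline{\langle\varphi_j,\psi_j\rangle-1}|=a_j$. Thus the weighted series defining $d(\varphi,\psi)$ and $d(\psi,\varphi)$ coincide term by term for every $p$. The two sets in \eqref{dDef} are therefore identical, and so are their infima.
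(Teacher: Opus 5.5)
Your proof is correct and follows the paper's own argument essentially line for line. You bound the terms uniformly using the boundedness of $C_0$-sequences and compare with $\sum_j j^{-p}$ for $p>1$. You observe that $\varphi\sim\psi$ means convergence already at $p=0$. You get symmetry from $|\langle\varphi_j,\psi_j\rangle-1|=|\langle\psi_j,\varphi_j\rangle-1|$.
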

\begin{proof}
Any $C_0$-sequence is bounded, and so
\begin{equation*}
\begin{split}
\sum _{j=1}^\infty\,\frac{1}{j^p}\,|\langle \varphi_j,\psi_j\rangle-1|&\leq \sum _{j=1}^\infty\,\frac{1}{j^p}\,(\|\varphi_j\|\|\psi_j\|+1)\\
&\leq const\, \sum _{j=1}^\infty\,\frac{1}{j^p}\\
&<\infty,
\end{split}
\end{equation*}
for all $p>1$. Hence $d(\varphi,\psi)\leq 1$, which proves (i). If $\varphi\sim\psi$, then the series in \eqref{dDef} converges already for $p=0$, and thus for all $p\geq0$; hence $d(\varphi,\psi)=0$, which is (ii). Property (iii) is clear, since $|\langle\varphi_j,\psi_j\rangle-1|=|\langle\psi_j,\varphi_j\rangle-1|$.
\end{proof}

\begin{remark}
The converse of property (ii) is false: $d(\varphi,\psi)=0$ does \emph{not} imply $\varphi\sim\psi$, because the infimum in \eqref{dDef} need not be attained. For an example, take $\cH_j=\bC^2$ with orthonormal basis $e_1,e_2$, and set
\begin{equation*}
\begin{split}
\varphi_j&=e_1,\\\psi_j&=\cos\theta_j\, e_1+\sin\theta_j\, e_2,
\end{split}
\end{equation*}
where $1-\cos\theta_j=\frac1j$. Both are $C_0$-sequences of unit vectors, and $|\langle\varphi_j,\psi_j\rangle-1|=1/j$. Then $\sum_j 1/j=\infty$, so that $\varphi\not\sim\psi$ and $[\varphi]\neq[\psi]$, while $\sum_j j^{-p}\cdot j^{-1}<\infty$ for every $p>0$, so that $d(\varphi,\psi)=0$. Consequently, $d$ will define only a \emph{pseudo}-ultrametric on $\Gamma$ (Lemma \ref{ultraLem}): distinct classes may lie at distance zero.
\end{remark}

\begin{lemma}\label{classInv}
If $\varphi\sim\varphi'$, then $d(\varphi,\psi)=d(\varphi',\psi)$. Consequently, \eqref{dDef} defines a function $d(\fc,\fd)$ on $\Gamma\times\Gamma$.
\end{lemma}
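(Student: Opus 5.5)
By symmetry of the hypothesis (Lemma \ref{equivLem}), it suffices to show $d(\varphi',\psi)\leq d(\varphi,\psi)$; exchanging the roles of $\varphi$ and $\varphi'$ then gives equality. Once this is established, Lemma \ref{basicLem}(iii) shows that $d$ is also invariant under changing the second argument within its class. Hence $d(\fc,\fd)$ is well defined by choosing arbitrary representatives.

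Fix $p>d(\varphi,\psi)$. By \eqref{upSet},
\begin{equation*}
\sum_j j^{-p}\,|\langle\varphi_j,\psi_j\rangle-1|<\infty .
\end{equation*}
The goal is to show that the same weighted series with $\varphi'$ in place of $\varphi$ converges. Letting $p\downarrow d(\varphi,\psi)$ then yields $d(\varphi',\psi)\leq d(\varphi,\psi)$.

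The plan is to repeat, term by term, the estimate from the proof of transitivity in Lemma \ref{equivLem}, with $x=\varphi'_j$, $y=\varphi_j$, $z=\psi_j$. Take $C\geq1$ to be a common bound for the norms. Outside a finite set $F$ of indices we have $\|\varphi_j\|^2\geq 1/2$. For $j\notin F$, Lemma \ref{vNineqLem} together with the estimates \eqref{firstTermEst}--\eqref{thirdTermEst} yields a constant $C'$, independent of $j$, such that
\begin{equation*}
|\langle\varphi'_j,\psi_j\rangle-1|\leq C'\Big(|\langle\varphi'_j,\varphi_j\rangle-1|+|\langle\varphi_j,\psi_j\rangle-1|+\big|\|\varphi_j\|^2-1\big|+\big|\|\varphi'_j\|^2-1\big|+\big|\|\psi_j\|^2-1\big|\Big).
\end{equation*}
That estimate never used convergence of any series; it is a pointwise inequality valid for every $j\notin F$.

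Now multiply by $j^{-p}$ and sum over $j\notin F$. The second term on the right gives a convergent series by the choice of $p$. The remaining four terms are summable even without weights: the first because $\varphi'\sim\varphi$, and the three norm terms by \eqref{c0Eq}. Since $j^{-p}\leq 1$, their weighted sums also converge. The finitely many indices in $F$ are irrelevant, so the weighted series for $(\varphi',\psi)$ converges.

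The only point needing care is the main step: checking that the pointwise bound from Lemma \ref{equivLem} genuinely has only linear dependence on each of the five quantities, with a uniform constant. That is what allows the weight $j^{-p}$ to pass through harmlessly. Since it was derived pointwise with constants depending only on $C$, I expect no real obstacle there.
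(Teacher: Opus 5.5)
Your argument is correct, but it takes a different route from the paper's.

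\textbf{The paper's route.} The paper does not reuse the transitivity estimate. It decomposes
$\langle\varphi'_j,\psi_j\rangle-1=(\langle\varphi_j,\psi_j\rangle-1)+\langle\varphi'_j-\varphi_j,\psi_j-\varphi_j\rangle+\langle\varphi'_j-\varphi_j,\varphi_j\rangle$.
It controls the cross term by Cauchy--Schwarz. This combines the unweighted bound $\sum_j\|\varphi'_j-\varphi_j\|^2<\infty$ from Lemma \ref{vN334} with the weighted bound $\sum_j j^{-p}\|\psi_j-\varphi_j\|^2<\infty$ from Lemma \ref{transferLem}, using $j^{-2p}\le j^{-p}$. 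The last term is handled as $(\langle\varphi'_j,\varphi_j\rangle-1)-(\|\varphi_j\|^2-1)$.

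\textbf{Your route.} You recycle the pointwise inequality behind Lemma \ref{equivLem}. As you check, it is linear in the five defects, with a constant depending only on the common norm bound. Also, $\|\varphi_j\|^2\ge 1/2$ off a finite set guarantees $y\neq 0$ in Lemma \ref{vNineqLem}. So the weight $j^{-p}$ lands on the one term that needs it, and the other four are summable unweighted. This is more economical: it needs neither the $\ell^2$ characterization nor the transfer lemma.

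\textbf{What each route buys.} Your bound is linear in both $|\langle\varphi'_j,\varphi_j\rangle-1|$ and $|\langle\varphi_j,\psi_j\rangle-1|$. Weighting both by $j^{-p}$ therefore gives the strong triangle inequality of Lemma \ref{ultraLem}(iii) directly. It bypasses the paper's split into squared norms and imaginary parts. Class invariance then follows as the special case $d(\varphi',\varphi)=0$. The paper's decomposition is the perturbative one: it uses only that $\varphi'-\varphi$ is square-summable, and it sets up the weighted $\ell^2$/imaginary-part bookkeeping that the paper reuses in proving Lemma \ref{ultraLem}.
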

\begin{proof}
Suppose that $\varphi'\sim\varphi$, and let $p>d(\varphi,\psi)$. We claim that
\begin{equation}\label{claimSeries}
\sum _{j=1}^\infty\,\frac{1}{j^p}\,|\langle \varphi'_j,\psi_j\rangle-1|<\infty,
\end{equation}
which shows that $d(\varphi',\psi)\leq p$, and hence, letting $p\downarrow d(\varphi,\psi)$, that $d(\varphi',\psi)\leq d(\varphi,\psi)$. Since the roles of $\varphi$ and $\varphi'$ are symmetric, this proves $d(\varphi',\psi)=d(\varphi,\psi)$. By Lemma \ref{basicLem} (iii), the same argument applies to the second argument of $d$, and the lemma follows.

To prove \eqref{claimSeries}, we decompose
\begin{equation*}
\langle \varphi'_j,\psi_j\rangle-1=\big(\langle \varphi_j,\psi_j\rangle-1\big)
+\langle\varphi'_j-\varphi_j,\,\psi_j-\varphi_j\rangle
+\langle\varphi'_j-\varphi_j,\,\varphi_j\rangle,
\end{equation*}
and estimate the weighted sums of the three terms on the right hand side separately.

First,
\begin{equation*}
\sum_j \frac{1}{j^p}\,|\langle \varphi_j,\psi_j\rangle-1|<\infty
\end{equation*}
by \eqref{upSet}, as $p>d(\varphi,\psi)$.

Second, by the Schwarz inequality (applied twice, once in $\cH_j$ and once in the sequence space),
\begin{equation*}
\sum_{j=1}^\infty\,\frac{1}{j^p}\,\big|\langle\varphi'_j-\varphi_j,\,\psi_j-\varphi_j\rangle\big|
\leq\Big\{\sum_{j=1}^\infty\,\|\varphi'_j-\varphi_j\|^2\Big\}^{1/2}
\Big\{\sum_{j=1}^\infty\,\frac{1}{j^{2p}}\,\|\psi_j-\varphi_j\|^2\Big\}^{1/2}.
\end{equation*}
The first factor is finite by Lemma \ref{vN334}, since $\varphi'\sim\varphi$. For the second factor, note that $j^{-2p}\leq j^{-p}$, and that
\begin{equation*}
\sum_j \frac{1}{j^p}\,\|\psi_j-\varphi_j\|^2<\infty
\end{equation*}
by \eqref{upSet} and Lemma \ref{transferLem}, as $p>d(\varphi,\psi)$.

Third,
\begin{equation*}
\langle\varphi'_j-\varphi_j,\,\varphi_j\rangle
=\big(\langle\varphi'_j,\varphi_j\rangle-1\big)-\big(\|\varphi_j\|^2-1\big),
\end{equation*}
and therefore, using $j^{-p}\leq1$,
\begin{equation*}
\begin{split}
\sum_{j=1}^\infty\,\frac{1}{j^p}\,\big|\langle\varphi'_j-\varphi_j,\,\varphi_j\rangle\big|
&\leq\sum_{j=1}^\infty\,\big|\langle\varphi'_j,\varphi_j\rangle-1\big|
+\sum_{j=1}^\infty\,\big|\|\varphi_j\|^2-1\big|\\
&<\infty,
\end{split}
\end{equation*}
by $\varphi'\sim\varphi$ and \eqref{c0Eq}. This proves \eqref{claimSeries}.
\end{proof}

\begin{lemma}\label{ultraLem}
The function $d$ defines a pseudo-ultrametric on $\Gamma$, i.e.\ it has the following properties:
\begin{itemize}
\item[(i)]{$d(\fc,\fd)\geq 0$, and $d(\fc,\fc)=0$;}
\item[(ii)]{for all $\fc,\fd\in\Gamma$, $d(\fc,\fd)=d(\fd,\fc)$;}
\item[(iii)]{for all $\fc,\fd,\fe\in\Gamma$, $d(\fc,\fe)\leq \max\big(d(\fc,\fd),d(\fd,\fe)\big)$.}
\end{itemize}
\end{lemma}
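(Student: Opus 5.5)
Properties (i) and (ii) follow at once from Lemma \ref{basicLem} together with Lemma \ref{classInv}, which makes $d$ well defined on $\Gamma$. Indeed, $d\geq0$ by definition, and $d(\fc,\fc)=0$ by Lemma \ref{basicLem}(ii) applied with $\varphi\sim\varphi$. Symmetry is Lemma \ref{basicLem}(iii). The substance is the strong triangle inequality (iii). I would prove it at the level of representatives, mimicking the proof of Lemma \ref{classInv}, except that now neither pair is assumed equivalent.

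Fix representatives $\varphi\in\fc$, $\psi\in\fd$, $\chi\in\fe$, and let $p>\max(d(\varphi,\psi),d(\psi,\chi))$. It suffices to show
\begin{equation*}
\sum_j j^{-p}\,|\langle\varphi_j,\chi_j\rangle-1|<\infty ,
\end{equation*}
since this gives $d(\varphi,\chi)\leq p$, and letting $p$ decrease gives (iii). By \eqref{upSet}, both weighted series $\sum_j j^{-p}|\langle\varphi_j,\psi_j\rangle-1|$ and $\sum_j j^{-p}|\langle\psi_j,\chi_j\rangle-1|$ converge. I would then use the decomposition
\begin{equation*}
\langle\varphi_j,\chi_j\rangle-1=(\langle\varphi_j,\psi_j\rangle-1)+(\langle\psi_j,\chi_j\rangle-1)-(\|\psi_j\|^2-1)+\langle\varphi_j-\psi_j,\chi_j-\psi_j\rangle,
\end{equation*}
which one checks by expanding the last inner product.

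The first two terms have convergent weighted sums by the choice of $p$. The third term is summable even without weights, by \eqref{c0Eq} and $j^{-p}\leq1$. For the cross term I would use $|\langle a,b\rangle|\leq\frac12(\|a\|^2+\|b\|^2)$ termwise, which bounds it by
\begin{equation*}
\tfrac12\sum_j j^{-p}\big(\|\varphi_j-\psi_j\|^2+\|\chi_j-\psi_j\|^2\big).
\end{equation*}
Both sums are finite by Lemma \ref{transferLem}, applied to the pairs $(\varphi,\psi)$ and $(\psi,\chi)$ at the exponent $p$. Alternatively one could apply the Schwarz inequality with weights $j^{-p/2}$ on each factor. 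Either way this gives the claim.

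The only real obstacle is the cross term. Unlike in Lemma \ref{classInv}, we cannot place the full unweighted $\ell^2$ sum on one factor, so the weight must be split between the two factors. The elementary AM--GM bound handles this cleanly, because Lemma \ref{transferLem} supplies control of the weighted squared distances at the same exponent $p$. I would double-check that the weights are used consistently, namely $j^{-p}$ in both transfers, so that no loss in the exponent occurs. This is what yields the ultrametric maximum rather than merely a sum.
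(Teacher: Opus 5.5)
Your proof is correct, and it rests on the same identity $\langle\varphi_j-\psi_j,\chi_j-\psi_j\rangle=\langle\varphi_j,\chi_j\rangle-\langle\varphi_j,\psi_j\rangle-\langle\psi_j,\chi_j\rangle+\|\psi_j\|^2$ and the same weighted Schwarz bound on the cross term that the paper uses. The difference is only organizational. The paper applies the identity to imaginary parts alone, handles the real part separately through $\|\varphi_j-\chi_j\|^2\leq2\|\varphi_j-\psi_j\|^2+2\|\psi_j-\chi_j\|^2$, and then invokes Lemma \ref{transferLem} in the reverse direction; you apply the identity to the full complex quantity $\langle\varphi_j,\chi_j\rangle-1$, which treats both parts at once and needs Lemma \ref{transferLem} only in the forward direction.
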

\begin{proof}
Properties (i) and (ii) follow from Lemma \ref{basicLem}. Only part (iii), the strong triangle inequality, needs an argument.

Let $\fc=[\varphi]$, $\fd=[\psi]$, $\fe=[\chi]$, and let $p>\max\big(d(\varphi,\psi),d(\psi,\chi)\big)$. By \eqref{upSet} and Lemma \ref{transferLem}, all four series
\begin{equation}\label{fourSeries}
\begin{split}
&\sum_j\,\frac{\|\varphi_j-\psi_j\|^2}{j^p}\,,\qquad
\sum_j\,\frac{\|\psi_j-\chi_j\|^2}{j^p}\,,\\
&\sum_j\,\frac{|\im\,\langle\varphi_j,\psi_j\rangle|}{j^p}\,,\qquad
\sum_j\,\frac{|\im\,\langle\psi_j,\chi_j\rangle|}{j^p}
\end{split}
\end{equation}
converge. From $\|\varphi_j-\chi_j\|^2\leq 2\|\varphi_j-\psi_j\|^2+2\|\psi_j-\chi_j\|^2$ we conclude that
\begin{equation}\label{normPart}
\sum_{j=1}^\infty\,\frac{1}{j^p}\,\|\varphi_j-\chi_j\|^2<\infty.
\end{equation}
For the imaginary parts, we use the identity
\begin{equation*}
\langle\varphi_j-\psi_j,\,\chi_j-\psi_j\rangle=\langle\varphi_j,\chi_j\rangle-\langle\varphi_j,\psi_j\rangle-\langle\psi_j,\chi_j\rangle+\|\psi_j\|^2,
\end{equation*}
whose last term is real, so that
\begin{equation*}
\big|\im\,\langle\varphi_j,\chi_j\rangle\big|\leq\big|\im\,\langle\varphi_j,\psi_j\rangle\big|+\big|\im\,\langle\psi_j,\chi_j\rangle\big|
+\|\varphi_j-\psi_j\|\,\|\chi_j-\psi_j\|.
\end{equation*}
The weighted sum of the last term is finite by the Schwarz inequality,
\begin{equation*}
\sum_{j=1}^\infty\,\frac{1}{j^p}\,\|\varphi_j-\psi_j\|\,\|\chi_j-\psi_j\|
\leq\Big\{\sum_{j=1}^\infty\,\frac{\|\varphi_j-\psi_j\|^2}{j^{p}}\Big\}^{1/2}
\Big\{\sum_{j=1}^\infty\,\frac{\|\chi_j-\psi_j\|^2}{j^{p}}\Big\}^{1/2},
\end{equation*}
and \eqref{fourSeries}. Hence
\begin{equation}\label{imPart}
\sum_{j=1}^\infty\,\frac{1}{j^p}\,\big|\im\,\langle\varphi_j,\chi_j\rangle\big|<\infty.
\end{equation}
By Lemma \ref{transferLem}, \eqref{normPart} and \eqref{imPart} imply that $\sum_j j^{-p}|\langle\varphi_j,\chi_j\rangle-1|<\infty$, i.e.\ $d(\varphi,\chi)\leq p$. Letting $p\downarrow\max\big(d(\varphi,\psi),d(\psi,\chi)\big)$ completes the proof.
\end{proof}

\begin{remark}
In view of the remark following Lemma \ref{basicLem}, $d$ is not a metric on $\Gamma$. We therefore introduce the relation
\begin{equation*}
\fc\approx\fd,\quad\text{ if and only if }\quad d(\fc,\fd)=0,
\end{equation*}
which, by Lemma \ref{ultraLem}, is an equivalence relation on $\Gamma$ (coarser than $\sim$), and we let $\widetilde\Gamma=\Gamma/\!\approx$ denote the corresponding quotient. By the strong triangle inequality, $d$ descends to a well defined function on $\widetilde\Gamma\times\widetilde\Gamma$: if $d(\fc,\fc')=0$ and $d(\fd,\fd')=0$, then two applications of Lemma \ref{ultraLem} (iii) give $d(\fc',\fd')\leq d(\fc,\fd)$ and, by symmetry, $d(\fc',\fd')=d(\fc,\fd)$. On $\widetilde\Gamma$, $d$ is a genuine ultrametric.
\end{remark}

\begin{remark}
Before turning to completeness, we point out that the pseudometric $d$ carries no information about any individual component of a $C_0$-sequence: a single index $j$ contributes at most $2 j^{-p}$ to the series in \eqref{dDef}, and altering finitely many components does not change $d$ at all. In particular, if $\fc_n=[\varphi_n]$ is a Cauchy sequence in $(\Gamma,d)$, the component sequences $(\varphi_{n,j})_{n\geq1}$ need \emph{not} converge, or even be Cauchy, in $\cH_j$ for any single $j$. (For instance, let $\varphi_{n,1}$ alternate between two orthogonal unit vectors and let $\varphi_{n,j}$ be independent of $n$ for $j\geq2$; then $d(\varphi_n,\varphi_m)=0$ for all $n,m$.) Consequently, the limit of a Cauchy sequence cannot, in general, be constructed componentwise; the proof below assembles it instead from blocks of components taken from progressively later members of the sequence.
\end{remark}

\begin{lemma}\label{complLem}
Any Cauchy sequence in $(\Gamma,d)$, $\fc_n=[\varphi_n]$, $n=1,2,\ldots$, has a limit.
\end{lemma}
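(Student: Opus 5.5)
We construct the limit by splicing blocks, as the preceding remark suggests. Using Lemma \ref{unitRep}, we take every $\varphi_n$ to consist of unit vectors. Since $(\fc_n)$ is Cauchy, we pass to a subsequence $\fc_{n_k}$ with $d(\fc_{n_k},\fc_{n_{k+1}})<\varepsilon_k$, where $\varepsilon_k\downarrow0$. In an ultrametric space, a Cauchy sequence with a convergent subsequence converges to the same limit. It therefore suffices to treat the case $d(\varphi_k,\varphi_{k+1})<\varepsilon_k$. By the strong triangle inequality (Lemma \ref{ultraLem} (iii)) we then have $d(\varphi_k,\varphi_m)<\varepsilon_k$ for all $m>k$. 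We set $p_k=\varepsilon_k$, which is strictly decreasing. By \eqref{upSet}, $\sum_j j^{-p_k}|\langle\varphi_{k,j},\varphi_{m,j}\rangle-1|<\infty$ for every $m>k$.

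Next we choose indices $N_1<N_2<\cdots$ inductively. The requirement is that for each $k$ and each $m\le k$,
\begin{equation*}
\sum_{j>N_k}\frac{1}{j^{p_m}}\,|\langle\varphi_{m,j},\varphi_{k+1,j}\rangle-1|\le 2^{-k}.
\end{equation*}
This is possible because each of these finitely many tails tends to zero. We then define the limit sequence by $\chi_j=\varphi_{k,j}$ for $N_{k-1}<j\le N_k$, with $N_0=0$. It consists of unit vectors, so it is trivially a $C_0$-sequence, and we set $\fc=[\chi]$.

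To show $d(\fc_m,\fc)\le p_m\to0$, we fix $m$ and estimate $\sum_j j^{-p_m}|\langle\varphi_{m,j},\chi_j\rangle-1|$ block by block. The blocks with $k\le m$ are finitely many terms and contribute a finite amount. On a block $N_{k-1}<j\le N_k$ with $k>m$ we have $\chi_j=\varphi_{k,j}$. Since $k-1\ge m$, the choice of $N_{k-1}$, applied with the index $k-1$ and the same $m$, bounds the contribution of this block by $2^{-(k-1)}$. Summing over $k$ gives a finite total. Hence $d(\fc_m,\fc)\le p_m$, so $\fc_m\to\fc$, and this limit passes back to the original sequence.

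The main obstacle is the compatibility of the induction. At stage $k$ the tail conditions must hold simultaneously for all $m\le k$ against $\varphi_{k+1}$, each with its own weight $p_m$. Since these are only finitely many conditions at each stage, this is fine. Working with the exponents $p_m$ rather than with a single exponent is essential, because convergence is only known strictly above $d$. No use of Lemma \ref{transferLem} is needed, since the argument works directly with the series in \eqref{dDef}.
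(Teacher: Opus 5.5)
Your proof is correct and follows the paper's construction: you take a rapidly Cauchy subsequence, choose the block endpoints inductively from tail estimates of the pairwise weighted series, splice unit-norm representatives, and bound the weighted series block by block. The one difference is a streamlining: you use the strict inequality $d(\varphi_k,\varphi_m)<\varepsilon_k$ to get convergence at the exponent $p_k=\varepsilon_k$ itself, which removes the need for the paper's finite families of test exponents $P_l$ approaching $2^{-k}$ from above and gives $d(\fc_m,\fc)\le p_m$ directly.
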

\begin{proof}
By Lemma \ref{unitRep} and Lemma \ref{classInv}, we may and do choose the representatives $\varphi_n\in\fc_n$ so that $\|\varphi_{n,j}\|=1$, for all $n$ and $j$. Note that then
\begin{equation}\label{unitBound}
|\langle\varphi_{n,j},\varphi_{m,j}\rangle-1|\leq 2,\qquad\text{for all }n,m,j.
\end{equation}

\smallskip\noindent\textit{Step 1 (a rapidly Cauchy subsequence).}
Since $(\fc_n)$ is Cauchy, we may choose indices $n_1<n_2<\ldots$ such that
\begin{equation}\label{rapidCauchy}
d(\fc_m,\fc_{m'})<2^{-k},
\end{equation}
for all $m,m'\geq n_k$. In particular, $d(\varphi_{n_k},\varphi_{n_l})<2^{-k}$ for all $l>k$, and therefore, by \eqref{upSet},
\begin{equation}\label{pairSeries}
\sum_{j=1}^\infty\,\frac{1}{j^p}\,|\langle \varphi_{n_k,j},\varphi_{n_l,j}\rangle-1|<\infty,
\end{equation}
for every $p>2^{-k}$ and every $l>k$. We emphasize that \eqref{pairSeries} asserts convergence for each fixed pair $(k,l)$; no uniformity over the pairs is claimed, and none will be needed.

\smallskip\noindent\textit{Step 2 (choice of blocks).}
For $l\in\bN$, introduce the finite set of test exponents
\begin{equation*}
P_l=\Big\{2^{-k}\Big(1+\frac{1}{m}\Big):\:1\leq k\leq l,\ 1\leq m\leq l\Big\}.
\end{equation*}
We define integers $1=J_1<J_2<J_3<\ldots$ recursively, as follows. Suppose $J_1,\ldots,J_{l-1}$ have been chosen. For every $k<l$ and every $p\in P_l$ with $p>2^{-k}$, the series in \eqref{pairSeries} converges, and hence its tails tend to zero. Since there are only finitely many such pairs $(k,p)$, we may choose $J_l>J_{l-1}$ so large that
\begin{equation}\label{tailBound}
\sum_{j\geq J_l}\,\frac{1}{j^p}\,|\langle \varphi_{n_k,j},\varphi_{n_l,j}\rangle-1|\leq 2^{-l},
\end{equation}
for all $1\leq k<l$ and all $p\in P_l$, with $p>2^{-k}.$ Define the blocks $B_l=\{j\in\bN:\:J_l\leq j<J_{l+1}\}$, $l=1,2,\ldots$; they partition $\bN$. Now set
\begin{equation*}
\varphi^\ast_j=\varphi_{n_l,j},\qquad\text{for }j\in B_l .
\end{equation*}
Since $\|\varphi^\ast_j\|=1$ for all $j$, the sequence $\varphi^\ast$ is a $C_0$-sequence; let $\fc^\ast=[\varphi^\ast]\in\Gamma$.

\smallskip\noindent\textit{Step 3 (convergence along the subsequence).}
We claim that
\begin{equation}\label{subConv}
d(\fc^\ast,\fc_{n_k})\leq 2^{-k},\qquad\text{for every }k.
\end{equation}
Fix $k$ and let $p>2^{-k}$ be arbitrary. Choose an integer $m\geq1$ with
\begin{equation*}
p'=2^{-k}\Big(1+\frac1m\Big)\leq p,
\end{equation*}
and set $l_0=\max(k+1,m)$. Splitting the series over the blocks $B_l$, we obtain
\begin{equation*}
\sum_{j=1}^\infty\,\frac{1}{j^{p'}}\,|\langle \varphi^\ast_{j},\varphi_{n_k,j}\rangle-1|
=\sum_{l=1}^\infty\;\sum_{j\in B_l}\,\frac{1}{j^{p'}}\,|\langle \varphi_{n_l,j},\varphi_{n_k,j}\rangle-1|.
\end{equation*}
The blocks with $l<l_0$ are finite sets, and their contribution is finite by \eqref{unitBound}. For $l\geq l_0$ we have $k<l$, $p'\in P_l$, and $p'>2^{-k}$, and so, by \eqref{tailBound},
\begin{equation*}
\begin{split}
\sum_{j\in B_l}\,\frac{1}{j^{p'}}\,|\langle \varphi_{n_l,j},\varphi_{n_k,j}\rangle-1|
&\leq\sum_{j\geq J_l}\,\frac{1}{j^{p'}}\,|\langle \varphi_{n_k,j},\varphi_{n_l,j}\rangle-1|\\
&\leq 2^{-l}.
\end{split}
\end{equation*}
Summing over $l\geq l_0$ yields a finite contribution as well. Consequently, the full series converges at the exponent $p'$, and thus $d(\fc^\ast,\fc_{n_k})\leq p'\leq p$. Since $p>2^{-k}$ was arbitrary, \eqref{subConv} follows.

\smallskip\noindent\textit{Step 4 (convergence of the full sequence).}
Let $\epsilon>0$, and choose $k$ with $2^{-k}<\epsilon$. For $n\geq n_k$, the strong triangle inequality (Lemma \ref{ultraLem} (iii)), together with \eqref{subConv} and \eqref{rapidCauchy}, gives
\begin{equation*}
\begin{split}
d(\fc^\ast,\fc_n)&\leq\max\big(d(\fc^\ast,\fc_{n_k}),\,d(\fc_{n_k},\fc_n)\big)\\
&\leq 2^{-k}\\
&<\epsilon.
\end{split}
\end{equation*}
Hence $\lim_{n\to\infty}d(\fc^\ast,\fc_n)=0$, and $\fc^\ast$ is a limit of the sequence $(\fc_n)$.
\end{proof}

Collecting the lemmas stated and proved above, we obtain the following theorem.
\begin{theorem}\label{mainThm}
The function $d$ is a complete pseudo-ultrametric on $\Gamma$. It descends to the quotient $\widetilde\Gamma=\Gamma/\!\approx$, and the pair $(\widetilde\Gamma,d)$ is a complete ultrametric space.
\end{theorem}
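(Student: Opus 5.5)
The theorem is essentially an assembly of the lemmas already proved, so the plan is to collect them in order and check the one genuinely new point: completeness passes to the quotient.

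First, by Lemma \ref{classInv} the function $d$ is well defined on $\Gamma\times\Gamma$, independently of the representatives chosen. Lemma \ref{basicLem} gives $0\le d\le 1$ and symmetry, and Lemma \ref{ultraLem} gives $d(\fc,\fc)=0$ together with the strong triangle inequality. Hence $d$ is a pseudo-ultrametric on $\Gamma$. Lemma \ref{complLem} shows that every Cauchy sequence in $(\Gamma,d)$ has a limit, so $d$ is complete.

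Next, I would pass to the quotient as in the remark following Lemma \ref{ultraLem}.
\begin{itemize}
\item The relation $\approx$ is reflexive and symmetric by (i) and (ii) of Lemma \ref{ultraLem}. It is transitive by (iii), since $d(\fc,\fe)\le\max(d(\fc,\fd),d(\fd,\fe))=0$.
\item The function $d$ descends to $\widetilde\Gamma$: if $\fc\approx\fc'$ and $\fd\approx\fd'$, two applications of (iii) give $d(\fc',\fd')\le d(\fc,\fd)$, and the reverse inequality follows by symmetry.
\item On $\widetilde\Gamma$ the descended $d$ separates points by the definition of $\approx$, and it inherits symmetry and the strong triangle inequality. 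It is therefore a genuine ultrametric.
\end{itemize}

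Finally, I would check completeness of the quotient. Let $([\fc_n])$ be a Cauchy sequence in $\widetilde\Gamma$, with representatives $\fc_n\in\Gamma$. Because $d$ is unchanged on passing to the quotient, $(\fc_n)$ is Cauchy in $(\Gamma,d)$. Lemma \ref{complLem} then gives a limit $\fc^\ast\in\Gamma$, and $d([\fc^\ast],[\fc_n])=d(\fc^\ast,\fc_n)\to0$, so $[\fc^\ast]$ is a limit in $\widetilde\Gamma$. The limit is unique, because $d$ is a genuine metric there.

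The only potential obstacle is the completeness step, namely constructing the limit of a Cauchy sequence. That work is already done in Lemma \ref{complLem} by the block-diagonal construction, so the theorem itself reduces to bookkeeping.
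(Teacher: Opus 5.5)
Your proposal is correct and follows essentially the same route as the paper: it assembles Lemmas \ref{classInv}, \ref{ultraLem} and \ref{complLem} for the first statement, and uses the remark following Lemma \ref{ultraLem} for the descent to $\widetilde\Gamma$. Completeness of the quotient is obtained, as in the paper, by lifting a Cauchy sequence to $\Gamma$ and projecting the limit supplied by Lemma \ref{complLem}; your extra checks (transitivity of $\approx$, uniqueness of the limit) are harmless additions.
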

\begin{proof}
The first statement is the content of Lemmas \ref{classInv}, \ref{ultraLem}, and \ref{complLem}. That $d$ descends to $\widetilde\Gamma$ and separates the points of $\widetilde\Gamma$ was shown in the remark following Lemma \ref{ultraLem}. Finally, a Cauchy sequence in $\widetilde\Gamma$ lifts to a Cauchy sequence in $\Gamma$, whose limit, furnished by Lemma \ref{complLem}, projects to a limit in $\widetilde\Gamma$.
\end{proof}

\begin{remark}
The pseudo-ultrametric $d$ depends on the choice of the enumeration of the index set. For an example, take $\cH_j=\bC^2$ with orthonormal basis $e_1,e_2$, and let
\begin{equation*}
\begin{split}
\varphi_j&=e_1\ \text{ for all }j,\\
\psi_j&=
\begin{cases}
e_2, & j\in S,\\
e_1, & j\notin S,
\end{cases}
\end{split}
\end{equation*}
where $S\subset\bN$. Then $|\langle\varphi_j,\psi_j\rangle-1|=1$ for $j\in S$, and $=0$ otherwise, so that $d(\varphi,\psi)$ is the convergence exponent of $\sum_{j\in S}j^{-p}$. If $S=\{2^m:\:m\in\bN\}$, then $\sum_{j\in S}j^{-p}=\sum_m 2^{-mp}<\infty$ for every $p>0$, and $d(\varphi,\psi)=0$. If, however, the index set is relabeled by a bijection of $\bN$ carrying $S$ onto the even integers, the same pair of sequences acquires distance $1$. The structure introduced in this section is therefore an attribute of the family $(\cH_j)$ \emph{together with} a fixed enumeration of the index set; all statements above refer to the standard enumeration of $I=\bN$.
\end{remark}

\subsection{Invariance under reindexing}

The preceding remark shows that $d$ depends on the enumeration. We now determine \emph{exactly} which reorderings leave it invariant; the answer confines the enumeration dependence to a large and explicitly described gauge group. A permutation $\pi$ of $\bN$ acts on $C_0$-sequences by $(\pi\cdot\varphi)_n=\varphi_{\pi(n)}$. Since inner products are unaltered, the effect of $\pi$ on either metric is captured entirely by its effect on the bounded nonnegative sequence $a_n=|\langle\varphi_n,\psi_n\rangle-1|$ (or $1-|\langle\varphi_n,\psi_n\rangle|$ for $\tilde d$); conversely, every sequence with values in $[0,2]$ arises so from a pair of unit vectors in $\bC^2$. The question thus reduces to the behavior, under $a\mapsto a\circ\pi$, of the convergence exponent
\begin{equation}\label{rhoDef}
\begin{split}
\cexp(a)&=\inf\Big\{p\geq 0:\;\sum_n \frac{a_n}{n^{p}}<\infty\Big\}\\
&=\limsup_{N\to\infty}\,\frac{\log^+\Sigma_N}{\log N}\,,
\end{split}
\end{equation}
where $\Sigma_N=\sum_{n\leq N}a_n$, 
the second equality being the classical Cauchy--Hadamard formula for the abscissa of convergence of the Dirichlet series $\sum_n a_n n^{-p}$ \cite{HR15}. For $S\subseteq\bN$ we abbreviate $\cexp(S)=\cexp(\mathbf 1_S)=\limsup_N\log^+|S\cap[1,N]|/\log N$, the (upper) exponent of $S$; thus $\cexp(S)=0$ means $|S\cap[1,N]|=N^{o(1)}$. The metrics are recovered as $d(\varphi,\psi)=\cexp(a)$ and $d(\pi\cdot\varphi,\pi\cdot\psi)=\cexp(a\circ\pi)$, so $\pi$ preserves $d$ on all pairs if and only if $\cexp(a\circ\pi)=\cexp(a)$ for every bounded nonnegative $a$; the same statement holds verbatim for $\tilde d$.

\begin{theorem}\label{reindexSuff}
If $\log\pi(n)/\log n\to1$ as $n\to\infty$ -- that is, $\pi(n)=n^{1+o(1)}$ -- then $\pi$ preserves both $d$ and $\tilde d$ on every pair of $C_0$-sequences.
\end{theorem}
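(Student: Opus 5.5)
By the reduction made just before the statement, it suffices to show that $\cexp(a\circ\pi)=\cexp(a)$ for every bounded nonnegative sequence $a$. The same argument then serves for $d$ and for $\tilde d$, since both are $\cexp$ of a bounded nonnegative sequence built from the pair. It also suffices to prove the single inequality $\cexp(a\circ\pi)\leq\cexp(a)$ for all $a$ and all $\pi$ satisfying the hypothesis. The reason is that $\pi^{-1}$ satisfies the same hypothesis: put $m=\pi(n)$. As $n\to\infty$ we have $m\to\infty$, because $\pi$ is a bijection and only finitely many $n$ map into any bounded set. Then $\log \pi^{-1}(m)/\log m=\log n/\log\pi(n)\to1$. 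Applying the inequality to $a\circ\pi$ and $\pi^{-1}$ gives the reverse inequality.

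For the inequality itself I would work directly with the Dirichlet-series definition in \eqref{rhoDef}. Let $q=\cexp(a)$ and fix $p>q$; pick $p'$ with $q<p'<p$. Then $\sum_m a_m m^{-p'}<\infty$. We want $\sum_n a_{\pi(n)}n^{-p}<\infty$. Reindex by $m=\pi(n)$, so that $n=\pi^{-1}(m)$ and the series becomes $\sum_m a_m\,\pi^{-1}(m)^{-p}$. Everything is nonnegative, so rearranging is harmless.

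The key pointwise bound is as follows. Since $\log\pi^{-1}(m)/\log m\to1$, for any $\delta>0$ we have $\pi^{-1}(m)\geq m^{1-\delta}$ for all large $m$. Hence $\pi^{-1}(m)^{-p}\leq m^{-p(1-\delta)}$. Choose $\delta$ small enough that $p(1-\delta)\geq p'$. Then the tail of the series is dominated by $\sum_m a_m m^{-p'}<\infty$, and finitely many initial terms do not matter. This gives $\cexp(a\circ\pi)\leq p$, and letting $p\downarrow q$ gives the claim. Only the lower bound $\pi^{-1}(m)\geq m^{1-o(1)}$ is used in this direction. The matching upper bound is needed only when the argument is applied to $\pi^{-1}$, which is why the symmetric hypothesis $\pi(n)=n^{1+o(1)}$ is assumed.

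The only delicate point is the inversion step: checking that the hypothesis on $\pi$ transfers to $\pi^{-1}$. This relies on $\pi(n)\to\infty$, which follows from bijectivity. Once that is in place, everything else is a comparison of weights $n^{-p}$ versus $m^{-p'}$ with a small loss in the exponent. Finally, I would note that the case $q=0$ needs no special treatment, since $p>0$ is arbitrary.
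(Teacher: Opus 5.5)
Your proof is correct and is essentially the paper's argument. The paper bounds $n^{-q}\le \pi(n)^{-q/(1+\epsilon)}$ using $\pi(n)\le n^{1+\epsilon}$ and then reindexes by $k=\pi(n)$; this is the same weight comparison as your $\pi^{-1}(m)^{-p}\le m^{-p(1-\delta)}$, and the paper likewise gets the reverse inequality by applying the argument to $\pi^{-1}$. One small slip: the inversion step actually needs $\pi^{-1}(m)\to\infty$ as $m\to\infty$, not $\pi(n)\to\infty$ as $n\to\infty$, though this also follows immediately from bijectivity.
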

\begin{proof}
Fix a bounded nonnegative $a$ and $q>\cexp(a)$, and pick $\epsilon>0$ with $q/(1+\epsilon)>\cexp(a)$. The hypothesis gives $\pi(n)\leq n^{1+\epsilon}$ for all large $n$, hence $n^{-q}\leq\pi(n)^{-q/(1+\epsilon)}$, and therefore, reindexing by $k=\pi(n)$,
\begin{equation*}
\begin{split}
\sum_n \frac{a_{\pi(n)}}{n^{q}}&\leq const+\sum_n \frac{a_{\pi(n)}}{\pi(n)^{q/(1+\epsilon)}}\\
&= const+\sum_k \frac{a_k}{k^{q/(1+\epsilon)}}\\
&<\infty,
\end{split}
\end{equation*}
and so $\cexp(a\circ\pi)\leq q$. Letting $q\downarrow\cexp(a)$ gives $\cexp(a\circ\pi)\leq\cexp(a)$. The hypothesis is symmetric in $\pi$ and $\pi^{-1}$ (if $\log\pi(n)/\log n\to1$ and $m=\pi(n)$, then $\log\pi^{-1}(m)/\log m\to1$), so the same estimate applied to $\pi^{-1}$ yields the reverse inequality. Thus $\cexp(a\circ\pi)=\cexp(a)$.
\end{proof}

The hypothesis of Theorem \ref{reindexSuff} is sufficient but not necessary: a permutation may distort a sparse set of indices arbitrarily without affecting any exponent. The sharp condition is the following.

\medskip
\noindent\textbf{Condition (T).}\ \ For every $\epsilon>0$,
\begin{equation*}
\cexp\big(\{n:\pi(n)>n^{1+\epsilon}\}\big)=0
\qquad\text{and}\qquad
\cexp\big(\{n:\pi^{-1}(n)>n^{1+\epsilon}\}\big)=0.
\end{equation*}

\smallskip
\noindent In words: only subpolynomially many indices may be displaced by more than a polynomial amount.

\begin{theorem}\label{reindexSharp}
A permutation $\pi$ of $\bN$ preserves $d$ on every pair of $C_0$-sequences if and only if it satisfies Condition~\textup{(T)}. The same holds for $\tilde d$.
\end{theorem}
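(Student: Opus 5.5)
The plan is to reduce everything, as the discussion preceding Condition~(T) permits, to the convergence exponent $\cexp$ of bounded nonnegative sequences. The goal is to show that $\cexp(a\circ\pi)=\cexp(a)$ for every bounded $a\geq0$ if and only if (T) holds. Two features make this workable.

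First, the property ``$\cexp(a\circ\pi)=\cexp(a)$ for all $a$'' is symmetric in $\pi$ and $\pi^{-1}$: substitute $b=a\circ\pi$. Condition~(T) is symmetric in the same way.

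Second, the necessity direction only needs indicator sequences $a=\mathbf 1_S$. These arise for both metrics from the pair $\varphi_j=e_1$, $\psi_j=e_2$ for $j\in S$ and $\psi_j=e_1$ otherwise, as in the remark at the end of Section~\ref{ultraSec}. For this pair, both $|\langle\varphi_j,\psi_j\rangle-1|$ and $1-|\langle\varphi_j,\psi_j\rangle|$ equal $\mathbf 1_S(j)$. Hence the argument for $d$ and the argument for $\tilde d$ are literally the same.

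\emph{Sufficiency.} Assume (T). Fix a bounded $a\geq 0$, a number $q>\cexp(a)$ (so $q>0$), and $\epsilon>0$ with $q/(1+\epsilon)>\cexp(a)$. Split $\bN$ into
\[
G=\{n:\pi(n)\leq n^{1+\epsilon}\}
\quad\text{and}\quad
B=\{n:\pi(n)>n^{1+\epsilon}\}.
\]
On $G$, argue exactly as in the proof of Theorem~\ref{reindexSuff}: there $n^{-q}\leq \pi(n)^{-q/(1+\epsilon)}$, so reindexing by $k=\pi(n)$ gives
\[
\sum_{n\in G}\frac{a_{\pi(n)}}{n^{q}}\;\leq\;\sum_k \frac{a_k}{k^{q/(1+\epsilon)}}\;<\;\infty.
\]
On $B$, use only boundedness: $\sum_{n\in B}a_{\pi(n)}n^{-q}\leq \|a\|_\infty\sum_{n\in B}n^{-q}$. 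The right-hand side is finite because $\cexp(B)=0<q$ by (T). Hence $\cexp(a\circ\pi)\leq q$, and letting $q\downarrow\cexp(a)$ gives $\cexp(a\circ\pi)\leq\cexp(a)$. Applying the same argument to $\pi^{-1}$, with the second half of (T) and the sequence $a\circ\pi$, gives the reverse inequality.

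\emph{Necessity.} Suppose the first half of (T) fails. Then for some $\epsilon>0$ the set $B=\{n:\pi(n)>n^{1+\epsilon}\}$ has $\beta=\cexp(B)>0$. Take $a=\mathbf 1_{\pi(B)}$, so that $a\circ\pi=\mathbf 1_B$ and $\cexp(a\circ\pi)=\beta$. The key counting step is the following. If $m=\pi(n)\leq N$ with $n\in B$, then $n^{1+\epsilon}<N$, so
\[
|\pi(B)\cap[1,N]|\leq |B\cap[1,N^{1/(1+\epsilon)}]|.
\]
The partial-sum formula \eqref{rhoDef} then yields $\cexp(\pi(B))\leq \beta/(1+\epsilon)<\beta$. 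So $\pi$ changes the exponent of this $a$, and by the realization above it changes $d$ (and $\tilde d$) on an explicit pair of $C_0$-sequences. If instead the second half of (T) fails, apply the same construction to $\pi^{-1}$ and invoke the symmetry noted at the outset.

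I expect the only delicate point to be this counting estimate, in particular handling $\log^+$ and the $\limsup$ correctly under the substitution $N\mapsto N^{1/(1+\epsilon)}$. The rest is bookkeeping already present in Theorem~\ref{reindexSuff}.
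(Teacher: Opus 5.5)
Your proposal is correct and follows the paper's proof. Sufficiency uses the same split of $\bN$ into $G=\{n:\pi(n)\leq n^{1+\epsilon}\}$ and its complement, with the Theorem~\ref{reindexSuff} estimate on $G$, the exponent-zero bound on the complement, and the passage to $\pi^{-1}$; necessity uses the same indicator $a=\mathbf 1_{\pi(B)}$ and the same counting bound $|\pi(B)\cap[1,N]|\leq|B\cap[1,N^{1/(1+\epsilon)}]|$. The only difference is cosmetic: you fix $q>\cexp(a)$ before choosing $\epsilon$, whereas the paper fixes $\epsilon$, proves $\cexp(a\circ\pi)\leq(1+\epsilon)\cexp(a)$, and then lets $\epsilon\downarrow0$.
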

\begin{proof}
\emph{Sufficiency.} Let $a$ be bounded, $0\leq a\leq C$, fix $\epsilon>0$ and $q>(1+\epsilon)\cexp(a)$ with $q>0$. Split $\bN=G\cup U$, where $G=\{n:\pi(n)\leq n^{1+\epsilon}\}$ and $U=\{n:\pi(n)>n^{1+\epsilon}\}$. On $G$ the argument of Theorem \ref{reindexSuff} gives $\sum_{n\in G}a_{\pi(n)}n^{-q}\leq\sum_k a_k k^{-q/(1+\epsilon)}<\infty$. On $U$, Condition~(T) gives $\cexp(U)=0$, so $\sum_{n\in U}a_{\pi(n)}n^{-q}\leq C\sum_{n\in U}n^{-q}<\infty$, since a set of exponent $0$ satisfies $\sum_{n\in U}n^{-q}<\infty$ for every $q>0$. Hence $\cexp(a\circ\pi)\leq(1+\epsilon)\cexp(a)$; letting $\epsilon\downarrow0$, $\cexp(a\circ\pi)\leq\cexp(a)$. Applying this to $\pi^{-1}$, which satisfies (T) by its second clause, gives the reverse inequality, so $\cexp(a\circ\pi)=\cexp(a)$.

\emph{Necessity.} Suppose (T) fails; by symmetry we may assume $\cexp(U)=\alpha>0$ for $U=\{n:\pi(n)>n^{1+\epsilon}\}$ and some $\epsilon>0$. Take $a=\mathbf 1_{\pi(U)}$, realized by the pair $\varphi_n=e_1$, and $\psi_n=e_2$ for $n\in\pi(U)$, $\psi_n=e_1$ otherwise. Then $a\circ\pi=\mathbf 1_{U}$, so $\cexp(a\circ\pi)=\cexp(U)=\alpha$. On the other hand, every $m\in\pi(U)\cap[1,M]$ is $m=\pi(n)$ with $n\in U$ and $n<\pi(n)^{1/(1+\epsilon)}\leq M^{1/(1+\epsilon)}$, whence, by injectivity of $\pi$, $|\pi(U)\cap[1,M]|\leq|U\cap[1,M^{1/(1+\epsilon)}]|$; taking $\log^+$, dividing by $\log M$, and passing to the limit gives $\cexp(a)=\cexp(\pi(U))\leq\alpha/(1+\epsilon)<\alpha$. Thus $\cexp(a\circ\pi)\neq\cexp(a)$, and $\pi$ does not preserve $d$.
\end{proof}

\begin{remark}
The permutations satisfying Condition~(T) form a subgroup $\mathcal G$ of the full symmetric group of $\bN$. This is immediate from Theorem \ref{reindexSharp}, which identifies $\mathcal G$ with the stabilizer $\{\pi:\cexp(a\circ\pi)=\cexp(a)\ \text{for all bounded }a\geq0\}$ of the exponent functional \eqref{rhoDef}, and a stabilizer is automatically a group; it is not obvious from Condition~(T) itself. Accordingly, $d$ and $\tilde d$ are not invariants of a bare enumeration but of an enumeration \emph{modulo} $\mathcal G$: two enumerations of the index set yield the same distances precisely when they differ by a $(T)$-permutation. The group $\mathcal G$ is large -- by Theorem \ref{reindexSuff} it contains every polynomially tame relabeling, in particular every permutation moved only boundedly or even polynomially far, not merely those that are eventually the identity. The relabeling invoked in the remark above, carrying the sparse set $\{2^m\}$ onto the even integers, is the prototypical element \emph{outside} $\mathcal G$: it transports a set of exponent $0$ onto a set of exponent $1$, so that $\pi^{-1}$ carries a positive-exponent set past every polynomial window, violating the second clause of~(T).
\end{remark}

\begin{remark}
Permutations preserving finer features of series -- the sums of all convergent series, or the exact asymptotics of divergent partial sums -- have been studied classically; the sum-preserving permutations are exactly those of ``bounded block'' type \cite{L46,A55,P77}, a far more restrictive class. The invariant relevant here, the convergence exponent \eqref{rhoDef}, is coarser than either the sum or the precise divergence rate, and $\mathcal G$ is correspondingly larger than the classical groups: it is sensitive only to the polynomial, logarithmically measured, growth of partial sums.
\end{remark}

\section{\label{weakSec}A gauge-invariant variant of the metric}

The pseudo-ultrametric $d$ is sensitive to the phases of the individual components. The simplest illustration is the following. Let $\|\varphi_j\|=1$ for all $j$, let $\theta\in(0,2\pi)$, and let $\psi_j=e^{i\theta}\varphi_j$. Then $|\langle\varphi_j,\psi_j\rangle-1|=|e^{i\theta}-1|>0$ for every $j$, so that the series in \eqref{dDef} diverges for every $p\leq1$, and $d(\varphi,\psi)=1$: the two classes are maximally distant. Yet the two product vectors define the same expectation value on every operator supported on finitely many factors. In many situations -- notably the physical application discussed in Section \ref{everettSec} -- the componentwise phases carry no information, and it is desirable to have a variant of $d$ that does not see them.

For two $C_0$-sequences we define the relation
\begin{equation}\label{weakDef}
\varphi\sim_w\psi,\quad\text{ if and only if }\quad\sum\,\big|\,|\langle \varphi_j,\psi_j\rangle|-1\big|<\infty.
\end{equation}
Since $\big|\,|z|-1\big|\leq|z-1|$, equivalence implies weak equivalence, $\varphi\sim\psi\Rightarrow\varphi\sim_w\psi$; the converse fails, as the example above shows. The relation \eqref{weakDef} appears already in \cite{vN39}, under the name of \emph{weak equivalence}. For unit-norm sequences it admits the following characterization, which explains the term ``gauge-invariant'': $\varphi\sim_w\psi$ if and only if there exist phases $\alpha_j\in\bR$ such that $(e^{i\alpha_j}\varphi_j)\sim(\psi_j)$. Indeed, choosing $\alpha_j$ so that $\langle e^{i\alpha_j}\varphi_j,\psi_j\rangle=|\langle\varphi_j,\psi_j\rangle|$ turns each term of \eqref{equivDef} into $1-|\langle\varphi_j,\psi_j\rangle|$; conversely, $1-|\langle\varphi_j,\psi_j\rangle|=1-|\langle e^{i\alpha_j}\varphi_j,\psi_j\rangle|\leq|\langle e^{i\alpha_j}\varphi_j,\psi_j\rangle-1|$. In other words, the weak classes are the orbits of the classes in $\Gamma$ under the group of ``pure gauge'' product unitaries $\bigotimes_j e^{i\alpha_j}I$, where $I$ denotes the identity operator.

\begin{remark}
All statements in this section may be verified on unit-norm representatives. Indeed, let $\hat\varphi_j=\varphi_j/\|\varphi_j\|$ (after modifying the finitely many components with $\|\varphi_j\|<1/2$, as in the proof of Lemma \ref{unitRep}). Then $|\langle\varphi_j,\psi_j\rangle|=\|\varphi_j\|\,\|\psi_j\|\,|\langle\hat\varphi_j,\hat\psi_j\rangle|$, and hence
\begin{equation*}
\begin{split}
\Big|\,|\langle\varphi_j,\psi_j\rangle|-|\langle\hat\varphi_j,\hat\psi_j\rangle|\,\Big|
&=|\langle\hat\varphi_j,\hat\psi_j\rangle|\,\big|\,\|\varphi_j\|\,\|\psi_j\|-1\big|\\
&\leq const\,\big(|\|\varphi_j\|-1|+|\|\psi_j\|-1|\big),
\end{split}
\end{equation*}
which is absolutely summable by \eqref{c0Def}. Since the weights $j^{-p}$ are bounded by one, all of the series considered in this section converge or diverge simultaneously for $(\varphi,\psi)$ and $(\hat\varphi,\hat\psi)$. We will therefore freely assume unit norms; note that for unit vectors $\big|\,|\langle x,y\rangle|-1\big|=1-|\langle x,y\rangle|$.
\end{remark}

The entire theory of this section rests on the following quasi-subadditivity property, which plays the role that Lemma \ref{transferLem} played in Section \ref{ultraSec}; note that no imaginary parts need to be controlled here.

\begin{lemma}\label{quasiLem}
For any unit vectors $x,y,z$ in a unitary space,
\begin{equation}\label{quasiIneq}
1-|\langle x,z\rangle|\leq 2\,\Big(\big(1-|\langle x,y\rangle|\big)+\big(1-|\langle y,z\rangle|\big)\Big).
\end{equation}
Consequently, $\sim_w$ is an equivalence relation on the set of $C_0$-sequences.
\end{lemma}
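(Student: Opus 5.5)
Modulo phases, I will reduce the inequality \eqref{quasiIneq} to the ordinary triangle inequality for Hilbert space distances. For unit vectors $u,v$ put
\[
\delta(u,v)=\min_{\alpha\in\bR}\|e^{i\alpha}u-v\|.
\]
Since $\|e^{i\alpha}u-v\|^2=2-2\,\re\,(e^{-i\alpha}\langle u,v\rangle)$ and the minimum over $\alpha$ makes the real part equal to $|\langle u,v\rangle|$, we get
\[
\delta(u,v)^2=2\big(1-|\langle u,v\rangle|\big).
\]
The quantity $\delta$ obeys the triangle inequality. Choose phases $\alpha,\beta$ realizing $\delta(x,y)=\|e^{i\alpha}x-y\|$ and $\delta(y,z)=\|e^{i\beta}y-z\|$. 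Then
\[
\delta(x,z)\leq\|e^{i(\alpha+\beta)}x-z\|\leq\|e^{i(\alpha+\beta)}x-e^{i\beta}y\|+\|e^{i\beta}y-z\|=\delta(x,y)+\delta(y,z).
\]
Squaring and using $(a+b)^2\leq 2a^2+2b^2$ gives
\[
2\big(1-|\langle x,z\rangle|\big)\leq 4\big(1-|\langle x,y\rangle|\big)+4\big(1-|\langle y,z\rangle|\big),
\]
which is \eqref{quasiIneq}. This step is elementary; the only care needed is that the minimum defining $\delta$ is attained, which is clear by compactness of the circle.

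The equivalence relation part follows the pattern of Lemma \ref{equivLem}, with Lemma \ref{quasiLem} replacing the vN inequality. Reflexivity holds because $\big|\,\|\varphi_j\|^2-1\big|$ is summable by \eqref{c0Eq}. Symmetry follows from $|\langle\varphi_j,\psi_j\rangle|=|\langle\psi_j,\varphi_j\rangle|$.

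For transitivity, I will first pass to unit-norm representatives, using the remark preceding the lemma. That remark shows that replacing $\varphi,\psi,\chi$ by $\hat\varphi,\hat\psi,\hat\chi$ changes each term $\big|\,|\langle\cdot,\cdot\rangle|-1\big|$ by an absolutely summable amount. In particular it does not affect whether the sums in \eqref{weakDef} converge, and the finitely many modified components are harmless. For unit vectors, summing \eqref{quasiIneq} with $x=\hat\varphi_j$, $y=\hat\psi_j$, $z=\hat\chi_j$ shows that $\hat\varphi\sim_w\hat\psi$ and $\hat\psi\sim_w\hat\chi$ give $\hat\varphi\sim_w\hat\chi$. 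Transferring back by the same remark yields $\varphi\sim_w\chi$.

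I do not expect a real obstacle. The only step requiring thought is the phase-minimized distance trick; the rest is bookkeeping already prepared by the preceding remark.
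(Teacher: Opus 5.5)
Your proof is correct, but for the inequality itself it takes a different route from the paper's.

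The paper also starts by rotating phases so that $\langle x,y'\rangle=|\langle x,y\rangle|$ and $\langle y',z'\rangle=|\langle y,z\rangle|$; this is your choice of minimizing phases. It then applies von Neumann's three-vector inequality, Lemma \ref{vNineqLem}, to $x,y',z'$. It finishes with $1-ab\le(1-a)+(1-b)$ and $1-a^2\le 2(1-a)$, so one half of the constant $2$ comes from the product term of that lemma and the other half from its Gram--Schmidt remainder.

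You instead note that $\delta(u,v)=\sqrt{2(1-|\langle u,v\rangle|)}$ is the distance between the rays of $u$ and $v$. It therefore obeys an honest triangle inequality, and the constant $2$ is nothing but $(a+b)^2\le 2a^2+2b^2$.

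Your argument is shorter and does not depend on Lemma \ref{vNineqLem}. It also establishes the stronger fact that $\sqrt{1-|\langle\cdot,\cdot\rangle|}$ is subadditive on unit vectors, from which \eqref{quasiIneq} follows by squaring. Further, it makes visible that the per-site term of $\tilde d$ is half the square of a genuine metric, namely the pure-state Bures distance of \eqref{eq:d-is-bures}.

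The paper's route buys uniformity of method. The gauge-invariant theory is derived from the same inequality that drives von Neumann's proof of transitivity of $\sim$ in Lemma \ref{equivLem}.

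Your treatment of the equivalence-relation part agrees with the paper's: normalize via the preceding remark, then sum \eqref{quasiIneq}. Your explicit appeal to \eqref{c0Eq} for reflexivity supplies what the paper leaves as clear.
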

\begin{proof}
Choose $\alpha,\beta\in\bR$ so that, with $y'=e^{i\alpha}y$ and $z'=e^{i\beta}z$, we have $a=\langle x,y'\rangle=|\langle x,y\rangle|\geq0$ and $b=\langle y',z'\rangle=|\langle y,z\rangle|\geq0$. Applying Lemma \ref{vNineqLem} to the triple $x,y',z'$ yields
\begin{equation*}
|\langle x,z'\rangle-1|\leq|ab-1|+\frac12\,\big(1-a^2+1-b^2\big).
\end{equation*}
On the left hand side, $|\langle x,z'\rangle-1|\geq1-|\langle x,z'\rangle|=1-|\langle x,z\rangle|$. On the right hand side, $|ab-1|=1-ab\leq(1-a)+(1-b)$, since $(1-a)(1-b)\geq0$, while $1-a^2\leq2(1-a)$ and $1-b^2\leq2(1-b)$. Combining these estimates gives \eqref{quasiIneq}. Reflexivity and symmetry of $\sim_w$ are clear; transitivity follows by summing \eqref{quasiIneq} over $j$, after normalizing as in the preceding remark.
\end{proof}

We denote by $\Gamma_w$ the set of weak equivalence classes of $C_0$-sequences; since $\sim$ refines $\sim_w$, there is a natural surjection $\Gamma\to\Gamma_w$. For two $C_0$-sequences we now set
\begin{equation}\label{tdDef}
\tilde d(\varphi,\psi)=\inf\Big\{p\geq 0:\:\sum _{j=1}^\infty\,\frac{1}{j^p}\,\big|\,|\langle \varphi_j,\psi_j\rangle|-1\big|<\infty\Big\}.
\end{equation}

\begin{lemma}\label{tdLem}
The function $\tilde d$ has the following properties:
\begin{itemize}
\item[(i)]{$0\leq\tilde d(\varphi,\psi)\leq d(\varphi,\psi)\leq1$;}
\item[(ii)]{if $\varphi\sim_w\psi$, then $\tilde d(\varphi,\psi)=0$;}
\item[(iii)]{$\tilde d(\varphi,\psi)=\tilde d(\psi,\varphi)$;}
\item[(iv)]{if $\varphi'\sim_w\varphi$, then $\tilde d(\varphi',\psi)=\tilde d(\varphi,\psi)$; consequently, $\tilde d$ defines a function on $\Gamma_w\times\Gamma_w$ and, a fortiori, on $\Gamma\times\Gamma$;}
\item[(v)]{$\tilde d(\varphi,\chi)\leq\max\big(\tilde d(\varphi,\psi),\tilde d(\psi,\chi)\big)$.}
\end{itemize}
\end{lemma}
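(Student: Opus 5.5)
The plan is to reduce everything to unit-norm representatives, using the remark preceding Lemma \ref{quasiLem}. That remark shows that replacing $\varphi,\psi$ by their normalizations changes each term of \eqref{tdDef} by an absolutely summable amount. Since $j^{-p}\leq1$, the convergence set in \eqref{tdDef} is therefore unchanged. After this reduction, each term of \eqref{tdDef} equals $1-|\langle\varphi_j,\psi_j\rangle|$. Every property will then follow from the pointwise inequality $\big||z|-1\big|\leq|z-1|$ and the quasi-subadditivity \eqref{quasiIneq}, by the same ``upper interval'' reasoning as \eqref{upSet}.

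For (i), nonnegativity holds by definition. The termwise bound $\big||\langle\varphi_j,\psi_j\rangle|-1\big|\leq|\langle\varphi_j,\psi_j\rangle-1|$ shows that every $p$ admissible in \eqref{dDef} is admissible in \eqref{tdDef}, so $\tilde d\leq d$. The bound $d\leq1$ is Lemma \ref{basicLem} (i). For (ii), if $\varphi\sim_w\psi$ the series converges at $p=0$, hence at every $p\geq0$. Property (iii) is immediate because $|\langle\varphi_j,\psi_j\rangle|=|\langle\psi_j,\varphi_j\rangle|$.

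For (v), take unit-norm representatives and $p>\max(\tilde d(\varphi,\psi),\tilde d(\psi,\chi))$. By the analogue of \eqref{upSet}, both weighted series $\sum_j j^{-p}(1-|\langle\varphi_j,\psi_j\rangle|)$ and $\sum_j j^{-p}(1-|\langle\psi_j,\chi_j\rangle|)$ converge. Multiplying \eqref{quasiIneq} by $j^{-p}$ and summing gives convergence of $\sum_j j^{-p}(1-|\langle\varphi_j,\chi_j\rangle|)$, so $\tilde d(\varphi,\chi)\leq p$. Letting $p$ decrease to the maximum gives (v).

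For (iv), I will not mimic the three-term decomposition of Lemma \ref{classInv}; I will derive it from \eqref{quasiIneq} instead. With unit-norm representatives, \eqref{quasiIneq} applied to the triple $\varphi'_j,\varphi_j,\psi_j$ gives
\begin{equation*}
j^{-p}\big(1-|\langle\varphi'_j,\psi_j\rangle|\big)\leq 2\big(1-|\langle\varphi'_j,\varphi_j\rangle|\big)+2j^{-p}\big(1-|\langle\varphi_j,\psi_j\rangle|\big).
\end{equation*}
Here the weight $j^{-p}\leq1$ has been dropped on the first term, which is summable since $\varphi'\sim_w\varphi$. The second term is summable whenever $p>\tilde d(\varphi,\psi)$. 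Hence $\tilde d(\varphi',\psi)\leq\tilde d(\varphi,\psi)$, and symmetry of $\sim_w$ gives equality. Combining this with (iii) handles the second argument, so $\tilde d$ is well defined on $\Gamma_w\times\Gamma_w$. It is then also defined on $\Gamma\times\Gamma$, because $\sim$ implies $\sim_w$.

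The only delicate point is the normalization step. It is needed because \eqref{quasiIneq} is stated only for unit vectors, and I must check that the finitely many modified components, together with the summable norm corrections from the preceding remark, do not affect convergence at any $p\geq0$. Since that remark already establishes this, I expect no real obstacle beyond bookkeeping.
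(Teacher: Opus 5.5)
Your proposal is correct and follows the paper's proof essentially step for step. In both, (i)--(iii) follow from the termwise bound $\big|\,|z|-1\big|\leq|z-1|$ and Lemma \ref{basicLem}, and (iv) and (v) are obtained, after reducing to unit norms, by summing \eqref{quasiIneq} against the weights $j^{-p}$, dropping the weight on the summable $\varphi'$--$\varphi$ term in (iv) rather than using the three-term decomposition of Lemma \ref{classInv}.
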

\begin{proof}
Property (i) follows from $\big|\,|z|-1\big|\leq|z-1|$ and Lemma \ref{basicLem}; properties (ii) and (iii) are clear. For (iv) and (v) we may assume unit norms. If $\varphi'\sim_w\varphi$ and $p>\tilde d(\varphi,\psi)$, then, by \eqref{quasiIneq} and $j^{-p}\leq1$,
\begin{equation*}
\begin{split}
\sum_{j=1}^\infty\,\frac{1-|\langle\varphi'_j,\psi_j\rangle|}{j^p}
&\leq 2\,\sum_{j=1}^\infty\,\big(1-|\langle\varphi'_j,\varphi_j\rangle|\big)
+2\,\sum_{j=1}^\infty\,\frac{1-|\langle\varphi_j,\psi_j\rangle|}{j^p}\\
&<\infty,
\end{split}
\end{equation*}
so that $\tilde d(\varphi',\psi)\leq p$; letting $p$ decrease to $\tilde d(\varphi,\psi)$ and using the symmetry of the roles of $\varphi$ and $\varphi'$ proves (iv). Similarly, if $p>\max\big(\tilde d(\varphi,\psi),\tilde d(\psi,\chi)\big)$, then summing \eqref{quasiIneq} against the weights $j^{-p}$ gives $\tilde d(\varphi,\chi)\leq p$, which proves (v).
\end{proof}

\begin{theorem}\label{tdThm}
$\tilde d$ is a complete pseudo-ultrametric on $\Gamma_w$, and on $\Gamma$. The quotient of $\Gamma_w$ by the relation $\tilde d=0$ is a complete ultrametric space.
\end{theorem}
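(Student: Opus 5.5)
Most of Theorem \ref{tdThm} follows at once from Lemma \ref{tdLem}, and the only genuinely new ingredient is completeness. I would proceed in three steps.

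\textbf{Step 1 (pseudo-ultrametric).} By Lemma \ref{tdLem} (iv), $\tilde d$ is well defined on $\Gamma_w\times\Gamma_w$. Since $\sim$ refines $\sim_w$, it is also well defined on $\Gamma\times\Gamma$, through the surjection $\Gamma\to\Gamma_w$. Nonnegativity, $\tilde d(\fc,\fc)=0$, symmetry and the strong triangle inequality are items (i), (ii), (iii) and (v) of that lemma. The passage to the quotient by $\tilde d=0$ is verbatim the remark following Lemma \ref{ultraLem}. The strong triangle inequality makes $\tilde d=0$ an equivalence relation, and two applications of (v) show that $\tilde d$ descends to the quotient and separates its points there.

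\textbf{Step 2 (completeness).} I would copy the block construction of Lemma \ref{complLem}, replacing the summands $|\langle\varphi_{n,j},\varphi_{m,j}\rangle-1|$ by $1-|\langle\varphi_{n,j},\varphi_{m,j}\rangle|$.
\begin{itemize}
\item Choose unit-norm representatives, which is legitimate by Lemma \ref{unitRep} and the remark on normalization. Each summand then lies in $[0,1]$, and this replaces \eqref{unitBound}.
\item Extract a subsequence with $\tilde d(\fc_m,\fc_{m'})<2^{-k}$ for $m,m'\geq n_k$.
\item Using the same finite sets of test exponents $P_l$, choose block endpoints $J_l$ so that the tails $\sum_{j\geq J_l}j^{-p}\,(1-|\langle\varphi_{n_k,j},\varphi_{n_l,j}\rangle|)\leq 2^{-l}$ for all $k<l$ and all $p\in P_l$ with $p>2^{-k}$.
\item Splice the blocks together: set $\varphi^\ast_j=\varphi_{n_l,j}$ for $j\in B_l$.
\end{itemize}
Step 3 of Lemma \ref{complLem} then carries over word for word and gives $\tilde d(\varphi^\ast,\varphi_{n_k})\leq 2^{-k}$. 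That argument uses only nonnegativity and boundedness of the summands, the tail bounds, and the upper-interval property \eqref{upSet}, and the last holds for $\tilde d$ by the same monotonicity in $p$. Step 4 uses only the strong triangle inequality, now Lemma \ref{tdLem} (v), so the full sequence converges to $[\varphi^\ast]$.

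\textbf{Step 3 (transfer).} Completeness on $\Gamma$ and on $\Gamma_w$ is the same statement, because $\tilde d$ factors through $\Gamma\to\Gamma_w$. A Cauchy sequence in $\Gamma_w$ lifts to a Cauchy sequence in $\Gamma$, and the limit $[\varphi^\ast]$ projects back to a limit. Completeness of the quotient follows exactly as in the proof of Theorem \ref{mainThm}: lift a Cauchy sequence, take its limit, and project.

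\textbf{Main obstacle.} The only point requiring care is whether anything in Lemma \ref{complLem} secretly used the linear structure of $\langle\cdot,\cdot\rangle-1$, as opposed to mere nonnegativity of the summands. The Cauchy-sequence construction never invokes Lemma \ref{transferLem}; that lemma is needed only for the triangle inequality, which has already been supplied by \eqref{quasiIneq}. So I expect the argument to be routine once this is checked.
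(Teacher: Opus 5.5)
Your proposal is correct and follows the paper's own proof. Like the paper, you take the pseudo-ultrametric properties from Lemma \ref{tdLem} and rerun the block construction of Lemma \ref{complLem} with $1-|\langle\cdot,\cdot\rangle|$ in place of $|\langle\cdot,\cdot\rangle-1|$, observing that it uses only unit-norm representatives, the uniform bound on the summands, \eqref{upSet}, and the strong triangle inequality (Lemma \ref{tdLem} (v)).
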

\begin{proof}
In view of Lemma \ref{tdLem}, it remains to prove completeness, and the proof of Lemma \ref{complLem} applies verbatim, with $|\langle\cdot,\cdot\rangle-1|$ replaced throughout by $1-|\langle\cdot,\cdot\rangle|$. Indeed, that proof used only the following ingredients: the existence of unit-norm representatives (Lemma \ref{unitRep}, whose output represents, a fortiori, the given weak class); the uniform bound on the individual terms (here $1-|\langle\varphi^\ast_j,\varphi_{n_k,j}\rangle|\leq1$); the implication \eqref{upSet}, which holds for any series with nonnegative terms; and the strong triangle inequality, provided here by Lemma \ref{tdLem} (v).
\end{proof}

\begin{remark}
The example following Lemma \ref{basicLem} applies unchanged -- the inner products there are real and nonnegative -- so $\tilde d$ likewise fails to separate weak classes, and the passage to the quotient in Theorem \ref{tdThm} is necessary. Likewise, the partial-sum characterization \eqref{partialSums} and the remark on enumeration dependence at the end of Section \ref{ultraSec} apply to $\tilde d$ without change.
\end{remark}

\section{\label{unitSec}Product unitaries}

For a Hilbert space $\cH$ we let $\mathrm{B}(\cH)$ denote the von Neumann algebra of all linear bounded operators on $\cH$. In particular, $\mathrm{B}(\bigotimes_\iota\cH_\iota)$ denotes the von Neumann algebra of all linear bounded operators on the complete tensor product $\bigotimes_\iota\cH_\iota$.

Let $U_\iota$ be a unitary operator on $\cH_\iota$, for $\iota\in I$. According to \cite{vN39}, \cite{N70},
\begin{equation}
\mathbf{U}(\otimes\varphi_\iota)=\otimes\, U_\iota\varphi_\iota
\end{equation}
defines a unitary operator on $\bigotimes\cH_\iota$. Since $\langle U_\iota\varphi_\iota,U_\iota\psi_\iota\rangle=\langle\varphi_\iota,\psi_\iota\rangle$ and $\|U_\iota\varphi_\iota\|=\|\varphi_\iota\|$, the map $(\varphi_\iota)\mapsto(U_\iota\varphi_\iota)$ carries $C_0$-sequences to $C_0$-sequences and preserves the relation $\sim$. It therefore induces a well defined map on $\Gamma$,
\begin{equation*}
\mathbf{U}\fc=[(U_\iota\varphi_\iota)],
\end{equation*}
for $\fc=[\varphi]$, and $\mathbf{U}$ maps the incomplete tensor product $\bigotimes^\fc\cH_\iota$ unitarily onto $\bigotimes^{\mathbf{U}\fc}\cH_\iota$; see also \cite{AN72} for related material.

We now specialize to the setting of Section \ref{ultraSec}: $I=\bN$, $\cH_j=\cH$, and $U_j=U$ for all $j$, where $U$ is a fixed unitary operator on $\cH$. Introduce the quantity
\begin{equation}\label{deltaDef}
\delta(U)=\inf_{\substack{x\in\cH\\ \|x\|=1}}\;\big|\langle x,Ux\rangle-1\big|.
\end{equation}

\begin{lemma}\label{specLem}
Assume that $\dim\cH=n<\infty$, and let $\lambda_1,\ldots,\lambda_n$ denote the eigenvalues of $U$. Then
\begin{equation}\label{deltaBound}
\delta(U)\geq\min_{1\leq i\leq n}\,(1-\re\,\lambda_i),
\end{equation}
and $\delta(U)>0$ if and only if $1\notin\sigma(U)$.
\end{lemma}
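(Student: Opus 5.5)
Since $U$ is unitary on a finite-dimensional space, it is normal, so we can diagonalize it in an orthonormal eigenbasis $u_1,\ldots,u_n$ with $Uu_i=\lambda_iu_i$ and $|\lambda_i|=1$. For a unit vector $x=\sum_i c_iu_i$ we have $\sum_i|c_i|^2=1$, and
\begin{equation*}
\langle x,Ux\rangle=\sum_{i=1}^n|c_i|^2\lambda_i .
\end{equation*}
So $\langle x,Ux\rangle$ is a convex combination of the eigenvalues, i.e.\ a point of their convex hull. The numerical range of $U$ is therefore exactly the convex hull of $\sigma(U)$, and $\delta(U)$ is the distance from $1$ to this polygon.

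For the inequality \eqref{deltaBound}, we bound the modulus from below by the real part:
\begin{equation*}
|\langle x,Ux\rangle-1|\geq 1-\re\,\langle x,Ux\rangle=\sum_i|c_i|^2(1-\re\,\lambda_i)\geq\min_i(1-\re\,\lambda_i).
\end{equation*}
Taking the infimum over unit vectors $x$ gives \eqref{deltaBound}.

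For the equivalence, suppose first that $1\in\sigma(U)$. Take $x$ to be a unit eigenvector for the eigenvalue $1$. Then $\langle x,Ux\rangle=1$, so $\delta(U)=0$.

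Conversely, suppose $1\notin\sigma(U)$. Each $\lambda_i$ lies on the unit circle and differs from $1$. Since $1$ is the only point of the circle with real part $1$, we get $\re\,\lambda_i<1$ for every $i$. The minimum in \eqref{deltaBound} is taken over finitely many strictly positive numbers, so it is positive, and hence $\delta(U)>0$.

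Nothing here is hard. The only subtle point is that the last step uses finite dimensionality through the finiteness of the spectrum, which is what makes the minimum strictly positive. In infinite dimensions, eigenvalues could accumulate at $1$ even when $1\notin\sigma_p(U)$, and $\delta(U)$ could then be $0$.
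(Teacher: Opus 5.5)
Your proposal is correct and follows essentially the same route as the paper: diagonalize $U$ in an orthonormal eigenbasis, bound $|\langle x,Ux\rangle-1|$ below by $1-\re\,\langle x,Ux\rangle=\sum_i|c_i|^2(1-\re\,\lambda_i)$, and use that $\re\,\lambda_i=1$ forces $\lambda_i=1$ on the unit circle, with an eigenvector for the eigenvalue $1$ giving $\delta(U)=0$. Your numerical-range remark and the comment on infinite dimensions are accurate but not needed for the argument.
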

\begin{proof}
Since $U$ is unitary, it is diagonalizable; expand a unit vector $x$ in an orthonormal basis of eigenvectors, $x=\sum_i x_i e_i$, with $Ue_i=\lambda_i e_i$. Then
\begin{equation*}
\begin{split}
\big|\langle x,Ux\rangle-1\big|&\geq\re\big(1-\langle x,Ux\rangle\big)\\
&=\sum_{i=1}^n\,(1-\re\,\lambda_i)\,|x_i|^2\\
&\geq\min_{1\leq i\leq n}\,(1-\re\,\lambda_i),
\end{split}
\end{equation*}
since $|\lambda_i|=1$ implies $1-\re\,\lambda_i\geq0$, and $\sum_i|x_i|^2=1$. This proves \eqref{deltaBound}. As $|\lambda_i|=1$, we have $\re\,\lambda_i=1$ if and only if $\lambda_i=1$; hence the right hand side of \eqref{deltaBound} is positive if and only if $1\notin\sigma(U)$. Conversely, if $1\in\sigma(U)$ and $e$ is a corresponding unit eigenvector, then $\langle e,Ue\rangle=1$ and $\delta(U)=0$.
\end{proof}

\begin{theorem}\label{unitDistLem}
Assume that $\delta(U)>0$. Then, for every $\fc\in\Gamma$,
\begin{equation}
d(\fc,\mathbf{U}\fc)=1.
\end{equation}
\end{theorem}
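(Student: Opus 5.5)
By Lemma \ref{basicLem}(i) we already have $d(\fc,\mathbf{U}\fc)\leq 1$, so only the lower bound $d(\fc,\mathbf{U}\fc)\geq 1$ needs proof. The plan is to show that the terms of the defining series are bounded below by a positive constant for all but finitely many $j$. Then the weighted series diverges for every $p\leq 1$.

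First, by Lemma \ref{unitRep} and Lemma \ref{classInv}, we may choose a representative $\varphi\in\fc$ with $\|\varphi_j\|=1$ for all $j$. The sequence $(U\varphi_j)$ is then a unit-norm $C_0$-sequence representing $\mathbf{U}\fc$. Since $d$ is well defined on classes, we have
\begin{equation*}
d(\fc,\mathbf{U}\fc)=\inf\Big\{p\geq0:\:\sum_{j=1}^\infty\,\frac{1}{j^p}\,|\langle\varphi_j,U\varphi_j\rangle-1|<\infty\Big\}.
\end{equation*}
Next, by the definition \eqref{deltaDef}, each term satisfies $|\langle\varphi_j,U\varphi_j\rangle-1|\geq\delta(U)>0$, because $\varphi_j$ is a unit vector. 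Consequently, for every $p\leq 1$,
\begin{equation*}
\sum_{j}\frac{1}{j^p}\,|\langle\varphi_j,U\varphi_j\rangle-1|\geq\delta(U)\sum_j\frac{1}{j^p}=\infty.
\end{equation*}
So no $p\leq1$ belongs to the set in \eqref{dDef}, and by the upper-interval remark its infimum is at least $1$. Combined with $d\leq1$, this gives $d(\fc,\mathbf{U}\fc)=1$.

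The only step needing care is the reduction to unit-norm representatives. It is legitimate because $d$ is a class function (Lemma \ref{classInv}), and because $\mathbf{U}[\varphi]=[(U\varphi_j)]$ does not depend on the representative, since $\mathbf{U}$ preserves $\sim$. If we worked instead with an arbitrary representative, the bound $\delta(U)$ would apply only after normalizing. We would then have to control the error $|\langle\varphi_j,U\varphi_j\rangle-\langle\hat\varphi_j,U\hat\varphi_j\rangle|\leq const\,|\|\varphi_j\|^2-1|$, which is summable by \eqref{c0Eq}. This gives the same conclusion with the lower bound $\delta(U)/2$ for all but finitely many $j$. Beyond this, no genuine obstacle is expected; the argument is a direct comparison with the harmonic series.
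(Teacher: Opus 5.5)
Your proposal is correct and follows essentially the same route as the paper: pass to a unit-norm representative via Lemma \ref{unitRep} and Lemma \ref{classInv}, bound each term below by $\delta(U)$, compare with the divergent series $\sum_j j^{-p}$ for $p\leq1$, and combine with $d\leq1$ from Lemma \ref{basicLem}(i). Your aside on arbitrary representatives is also sound, since the discrepancy equals $\big|\|\varphi_j\|^2-1\big|\,|\langle\hat\varphi_j,U\hat\varphi_j\rangle|$, which is summable. It is not needed, because the reduction to unit norms is already legitimate.
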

\begin{proof}
By Lemma \ref{unitRep}, choose a representative $\varphi\in\fc$ with $\|\varphi_j\|=1$ for all $j$; then $(U\varphi_j)$ is a representative of $\mathbf{U}\fc$ consisting of unit vectors, and by Lemma \ref{classInv} we may compute $d(\fc,\mathbf{U}\fc)$ from this pair of representatives. By \eqref{deltaDef},
\begin{equation*}
\begin{split}
\big|\langle\varphi_j,U\varphi_j\rangle-1\big|&\geq\delta(U)\\
&>0,
\end{split}
\end{equation*}
for every $j$, and therefore, for every $p\leq1$,
\begin{equation*}
\begin{split}
\sum_{j=1}^\infty\,\frac{1}{j^p}\,\big|\langle\varphi_j,U\varphi_j\rangle-1\big|
&\geq\delta(U)\,\sum_{j=1}^\infty\,\frac{1}{j^p}\\
&=\infty.
\end{split}
\end{equation*}
Hence no $p\leq1$ belongs to the set in \eqref{dDef}, and so $d(\fc,\mathbf{U}\fc)\geq1$. Since $d\leq1$ by Lemma \ref{basicLem} (i), the claim follows.
\end{proof}

\begin{remark}
In particular, by Lemma \ref{specLem}, if $\dim\cH<\infty$ and $1\notin\sigma(U)$, then the product unitary $\mathbf{U}=\bigotimes_j U$ displaces \emph{every} class $\fc\in\Gamma$ to the maximal distance $1$. At the opposite extreme, if $1\in\sigma(U)$ with unit eigenvector $e$, then the constant sequence $\varphi_j=e$ satisfies $\mathbf{U}\fc=\fc$ for $\fc=[\varphi]$.
\end{remark}

\begin{remark}
Since $\langle U\varphi_j,U\psi_j\rangle=\langle\varphi_j,\psi_j\rangle$, the map $\fc\mapsto\mathbf{U}\fc$ acts on $\Gamma$, and on $\Gamma_w$, as an \emph{isometry} of both metrics:
\begin{equation*}
\begin{split}
d(\mathbf{U}\fc,\mathbf{U}\fd)&=d(\fc,\fd),\\
\tilde d(\mathbf{U}\fc,\mathbf{U}\fd)&=\tilde d(\fc,\fd).
\end{split}
\end{equation*}
\end{remark}

\begin{corollary}\label{orbitCor}
Assume that $\dim\cH<\infty$ and that no eigenvalue of $U$ is a root of unity. Then, for every $\fc\in\Gamma$, the orbit $\{\mathbf{U}^k\fc:\:k\in\bZ\}$ is a $1$-separated set in $(\Gamma,d)$: $d(\mathbf{U}^k\fc,\mathbf{U}^l\fc)=1$, for all $k\neq l$.
\end{corollary}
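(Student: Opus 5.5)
The plan is to reduce the claim to Theorem \ref{unitDistLem}, applied to the power $U^{l-k}$. First, by the isometry remark preceding the corollary, for $k\neq l$ we have
\[
d(\mathbf{U}^k\fc,\mathbf{U}^l\fc)=d(\fc,\mathbf{U}^{l-k}\fc),
\]
since applying $\mathbf{U}^{-k}$ to both arguments preserves $d$. Here $\mathbf{U}^{-1}$ is the product unitary built from $U^{-1}=U^*$, and $\mathbf{U}^m$ is the product unitary built from $U^m$; both facts are immediate from the definition $\mathbf{U}\fc=[(U\varphi_j)]$. Setting $m=l-k\neq0$, it therefore suffices to show that $d(\fc,\mathbf{V}\fc)=1$ for the product unitary $\mathbf{V}=\bigotimes_j U^m$.

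By Theorem \ref{unitDistLem}, it is enough to verify that $\delta(U^m)>0$. Since $\dim\cH<\infty$, Lemma \ref{specLem} shows this is equivalent to $1\notin\sigma(U^m)$. The spectrum of $U^m$ is $\{\lambda^m:\lambda\in\sigma(U)\}$; this is clear from diagonalizing $U$ in an orthonormal eigenbasis, as in the proof of Lemma \ref{specLem}. If $1\in\sigma(U^m)$, then $\lambda^m=1$ for some eigenvalue $\lambda$ of $U$, so $\lambda$ would be a root of unity, contrary to the hypothesis. For negative $m$ the same argument works, because $\lambda^m=1$ if and only if $\lambda^{-m}=1$. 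Hence $\delta(U^m)>0$, and Theorem \ref{unitDistLem} gives $d(\fc,\mathbf{V}\fc)=1$.

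No step is a serious obstacle. The only point needing care is checking that the isometry remark and the identification of $\mathbf{U}^m$ with the product of the $U^m$ both hold at the level of classes, including for negative exponents. This follows from the fact that each componentwise unitary map preserves inner products, and hence preserves $\sim$.
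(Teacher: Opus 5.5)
Your proposal is correct and follows the paper's proof essentially verbatim. Both reduce via the isometry remark to $d(\fc,\mathbf{U}^m\fc)$, note that $1\notin\sigma(U^m)$ because the eigenvalues of $U^m$ are $\lambda_i^m$, and conclude from Lemma \ref{specLem} and Theorem \ref{unitDistLem}. The only difference is that you handle negative $m$ explicitly, whereas the paper simply takes $m=l-k>0$, implicitly relying on the symmetry of $d$.
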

\begin{proof}
By the preceding remark, $d(\mathbf{U}^k\fc,\mathbf{U}^l\fc)=d(\fc,\mathbf{U}^{m}\fc)$, where $m=l-k>0$. The eigenvalues of $U^{m}$ are $\lambda_1^{m},\ldots,\lambda_n^{m}$, and since no $\lambda_i$ is a root of unity, $1\notin\sigma(U^{m})$ for every $m\geq1$. The claim follows from Lemma \ref{specLem} and Theorem \ref{unitDistLem}.
\end{proof}

\begin{remark}
Theorem \ref{unitDistLem} has no unconditional analogue for the gauge-invariant metric $\tilde d$: the quantity $\inf_{\|x\|=1}\big(1-|\langle x,Ux\rangle|\big)$ vanishes for \emph{every} unitary $U$, as one sees by taking $x$ to be an eigenvector. This is not a defect of $\tilde d$ but rather a symptom of the phase sensitivity of $d$: for $U=e^{i\theta}I$, Theorem \ref{unitDistLem} yields $d(\fc,\mathbf{U}\fc)=1$ for every class, although $\mathbf{U}$ merely multiplies each component by a phase, so that $\mathbf{U}\fc\sim_w\fc$ and $\tilde d(\fc,\mathbf{U}\fc)=0$. The $\tilde d$-displacement of a class under a product unitary is genuinely class dependent, as the following lemma and example show.
\end{remark}

\begin{lemma}\label{tildeDistLem}
Let $\fc\in\Gamma$ have a representative $\varphi$ with $\|\varphi_j\|=1$ for all $j$.
\begin{itemize}
\item[(i)]{If $\inf_j\big(1-|\langle \varphi_j,U\varphi_j\rangle|\big)>0$, then $\tilde d(\fc,\mathbf{U}\fc)=1$.}
\item[(ii)]{If every $\varphi_j$ is an eigenvector of $U$, then $(U\varphi_j)\sim_w(\varphi_j)$, and $\tilde d(\fc,\mathbf{U}\fc)=0$.}
\end{itemize}
\end{lemma}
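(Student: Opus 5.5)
Both parts follow directly from the definition \eqref{tdDef} once $\tilde d(\fc,\mathbf{U}\fc)$ is computed from a convenient pair of representatives. By Lemma \ref{tdLem}~(iv), $\tilde d$ is well defined on classes, so I will use the given unit-norm representative $\varphi$ of $\fc$ and the sequence $(U\varphi_j)$ as representative of $\mathbf{U}\fc$. This sequence also consists of unit vectors, since $U$ is unitary. For unit vectors the terms of the defining series are $1-|\langle\varphi_j,U\varphi_j\rangle|$, which lie in $[0,1]$.

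For (i), set $\eta=\inf_j\big(1-|\langle\varphi_j,U\varphi_j\rangle|\big)>0$. Then for every $p\leq 1$,
\begin{equation*}
\sum_{j=1}^\infty\frac{1}{j^p}\big(1-|\langle\varphi_j,U\varphi_j\rangle|\big)\geq\eta\sum_{j=1}^\infty\frac{1}{j^p}=\infty .
\end{equation*}
Hence no $p\leq1$ lies in the set in \eqref{tdDef}, and $\tilde d(\fc,\mathbf{U}\fc)\geq1$. Lemma \ref{tdLem}~(i) gives $\tilde d\leq d\leq 1$, so equality holds. This is exactly the argument of Theorem \ref{unitDistLem}, with $\delta(U)$ replaced by the class-dependent bound $\eta$.

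For (ii), suppose $U\varphi_j=\lambda_j\varphi_j$. Since $U$ is unitary and $\|\varphi_j\|=1$, we have $|\lambda_j|=1$. Then $|\langle\varphi_j,U\varphi_j\rangle|=|\lambda_j|\,\|\varphi_j\|^2=1$, so every term of the series in \eqref{weakDef} vanishes. Thus $(U\varphi_j)\sim_w(\varphi_j)$, and Lemma \ref{tdLem}~(ii) gives $\tilde d(\fc,\mathbf{U}\fc)=0$. Equivalently, $(U\varphi_j)=(\lambda_j\varphi_j)$ is a pure-gauge transform of $\varphi$, matching the characterization of weak classes as gauge orbits.

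There is no real obstacle. The only point needing care is to justify computing with these particular representatives, which is what Lemma \ref{tdLem}~(iv) provides.
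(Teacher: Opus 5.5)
Your proof is correct and follows the paper's argument: for (i) you bound the weighted series below by $\eta\sum_j j^{-p}$, exactly as in Theorem \ref{unitDistLem}, and combine this with $\tilde d\leq 1$; for (ii) you observe that eigenvalues of modulus one make every term $1-|\langle\varphi_j,U\varphi_j\rangle|$ vanish. Your explicit appeal to Lemma \ref{tdLem}~(iv), to justify computing $\tilde d$ from the representatives $\varphi$ and $(U\varphi_j)$, is a step the paper leaves implicit.
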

\begin{proof}
(i) is proved exactly as in Theorem \ref{unitDistLem}: the series $\sum_j j^{-p}\big(1-|\langle\varphi_j,U\varphi_j\rangle|\big)$ diverges for every $p\leq1$, while $\tilde d\leq1$ always. For (ii), if $U\varphi_j=\lambda_j\varphi_j$ with $|\lambda_j|=1$, then $|\langle\varphi_j,U\varphi_j\rangle|=1$ for every $j$.
\end{proof}

\begin{example}\label{qubitEx}
Let $\cH=\bC^2$ with orthonormal basis $e_1,e_2$, and let $U=\mathrm{diag}\,(1,e^{i\theta})$, $\theta\in(0,2\pi)$, so that $e_1$ and $e_2$ are the eigenvectors (the ``pointer states'' \cite{Z03}) of $U$. Consider the classes $\fc=[\varphi]$ generated by the unit vectors
\begin{equation*}
\varphi_j=\cos\gamma_j\, e_1+\sin\gamma_j\, e_2,\qquad \gamma_j\in[0,\pi/2].
\end{equation*}
A direct computation gives
\begin{equation*}
\big|\langle\varphi_j,U\varphi_j\rangle\big|^2=1-\frac12\,\sin^2(2\gamma_j)\,(1-\cos\theta),
\end{equation*}
and hence, since 
\begin{equation*}
\begin{split}
\frac12\,(1-t^2)&\leq 1-t\\
&\leq 1-t^2,
\end{split}
\end{equation*}
for $t\in[0,1]$, we have
\begin{equation*}
1-\big|\langle\varphi_j,U\varphi_j\rangle\big|\asymp\sin^2(2\gamma_j)\,(1-\cos\theta),
\end{equation*}
with universal constants. Consequently:
\begin{itemize}
\item[(i)]{if $\gamma_j=\pi/4$ for all $j$, i.e., each component is an equal superposition of the pointer states, then $\tilde d(\fc,\mathbf{U}\fc)=1$;}
\item[(ii)]{if $\sin^2(2\gamma_j)=j^{-s}$ with $0<s<1$, then $1-|\langle\varphi_j,U\varphi_j\rangle|\asymp j^{-s}$, and $\tilde d(\fc,\mathbf{U}\fc)=1-s$;}
\item[(iii)]{if $\sin^2(2\gamma_j)=j^{-s}$ with $s>1$, then $(U\varphi_j)\sim_w(\varphi_j)$, and $\tilde d(\fc,\mathbf{U}\fc)=0$;}
\item[(iv)]{if $\gamma_j=0$ for all $j$, then $\mathbf{U}\fc=\fc$.}
\end{itemize}
Thus $\tilde d(\fc,\mathbf{U}\fc)$ assumes every value in $[0,1]$ as the class varies: the closer the components lie to the eigenrays of $U$, the more slowly $\mathbf{U}$ separates the class from itself. Moreover, in case (i), $\langle\varphi_j,U^m\varphi_j\rangle=\frac12\,(1+e^{im\theta})$, so that $|\langle\varphi_j,U^m\varphi_j\rangle|=|\cos(m\theta/2)|$. If $\theta$ is an irrational multiple of $2\pi$, then $1-|\cos(m\theta/2)|>0$ for every $m\geq1$, uniformly in $j$, and Lemma \ref{tildeDistLem} (i), together with the isometry property of $\mathbf{U}$, shows that the entire orbit $\{\mathbf{U}^k\fc:\:k\in\bZ\}$ is a $1$-separated set in $(\Gamma_w,\tilde d)$.
\end{example}

\begin{example}\label{bellEx}
Von Neumann's original nontrivial example of an infinite tensor product -- singled out in \S\S 7.3--7.5 of the memoir \cite{vN39} as a method of generating examples of factors -- is built not from single qubits but from \emph{pairs} of qubits. Take $\cH_j=\bC^2\otimes\bC^2$, each factor spanned by $e_1,e_2$, and let the reference sequence consist of the maximally entangled (Bell) vector
\begin{equation*}
\begin{split}
\varphi_j&=\Phi^+\\
&=\frac1{\sqrt2}\,(e_1\otimes e_1+e_2\otimes e_2),
\end{split}
\end{equation*}
and
\begin{equation*}
\fc=[\varphi]\in\Gamma_w.
\end{equation*}
The space $\cH_{\mathrm{univ}}=\bigotimes_j\cH_j$ then carries two tensor structures: the \emph{site} structure just written, and the \emph{party} structure $\cH_{\mathrm{univ}}\cong\cH_A\otimes\cH_B$ obtained by collecting all first factors into $\cH_A=\bigotimes_j\bC^2$ and all second factors into $\cH_B=\bigotimes_j\bC^2$. In the party structure $\Phi_\varphi=\otimes_j\Phi^+$ is a maximally entangled vector between Alice's system $\cH_A$ and Bob's system $\cH_B$ -- the infinitely entangled state of \cite{KSW03}.

Consider a local dynamics on Alice's side, $\mathbf{U}_A=\bigotimes_j(U_j\otimes I)$ with each $U_j$ unitary on $\bC^2$. This is a product unitary on $\bigotimes_j\cH_j$ in the sense of Section \ref{unitSec} (with $\dim\cH_j=4$), and its site overlaps are governed by the elementary identity
\begin{equation}\label{traceIdentity}
\langle\Phi^+,(U\otimes I)\Phi^+\rangle=\frac12\,\tr(U),
\end{equation}
valid for every operator $U$ on $\bC^2$ (expand both sides in the basis $\{e_k\otimes e_l\}$). Hence the displacement of $\fc$ under Alice's dynamics is the convergence exponent \eqref{rhoDef} of the sequence of \emph{normalized character defects}:
\begin{equation}
\label{bellDisplacement}
\tilde d(\fc,\mathbf{U}_A\fc)=\cexp\big(\big\{\,1-\frac12\,|\tr(U_j)|\,\big\}_j\big).
\end{equation}
Since $|\tfrac12\tr(U)|\leq1$, with equality if and only if $U$ is a scalar, the classes fixed by $\mathbf{U}_A$ are precisely those whose gates are pure phases $U_j=e^{i\phi_j}I$; every genuinely nontrivial local gate displaces. Writing $U_j\in SU(2)$ as a rotation by angle $\theta_j$, one has $\tfrac12\tr(U_j)=\cos(\theta_j/2)$, so $1-\tfrac12|\tr(U_j)|=1-|\cos(\theta_j/2)|$ -- formally the defect of Example \ref{qubitEx}, with the local rotation angle in place of the abstract mixing angle. The case analysis there transfers verbatim: $\tilde d(\fc,\mathbf{U}_A\fc)=1$ when the angles $\theta_j$ stay bounded away from $0$ and $2\pi$; $\tilde d(\fc,\mathbf{U}_A\fc)=1-s$ when $1-|\cos(\theta_j/2)|\asymp j^{-s}$ with $0<s<1$; and $\tilde d(\fc,\mathbf{U}_A\fc)=0$ when the gates approach phases fast enough that $\sum_j\big(1-|\cos(\theta_j/2)|\big)<\infty$. In particular the displacement realizes every value in $[0,1]$. Under the maximally entangled reference, then, the site overlap of Alice's dynamics is simply the normalized character of her local gate, and $\tilde d$ grades the polynomial rate at which entanglement-assisted local operations drive the global state through inequivalent sectors.
\end{example}

\begin{remark}
In the party structure, $\Phi_\varphi$ is cyclic and separating for Alice's quasi-local algebra $\mathfrak{A}_A=\bigotimes_j\big(B(\bC^2)\otimes I\big)$, whose commutant is Bob's $\mathfrak{A}_B=\bigotimes_j\big(I\otimes B(\bC^2)\big)$: it is the trace vector of the hyperfinite factor of type $\mathrm{II}_1$, and $\mathbf{U}_A$ implements Alice's dynamics in its standard form. This is the substance of \S\S 7.3--7.5 of the memoir \cite{vN39}, where the four maximally entangled vectors of $\bC^2\otimes\bC^2$ -- the basis now known as the Bell basis -- serve to exhibit ``particularly simple examples'' of factors of type $\mathrm{II}_1$.
\end{remark}

\begin{remark}
The interplay between the two extremes of Example \ref{qubitEx} -- eigenvector components, which are left in place, versus superposition-rich components, which are displaced maximally -- is precisely the mechanism exploited in the branching model of the next section.
\end{remark}

\section{\label{everettSec}A toy model of Everettian branching}

In this section we describe the physical picture that motivates the metric structures introduced above. The setting is a mathematical caricature of the many-worlds (Everett) interpretation of quantum mechanics \cite{E57}: the ``universe'' is modeled by the complete infinite tensor product
\begin{equation*}
\cH_{\mathrm{univ}}=\bigotimes_{j\in\bN}\cH_j,\qquad \cH_j=\bC^2,
\end{equation*}
its instantaneous pure states by vectors in $\cH_{\mathrm{univ}}$, and its ``worlds'' by the sectors of $\cH_{\mathrm{univ}}$, i.e., by the incomplete tensor products $\cH_\fc=\bigotimes^\fc\cH_j$, $\fc\in\Gamma$. Time evolution proceeds in discrete steps, each step implemented by a unitary operator on $\cH_{\mathrm{univ}}$. The results of Sections \ref{ultraSec} -- \ref{unitSec} then equip the set of worlds with a quantitative geometry: $d$, or rather its gauge-invariant refinement $\tilde d$, measures how fast two worlds separate.

\subsection{Sectors and local observables}

The physical content of the sector decomposition rests on the following elementary but fundamental facts. Call an operator on $\cH_{\mathrm{univ}}$ \emph{local} if it is of the form $A=B\otimes I$, where $B$ is a bounded operator on $\bigotimes_{j\in F}\cH_j$ for some finite $F\subset\bN$, and $I$ is the identity on the remaining factors; call it \emph{quasi-local} if it is a norm limit of local operators. The quasi-local operators form a $C^\ast$-algebra $\mathfrak{A}$, the algebra of observables accessible to observers who can only ever manipulate finitely many degrees of freedom.

\begin{proposition}\label{sectorProp}
\begin{itemize}
\item[(i)]{Every quasi-local operator leaves every sector $\cH_\fc$ invariant.}
\item[(ii)]{If $\fc\neq\fd$, then $\cH_\fc\perp\cH_\fd$.}
\item[(iii)]{Consequently, $\langle\Phi,A\Psi\rangle=0$, for every $A\in\mathfrak{A}$ and $\Phi\in\cH_\fc$, $\Psi\in\cH_\fd$ with $\fc\neq\fd$.}
\end{itemize}
\end{proposition}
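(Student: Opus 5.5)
We will work directly with von Neumann's construction of the complete tensor product. There, $\cH_\fc$ is the closed linear span of the product vectors $\otimes\varphi_j$ with $\varphi\in\fc$. The inner product of two product vectors is $\langle\otimes\varphi_j,\otimes\psi_j\rangle=\prod_j\langle\varphi_j,\psi_j\rangle$ when this product quasi-converges, and $0$ otherwise.

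We prove (ii) first. Take $\varphi\in\fc$ and $\psi\in\fd$ with $\fc\neq\fd$, so that $\sum_j|\langle\varphi_j,\psi_j\rangle-1|=\infty$. We invoke von Neumann's result (\cite{vN39}, Lemma 3.3.x / the definition of the inner product in Ch.~3) that a product $\prod z_j$ is quasi-convergent with nonzero value only if $\sum|z_j-1|<\infty$. Hence $\langle\otimes\varphi_j,\otimes\psi_j\rangle=0$. If some factor vanishes or one of the norms diverges to zero, the product vector is $0$ and there is nothing to prove. By sesquilinearity and continuity, orthogonality extends from generators to the closed spans. This gives $\cH_\fc\perp\cH_\fd$.

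For (i), consider first a local operator $A=B\otimes I$ with $F\Subset\bN$. It suffices to show that $A(\otimes\varphi_j)\in\cH_\fc$ for $\varphi\in\fc$. We identify $\bigotimes_j\cH_j$ with $\big(\bigotimes_{j\in F}\cH_j\big)\otimes\bigotimes_{j\notin F}\cH_j$, and note that this identification carries $\cH_\fc$ to $\big(\bigotimes_{F}\cH_j\big)\otimes\bigotimes^{\fc'}_{j\notin F}\cH_j$, where $\fc'$ is the class of the tail $(\varphi_j)_{j\notin F}$. Both the identification and the statement about $\cH_\fc$ are from von Neumann's associativity theorem. The justification for the second part is that changing finitely many components does not change the class, as observed after Lemma~\ref{equivLem}. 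Then $B\otimes I$ visibly preserves this subspace, since $B$ acts only on the finite factor. Next, we pass to norm limits. If $A_n\to A$ in norm and each $A_n$ leaves the closed subspace $\cH_\fc$ invariant, then $A\xi=\lim A_n\xi\in\cH_\fc$ for all $\xi\in\cH_\fc$. Finally, (iii) is immediate: $A\Psi\in\cH_\fd$ by (i), and $\cH_\fd\perp\cH_\fc$ by (ii).

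The main obstacle is the identification step in (i). That is, we must make rigorous that $\cH_\fc$ is of the form ``finite tensor factor $\otimes$ incomplete tensor product of the tail'', and that $B\otimes I$ is well defined on the complete product. The cleanest route is a direct check on generators. Write $B$ applied to $\otimes_{j\in F}\varphi_j$ as a finite sum $\sum_k\otimes_{j\in F}\eta^{(k)}_j$, using a finite basis expansion when $\dim\cH_j<\infty$; in general, use a norm-approximation by finite sums of elementary tensors. Then $A(\otimes\varphi_j)=\sum_k\otimes\chi^{(k)}_j$, where $\chi^{(k)}$ agrees with $\varphi$ off $F$. Each $\chi^{(k)}$ is a $C_0$-sequence with $\chi^{(k)}\sim\varphi$, and the approximation error is controlled by $\prod_{j\notin F}\|\varphi_j\|$. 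Hence the image lies in $\cH_\fc$.
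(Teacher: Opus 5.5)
Your proof is correct and follows essentially the same route as the paper. For (ii) you use von Neumann's orthogonality of product vectors built from inequivalent $C_0$-sequences, unpacking it through the quasi-convergence criterion where the paper simply cites it. For (i) you do exactly what the paper does: approximate $B(\otimes_{j\in F}\varphi_j)$ by finite sums of elementary tensors, so that $A(\otimes\varphi_j)$ is a limit of combinations of product vectors differing from $\varphi$ only on $F$ and hence lies in $\cH_\fc$, and then pass to norm limits, with (iii) obtained by combining (i) and (ii).
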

\begin{proof}
Statement (ii) is von Neumann's theorem that product vectors built from inequivalent $C_0$-sequences are orthogonal \cite{vN39}. For (i), let $A=B\otimes I$ be local, supported on the finite set $F$, and let $\otimes\,\psi_j$ be a product vector with $\psi\in\fc$. The vector $B\big(\otimes_{j\in F}\,\psi_j\big)$ is a limit of finite linear combinations of elementary tensors in $\bigotimes_{j\in F}\cH_j$, and therefore $A(\otimes\,\psi_j)$ is a limit of finite linear combinations of product vectors, each of which differs from $\psi$ only in the components indexed by $F$. All of these belong to $\fc$, and hence $A(\otimes\,\psi_j)\in\cH_\fc$. Since $\cH_\fc$ is the closed span of such product vectors and $A$ is bounded, $A\,\cH_\fc\subset\cH_\fc$; the property passes to norm limits of local operators. Statement (iii) follows from (i) and (ii).
\end{proof}

Proposition \ref{sectorProp} states that the sector decomposition is a \emph{superselection structure relative to local observations}: a superposition of vectors from different sectors is, for every quasi-local measurement, indistinguishable from the corresponding mixture, since all interference terms vanish identically. Once two branches of the universal state vector inhabit different sectors, no observer confined to finitely many degrees of freedom can detect their relative phase, or recombine them.

One qualification is essential. Distinct sectors need not represent physically distinct worlds. If $\varphi\sim_w\psi$ but $\varphi\not\sim\psi$, the sectors $\cH_{[\varphi]}$ and $\cH_{[\psi]}$ are orthogonal, and yet the corresponding product states of $\mathfrak{A}$ are quasi-equivalent: they generate unitarily equivalent GNS representations and belong to a common folium. Indeed, it is known that the states of $\mathfrak{A}$ defined by two unit-norm product vectors are quasi-equivalent if and only if the underlying sequences are weakly equivalent, i.e., if and only if $\sum_j\big(1-|\langle\varphi_j,\psi_j\rangle|\big)<\infty$ (see \cite{P67}; cf.\ \cite{KSW03}). Weakly \emph{inequivalent} product states are, on the contrary, disjoint. The physically meaningful notion of a world is therefore a \emph{weak} class in $\Gamma_w$, the members of one weak class being descriptions of the same world differing by unobservable bookkeeping (componentwise phases), and the physically meaningful metric on the set of worlds is the gauge-invariant $\tilde d$ of Section \ref{weakSec}.

\subsection{Branching}

A product unitary $\mathbf{U}=\bigotimes_j U_j$ does not, by itself, produce branching: by the results of Section \ref{unitSec}, it maps each sector unitarily onto another, so that a product initial state evolves into a \emph{single}, in general different, world -- a relabeling, not a splitting. Branching requires conditioned dynamics. The minimal model couples one distinguished ``system'' qubit $\cH_S=\bC^2$, with orthonormal basis $\{{\ket\uparrow},{\ket\downarrow}\}$, to the environment $\cH_{\mathrm{univ}}$ via the controlled product unitary
\begin{equation}\label{branchV}
\begin{split}
V&=P_\uparrow\otimes\mathbf{U}+P_\downarrow\otimes\mathbf{U}',\\
\mathbf{U}&=\bigotimes_{j}U_j,\\
\mathbf{U}'&=\bigotimes_{j}U'_j,
\end{split}
\end{equation}
where $P_\uparrow$, $P_\downarrow$ are the orthogonal projections onto the basis vectors and $U_j,U'_j$ are unitaries on $\cH_j$. One verifies immediately that $V$ is unitary. Acting on an initial state with the system in a superposition and the environment in the product state $\Phi_\varphi=\otimes\,\varphi_j$,
\begin{equation}\label{branchEq}
V\Big(\big(\alpha{\ket\uparrow}+\beta{\ket\downarrow}\big)\otimes\Phi_\varphi\Big)
=\alpha\,{\ket\uparrow}\otimes\mathbf{U}\Phi_\varphi+\beta\,{\ket\downarrow}\otimes\mathbf{U}'\Phi_\varphi.
\end{equation}
The two branches are the environmental records $(U_j\varphi_j)$ and $(U'_j\varphi_j)$, inhabiting the sectors $\mathbf{U}\fc$ and $\mathbf{U}'\fc$ with $\fc=[\varphi]$. Whether they represent one world or two is decided by a single summability condition.

\begin{proposition}\label{branchingCriterionProp}
Let $\fc=[\varphi]_w\in\Gamma_w$ be represented by a unit-norm sequence, and let $V$ be the controlled product unitary \eqref{branchV}. Put
\begin{equation}\label{branchDefect}
\begin{split}
W_j&=U_j^\ast U'_j,\\
b_j&=1-\big|\langle\varphi_j,W_j\varphi_j\rangle\big|.
\end{split}
\end{equation}
Then the two branches in \eqref{branchEq} lie in the same weak class if $\sum_j b_j<\infty$, and in distinct weak classes if $\sum_j b_j=\infty$. In the latter case the branch-separation exponent is
\begin{equation}\label{branchExpFormula}
\begin{split}
\tilde d(\mathbf{U}\fc,\mathbf{U}'\fc)
&=\inf\Big\{p\geq 0:\;\sum_j \frac{b_j}{j^p}<\infty\Big\}\\
&=\limsup_{N\to\infty}\,\frac{\log^+\sum_{j\leq N}b_j}{\log N}.
\end{split}
\end{equation}
\end{proposition}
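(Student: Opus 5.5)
The plan is to reduce everything to the definitions \eqref{weakDef} and \eqref{tdDef} by computing the site overlaps of the two branch sequences directly. The branches are represented by the $C_0$-sequences $(U_j\varphi_j)$ and $(U'_j\varphi_j)$. Both consist of unit vectors, since each $U_j,U'_j$ is unitary and $\|\varphi_j\|=1$. The first step is the identity
\begin{equation*}
\langle U_j\varphi_j,U'_j\varphi_j\rangle=\langle\varphi_j,U_j^\ast U'_j\varphi_j\rangle=\langle\varphi_j,W_j\varphi_j\rangle,
\end{equation*}
which gives $\big|\,|\langle U_j\varphi_j,U'_j\varphi_j\rangle|-1\big|=1-|\langle\varphi_j,W_j\varphi_j\rangle|=b_j$ for every $j$. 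So the series defining $\sim_w$ for the pair of branch sequences is exactly $\sum_j b_j$, and the weighted series in \eqref{tdDef} is $\sum_j j^{-p}b_j$.

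The two assertions about weak classes then follow from \eqref{weakDef}. If $\sum_j b_j<\infty$, the branch sequences are weakly equivalent. If the series diverges, they are not.

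One point needs checking: the classes $\mathbf{U}\fc$ and $\mathbf{U}'\fc$ must not depend on the chosen representative of $\fc$, and $\tilde d$ must be computable from these particular representatives. I would handle this by noting that the map $\varphi\mapsto(U_j\varphi_j)$ preserves all inner products componentwise. Hence it preserves $\sim_w$, so it induces a well-defined map on $\Gamma_w$. By Lemma \ref{tdLem} (iv), $\tilde d(\mathbf{U}\fc,\mathbf{U}'\fc)$ may then be evaluated on the pair $(U_j\varphi_j)$, $(U'_j\varphi_j)$. This yields the first equality in \eqref{branchExpFormula}.

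The second equality in \eqref{branchExpFormula} is the partial-sum characterization \eqref{partialSums}, equivalently the Cauchy--Hadamard form in \eqref{rhoDef}. It applies to $a_j=b_j$ because $b_j\in[0,1]$ is bounded and nonnegative.

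I expect no step to be a genuine obstacle. The only subtlety is the well-definedness just discussed, namely that the branch classes and their distance do not depend on the representative of $[\varphi]_w$, and Lemma \ref{tdLem} (iv) together with the isometry of product unitaries settles it.
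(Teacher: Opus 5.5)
Your proposal is correct and follows the paper's proof. In both, the identity $\langle U_j\varphi_j,U'_j\varphi_j\rangle=\langle\varphi_j,W_j\varphi_j\rangle$ makes the weak-equivalence series equal to $\sum_j b_j$ and the weighted series in \eqref{tdDef} equal to $\sum_j j^{-p}b_j$, and \eqref{partialSums} then gives the second equality. The only cosmetic difference is that the paper writes the middle step as $\tilde d(\mathbf{U}\fc,\mathbf{U}'\fc)=\tilde d(\fc,\mathbf{W}\fc)$ with $\mathbf{W}=\bigotimes_j W_j$. You instead evaluate directly on the branch representatives and make the independence from the choice of representative explicit via Lemma \ref{tdLem}~(iv), which the paper leaves implicit.
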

\begin{proof}
Since each $U_j$ is unitary, $|\langle U_j\varphi_j,U'_j\varphi_j\rangle|=|\langle\varphi_j,W_j\varphi_j\rangle|=1-b_j$, so the series defining the weak equivalence $(U_j\varphi_j)\sim_w(U'_j\varphi_j)$ is exactly $\sum_j b_j$; this gives the two alternatives. When the series diverges, $\tilde d(\mathbf{U}\fc,\mathbf{U}'\fc)=\tilde d(\fc,\mathbf{W}\fc)$ with $\mathbf{W}=\bigotimes_j W_j$, and \eqref{branchExpFormula} is the definition of $\tilde d$ followed by the partial-sum characterization \eqref{partialSums}.
\end{proof}

The criterion sorts the conditioned dynamics into three regimes, according to the growth of $B_N=\sum_{j\leq N}b_j$. If $\sum_j b_j<\infty$, there is no branching: the two records are gauge-equivalent descriptions of a single world. If $\sum_j b_j=\infty$ but $B_N=N^{o(1)}$, the records are weakly inequivalent -- genuinely two worlds -- yet their separation lies below every polynomial scale, so $\tilde d=0$; this is the marginal regime revisited in Section \ref{pseudometricRoleSubsec}. If $B_N=N^{\alpha+o(1)}$ with $\alpha\in(0,1]$, then $\tilde d(\mathbf{U}\fc,\mathbf{U}'\fc)=\alpha$, and the branches separate at a definite positive rate.

In the homogeneous case $U_j=U$ and $U'_j=U'$, one has $W_j=W=U^\ast U'$ and $b_j=1-|\langle\varphi_j,W\varphi_j\rangle|$, and Proposition \ref{branchingCriterionProp} connects to the displacement results of Section \ref{unitSec}. If the components $\varphi_j$ lie asymptotically near the eigenrays of $W$ -- the ``pointer configurations'' -- then $b_j$ is summable or of small exponent, and branching is slow or absent; if they are uniformly bounded away from those eigenrays, then $\inf_j b_j>0$, and Lemma \ref{tildeDistLem} gives maximal separation $\tilde d=1$ after a single step. Example \ref{qubitEx} realizes, by tuning this distance, every exponent in $[0,1]$. Thus the ultrametric geometry singles out the pointer basis dynamically: pointer-like records are the least displaced by the conditioned interaction, while generic superpositions are driven into weakly inequivalent branches at positive, often maximal, exponent.

Whenever the separation exponent is positive -- in particular when $\inf_j b_j>0$, so that $\tilde d=1$ -- the two branch states of $\mathfrak{A}$ are disjoint after a \emph{single} step, and, by Proposition \ref{sectorProp}, no quasi-local observable has a nonvanishing matrix element between them. Iterating \eqref{branchV} with fresh system qubits produces, after $n$ steps, $2^n$ branches, pairwise separated whenever the analogous conditions hold for the relevant products of the step unitaries. This is the intended caricature of Everettian branching: worlds split, the splitting is irreversible for all local purposes, and the global evolution remains perfectly unitary.

\subsection{Two-qubit sites: entangled pointers and observer-relative branching}\label{twoQubitSubsec}

The bipartite site $\cH_j=\bC^2\otimes\bC^2$ of Example \ref{bellEx}, with its two tensor structures, supports two branching phenomena that the single-qubit environment cannot host. In both, branching is analyzed by Proposition \ref{branchingCriterionProp}, and the Bell reference $\varphi_j=\Phi^+$ of Example \ref{bellEx} is the natural initial record.

\smallskip
\noindent\emph{Observer-relative branching.} Let the two branches act by party-local unitaries on Alice's half, $\mathbf{U}=\bigotimes_j(A_j\otimes I)$ and $\mathbf{U}'=\bigotimes_j(A'_j\otimes I)$. By the trace identity \eqref{traceIdentity}, with $W_j=A_j^\ast A'_j$, the global separation of the branches is
\begin{equation*}
\tilde d(\mathbf{U}\fc,\mathbf{U}'\fc)=\cexp\big(\big\{\,1-\frac12\,|\tr(A_j^\ast A'_j)|\,\big\}\big),
\end{equation*}
and may be tuned to any value in $[0,1]$. Restrict now either branch to Alice's quasi-local algebra $\mathfrak{A}_A=\bigotimes_j(B(\bC^2)\otimes I)$. For every operator $X$ on Alice's qubit,
\begin{equation}\label{traceInvariance}
\begin{split}
\big\langle (A_j\otimes I)\Phi^+,\,(X\otimes I)\,(A_j\otimes I)\Phi^+\big\rangle
&=\frac12\,\tr(A_j^\ast X A_j)\\
&=\frac12\,\tr(X),
\end{split}
\end{equation}
independent of $A_j$. Symmetrically,
\begin{equation}
\langle(A_j\otimes I)\Phi^+,(I\otimes Y)(A_j\otimes I)\Phi^+\rangle=\frac12\,\tr(Y)
\end{equation}
on Bob's side. Thus both branches restrict to the \emph{same} state -- the trace -- on the whole of $\mathfrak{A}_A$ and on the whole of $\mathfrak{A}_B$: the branching leaves no imprint on either party's quasi-local algebra and is carried entirely in the Alice--Bob correlations, resolved only by the site-local observer whose algebra $\mathfrak{A}$, introduced at the start of this section, sees both wings of each site. This is the nested Wigner's-friend structure of \cite{S26} made exact: a friend confined to one wing of the laboratory has, demonstrably, not branched, while the whole-site observer has, at the exponent above. The single-qubit environment cannot express this, having no cut within a site.

\smallskip
\noindent\emph{Interacting branches and the entangled pointer basis.} Branching under a genuine interaction between the two qubits at a site requires a step unitary that is not of product form. Let $\sigma_x,\sigma_y$, and $\sigma_z$ denote the Pauli matrices. Take the exchange (Heisenberg) coupling
\begin{equation*}
\begin{split}
H_{\mathrm{ex}}&=\sigma_x\otimes\sigma_x+\sigma_y\otimes\sigma_y+\sigma_z\otimes\sigma_z\\
&=2S-I,
\end{split}
\end{equation*}
with $S$ the swap, so that $H_{\mathrm{ex}}$ has eigenvalue $+1$ on the triplet and $-3$ on the singlet. Let one branch evolve by $U_j=\exp(i\tau_j H_{\mathrm{ex}})$ and the other trivially, so that $W_j=\exp(-i\tau_j H_{\mathrm{ex}})$ in Proposition \ref{branchingCriterionProp}. The pointer configurations -- the references with $b_j=0$ -- are the eigenvectors of $H_{\mathrm{ex}}$. Since the Bell basis diagonalizes $S$, hence $H_{\mathrm{ex}}$, it is such an eigenbasis: a Bell reference satisfies $b_j=0$ and does \emph{not} branch, whatever the couplings $\tau_j$. A reference straddling the singlet--triplet split does branch; the extreme case is the unentangled record $\varphi_j=e_1\otimes e_2=\tfrac1{\sqrt2}(\Psi^++\Psi^-)$, where $\Psi^\pm=\tfrac1{\sqrt2}(e_1\otimes e_2\pm e_2\otimes e_1)$ are the triplet and singlet combinations, with $W_j\Psi^+=e^{-i\tau_j}\Psi^+$ and $W_j\Psi^-=e^{3i\tau_j}\Psi^-$. Hence
\begin{equation*}
\begin{split}
\langle\varphi_j,W_j\varphi_j\rangle&=\frac12\big(e^{-i\tau_j}+e^{3i\tau_j}\big)\\
&=e^{i\tau_j}\cos2\tau_j,
\end{split}
\end{equation*}
with $b_j=1-|\cos2\tau_j|$, so that
\begin{equation*}
\tilde d(\mathbf{U}\fc,\fc)=\cexp\big(\{\,1-|\cos2\tau_j|\,\}\big)
\end{equation*}
reads out the sequence of coupling strengths $\tau_j$ exactly as Example \ref{qubitEx} read out rotation angles, realizing every value in $[0,1]$. An interacting branching thus admits von Neumann's Bell basis as a pointer basis: the maximally entangled records are left undisturbed for every coupling, while the unentangled record $e_1\otimes e_2$, straddling the singlet--triplet split, decoheres at the fastest rate its coupling permits. Entangled records can therefore be the dynamically protected ones -- the reverse of the non-interacting dynamics of Example \ref{bellEx}, under which the same Bell reference is displaced by every gate that is not a pure phase. The selection is special to the exchange coupling: for an Ising coupling $\sigma_z\otimes\sigma_z$ the computational (product) basis consists of eigenvectors and is itself a pointer basis, so which basis the interaction dynamically protects depends on the interaction.

\subsection{Decoherence rates}

The numerical value of $\tilde d$ measures a physical rate. Two worlds represented by unit-norm sequences $\varphi,\psi$ are seen by an observer with access to the first $N$ environmental factors through the absolute overlap $R_N=\prod_{j\leq N}|\langle\varphi_j,\psi_j\rangle|$; the associated finite-volume decoherence functional is $D_N=-\log R_N$ (with $D_N=+\infty$ if $R_N=0$). Write $a_j=1-|\langle\varphi_j,\psi_j\rangle|$ and $\Sigma_N=\sum_{j\leq N}a_j$.

\begin{proposition}\label{decohExpProp}
For weak classes $\fc=[\varphi]_w$ and $\fd=[\psi]_w$ represented by unit-norm sequences,
\begin{equation}\label{overlapUpperBound}
R_N\leq e^{-\Sigma_N},\qquad\text{or, equivalently,}\qquad D_N\geq\Sigma_N.
\end{equation}
If moreover $c=\inf_j|\langle\varphi_j,\psi_j\rangle|>0$, then
\begin{equation}\label{DComparable}
\Sigma_N\leq D_N\leq\kappa\,\Sigma_N,\qquad\text{with}\qquad\kappa=\frac{-\log c}{1-c}\,,
\end{equation}
and consequently
\begin{equation}\label{decohExponentFormula}
\begin{split}
\tilde d(\fc,\fd)&=\limsup_{N\to\infty}\,\frac{\log^+\Sigma_N}{\log N}\\
&=\limsup_{N\to\infty}\,\frac{\log^+D_N}{\log N}\,.
\end{split}
\end{equation}
\end{proposition}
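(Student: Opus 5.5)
The plan is to reduce everything to elementary scalar inequalities for $t_j=|\langle\varphi_j,\psi_j\rangle|\in[0,1]$. Since $R_N=\prod_{j\leq N}t_j$ and $a_j=1-t_j$, the first claim \eqref{overlapUpperBound} follows from $\log t\leq t-1$, i.e.\ $-\log t_j\geq 1-t_j=a_j$, valid for $t_j\in(0,1]$. Summing over $j\leq N$ gives $D_N=-\sum_{j\leq N}\log t_j\geq\Sigma_N$. If some $t_j=0$ with $j\leq N$, then $R_N=0\leq e^{-\Sigma_N}$ and $D_N=+\infty$, so the inequality holds trivially.

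For \eqref{DComparable}, assume $c=\inf_j t_j>0$, so $t_j\in[c,1]$ for all $j$. I will use the convexity of $g(t)=-\log t$ on $[c,1]$. The chord through $(c,-\log c)$ and $(1,0)$ is the line $\kappa(1-t)$ with $\kappa=(-\log c)/(1-c)$. A convex function lies below its chords, so $-\log t_j\leq\kappa(1-t_j)=\kappa a_j$. Summing gives $D_N\leq\kappa\Sigma_N$, and the lower bound is the first part. The degenerate case $c=1$ means all $a_j=0$ and $D_N=\Sigma_N=0$, with $\kappa$ read as the limit $1$.

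For \eqref{decohExponentFormula}, the first equality is the partial-sum characterization \eqref{partialSums}, applied to the bounded nonnegative sequence $a_j$. This is legitimate because $\tilde d$ can be computed on unit-norm representatives (by the remark at the start of Section \ref{weakSec}) and is well defined on weak classes (Lemma \ref{tdLem} (iv)). For the second equality, \eqref{DComparable} gives $\log^+\Sigma_N\leq\log^+D_N\leq\log^+\Sigma_N+\log^+\kappa$. This uses $\log^+(xy)\leq\log^+x+\log^+y$ and the monotonicity of $\log^+$. Dividing by $\log N$ and passing to the $\limsup$, the additive constant $\log^+\kappa$ disappears, so the two limsups coincide.

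The only delicate point is the chord bound and its degenerate cases ($c=1$, or $t_j=0$ in the first part), and these are routine. The hypothesis $c>0$ is needed only for the upper comparison. Without it, $D_N$ can be infinite or grow much faster than $\Sigma_N$, so only \eqref{overlapUpperBound} survives.
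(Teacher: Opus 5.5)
Your proposal is correct and follows essentially the same route as the paper: $\log t\le t-1$ gives \eqref{overlapUpperBound}, the bound $-\log t\le\kappa(1-t)$ on $[c,1]$ gives \eqref{DComparable}, and the partial-sum characterization \eqref{partialSums} together with the comparability of $D_N$ and $\Sigma_N$ gives \eqref{decohExponentFormula}; your chord argument for the convex function $-\log t$ is the same fact the paper states as boundedness of $(-\log t)/(1-t)$ by $\kappa$. Your handling of the degenerate cases ($t_j=0$, $c=1$) and the explicit inequality $\log^+D_N\le\log^+\Sigma_N+\log^+\kappa$ only make explicit what the paper leaves implicit.
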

\begin{proof}
Applying $\log t\leq t-1$, valid for $t\in(0,1]$, to $t_j=|\langle\varphi_j,\psi_j\rangle|$ gives $-\log t_j\geq 1-t_j$; summing over $j\leq N$ yields \eqref{overlapUpperBound}. If $t_j\geq c$ for all $j$, then $t\mapsto(-\log t)/(1-t)$ is bounded on $[c,1]$ by $\kappa$, so $-\log t_j\leq\kappa(1-t_j)$, and summation gives \eqref{DComparable}. The first equality in \eqref{decohExponentFormula} is the partial-sum characterization \eqref{partialSums} applied to $(a_j)$; the second holds because \eqref{DComparable} makes $D_N$ and $\Sigma_N$ comparable up to the fixed factor $\kappa$, which does not affect the logarithmic growth exponent.
\end{proof}

Thus, up to subpolynomial corrections and along a subsequence of $N$, the overlap decays like $R_N=\exp(-N^{\tilde d})$: the metric $\tilde d$ is the \emph{decoherence exponent} \cite{Z03} of the pair of worlds, the polynomial rate at which their finite-volume records become orthogonal as the monitored portion of the environment grows. The value $\tilde d=1$ corresponds to the fastest possible separation; small positive values describe branches that become operationally distinct only when a large portion of the environment is monitored; and $\tilde d=0$ combined with weak inequivalence -- the marginal regime of the example following Lemma \ref{basicLem} -- describes worlds that do separate, but at a subpolynomial rate. In this reading, the dependence of $\tilde d$ on the enumeration, noted at the end of Section \ref{ultraSec}, is not a defect: the enumeration encodes the physical structure of the environment (sites along a chain, shells of increasing distance, successive scattering events), and $\tilde d$ measures the decoherence rate \emph{relative to that structure}. Example \ref{qubitEx} exhibits, in the simplest possible setting, the full continuum of decoherence exponents, realized by tuning how far the local states sit from the pointer basis of the interaction; it also shows that if $\theta/2\pi$ is irrational, every two distinct instants of the history $\{\mathbf{U}^k\fc\}$ inhabit mutually disjoint worlds.

\subsection{The role of the pseudometric}\label{pseudometricRoleSubsec}

The metric structure is not merely descriptive of the branching; it organizes it, and its degeneracies carry physical content.

First, the branching process \eqref{branchV} generates a binary tree of worlds. After $n$ steps the state is a superposition of $2^n$ branch vectors, labeled by the friend record $\epsilon\in\{\uparrow,\downarrow\}^n$, the branch $\epsilon$ inhabiting the sector $\mathbf{U}_\epsilon\fc$ obtained by applying the appropriate product of the step unitaries. The strong triangle inequality of Lemma \ref{tdLem}(v) forces these $2^n$ leaves to organize into a nested hierarchy of $\tilde d$-balls: two branches lie in a common ball of radius $r$ precisely when they have not yet separated by more than $r$, and the finer branchings are nested inside the coarser ones. This hierarchical, dendrogram-like structure is exactly what an ultrametric encodes -- a finite ultrametric space is precisely the leaf space of a weighted rooted tree -- and it is the structural reason that an \emph{ultra}metric, and not an ordinary metric, is the natural object here: the space of worlds is the leaf space of the branching tree.

Second, the identity of a world and its metric distance from other worlds are genuinely different pieces of information, and the difference is where the ``pseudo'' in pseudometric resides. The identity of a world is carried by the weak equivalence relation $\sim_w$; the vanishing of $\tilde d$ is strictly coarser. The relation $\tilde d=0$ is, by the ultrametric inequality, itself an equivalence relation, and each of its classes may contain a whole family of pairwise weakly inequivalent -- hence disjoint -- worlds. The passage to the quotient metric space of Theorem \ref{tdThm}, which turns the pseudometric into a genuine metric, therefore identifies physically distinct worlds; it is the wrong operation for the branching analysis, which must be carried out on the un-quotiented space. Far from being an artifact to be removed, the degeneracy locus $\{\tilde d=0\}$ is populated by exactly those branch pairs that have genuinely separated -- disjoint states of $\mathfrak{A}$, no local recombination -- but only at a subpolynomial rate. In the qubit family of Example \ref{qubitEx} with $\sin^2(2\gamma_j)=j^{-s}$, this is the boundary case $s=1$: for $s>1$ the branch does not split at all ($\varphi\sim_w U\varphi$), for $0<s<1$ it splits at rate $\tilde d=1-s$, and exactly at $s=1$ it splits into a distinct world with vanishing exponent. The marginal stratum of the branching transition lies entirely inside the degeneracy set of $\tilde d$, invisible to the metric yet perfectly visible to the superselection structure.

Third, the phase-sensitive metric $d$ of Section \ref{ultraSec}, set aside above in favor of $\tilde d$, measures a complementary resolution rather than an inferior one. Strong equivalence governs the orthogonality of the whole-universe vectors in $\cH_{\mathrm{univ}}$, while weak equivalence governs the disjointness of the quasi-local states on $\mathfrak{A}$; thus $d$ metrizes what an unrestricted global observable can resolve, and $\tilde d$ what a local observer can. The two diverge exactly on the phase, or gauge, sector, and the gap $d-\tilde d$ flags pseudo-branchings that are globally visible but locally invisible. Take $U'=e^{i\theta}U$ in \eqref{branchV}, so that $W=U^\ast U'=e^{i\theta}I$. The two branch environments $\mathbf{U}\Phi_\varphi=\otimes\,U\varphi_j$ and $\mathbf{U}'\Phi_\varphi=\otimes\,e^{i\theta}U\varphi_j$ are then strongly inequivalent, since $\sum_j|e^{i\theta}-1|=\infty$ -- the same phase, applied to infinitely many factors, produces an orthogonal vector -- so that they lie in orthogonal sectors (Proposition \ref{sectorProp}(ii)) and $d(\mathbf{U}\fc,\mathbf{U}'\fc)=1$; their relative phase is in principle accessible to a global observable, which could even recohere them. Yet they are weakly equivalent, $\tilde d(\mathbf{U}\fc,\mathbf{U}'\fc)=0$: for every quasi-local observer they are one and the same world. This is Bell's objection to the Hepp model \cite{B75} in miniature -- global orthogonality without local distinguishability -- and it is precisely why the physically meaningful branching metric is $\tilde d$ rather than $d$. The following remarks develop this locality theme.

\subsection{Discussion}\label{discussionSubsec}

Three remarks locate this toy model within the literature.

\begin{itemize}
\item[(A)] \textit{Locality.} A local unitary, i.e., one supported on finitely many factors, changes finitely many components of a $C_0$-sequence and therefore preserves every class: under strictly local dynamics, the universal state can never leave its sector in finitely many steps. Rigorous models of measurement built on quasi-local dynamics, such as Hepp's \cite{H72}, achieve disjointness of the branches only in the limit $t\to\infty$; Bell's critique \cite{B75} of the Hepp model turns precisely on the fact that at any finite time some (global) observable still detects interference between the branches. Our product unitaries make the complementary idealization: they achieve exact disjointness in a single step, at the price of acting on infinitely many factors simultaneously. A ``time step'' should accordingly be read as an effective description of an interaction that has already propagated through the entire environment.

\item[(B)] \textit{Continuous time.} The discreteness of the time evolution is not merely a convenience; it is forced by the structure of the complete tensor product. A product one-parameter group $\mathbf{U}_t=\bigotimes_j e^{itH}$ fails to be strongly continuous unless the state components are eigenvectors of $H$. Indeed (say, in finite dimensions), if $\|\varphi\|=1$ is not an eigenvector of $H$, then $|\langle\varphi,e^{itH}\varphi\rangle|<1$ for all sufficiently small $t\neq0$, whence, for the constant sequence $\varphi_j=\varphi$, $\sum_j\big(1-|\langle\varphi,e^{itH}\varphi\rangle|\big)=\infty$, and the state at time $t$ is weakly inequivalent to -- indeed, orthogonal to and disjoint from -- the initial one for \emph{every} such $t$; in particular, $\|\mathbf{U}_t\Phi_\varphi-\Phi_\varphi\|^2=2$ there. The pathologies of product one-parameter groups on infinite tensor products were analyzed in \cite{R74}. Time evolution on the full infinite tensor product is thus intrinsically a jump process between sectors.

\item[(C)] \textit{Precedents.} Taking the full infinite tensor product seriously as a state space is not new: it underlies the infinite tensor product extension of loop quantum gravity of Thiemann and Winkler \cite{TW01}, where the sectors appear as superselected subspaces, and, recently, Svozil \cite{S26} has argued that sectorization of infinite tensor products by itself provides a mechanism for the emergence of irreversibility and of apparent state reduction from unitary dynamics, in a framework of nested Wigner's friends qualitatively close to the picture sketched here. What the present paper adds to this circle of ideas is the quantitative layer: the sets of sectors and of worlds are not mere index sets but complete ultrametric spaces, and the dynamics moves states through them at definite rates -- decoherence exponents -- which the metrics $d$ and $\tilde d$ are designed to measure. We do not claim that the model addresses the conceptual difficulties of the Everett interpretation, such as the preferred basis problem or the origin of probabilities \cite{W02}; it is offered as a mathematically controlled setting in which the statement ``the universe branches into inequivalent worlds'' is a theorem rather than a metaphor.
\end{itemize}

\section{The party factor and the Araki--Woods classification}
\label{sec:AW}

Example~\ref{bellEx} exhibited, alongside the site structure of $\cH_{\mathrm{univ}} = \bigotimes_j \bigl(\bC^2 \otimes \bC^2\bigr)$, a second tensor decomposition $\cH_{\mathrm{univ}} \cong \cH_A \otimes \cH_B$ into Alice's and Bob's halves, and observed that under the maximally entangled reference the party structure realizes the hyperfinite factor of type $\mathrm{II}_1$. Replacing the maximally entangled reference by a partially entangled one replaces the $\mathrm{II}_1$ factor by a general ITPFI factor, and the question arises whether the displacement exponent $\tilde d$ of Section~\ref{unitSec} is determined by --- or refines --- the Araki--Woods type of that factor.

This section answers the question. The answer is negative in the naive form and affirmative in a sharper one: the type does \emph{not} determine the displacement of a class under an arbitrary party-local dynamics (Proposition~\ref{prop:AW-neg}), but the displacement under the distinguished \emph{modular} dynamics of the reference state is a complete-classification-theoretic quantity. It detects Connes' invariant $T(M)$ exactly in the Powers case (Theorem~\ref{thm:connesT}), and it grades the Araki--Woods tracial boundary in the asymptotically tracial case (Theorem~\ref{thm:tracial}). All of the computations below rest on a single identity, Lemma~\ref{lem:master}; the results of Sections~\ref{ultraSec}--\ref{unitSec} are used only
through Lemma~\ref{tildeDistLem} and the partial-sum characterization \eqref{partialSums}.

\subsection{The setting}
\label{sec:AW-setting}

Throughout this section $\cH_j = \bC^2 \otimes \bC^2$ with the notation of Example~\ref{bellEx}, and we fix a sequence of \emph{weights} $\mu = (\mu_j)$, $\mu_j \in (0,1)$. Set
\begin{equation}
\label{eq:sigma-j}
\begin{split}
\sigma_j &= \diag(\mu_j,\, 1 - \mu_j),\\
\lambda_j &= \frac{1 - \mu_j}{\mu_j} \in (0,\infty),
\end{split}
\end{equation}
and let the reference sequence consist of the purifying vectors
\begin{equation}
\label{eq:ref-mu}
\begin{split}
\varphi_j &= \sqrt{\mu_j}\; e_1 \otimes e_1 \;+\; \sqrt{1 - \mu_j}\;e_2 \otimes e_2,\\
\fc &= [\varphi]_w \in \Gamma_w .
\end{split}
\end{equation}
The basic computational fact, an immediate expansion in the basis $\{e_k \otimes e_l\}$ and the promised generalization of the trace identity \eqref{traceIdentity}, is the \emph{weighted character formula}
\begin{equation}
\label{eq:weighted-char}
\bigl\langle \varphi_j,\, (X \otimes I)\,\varphi_j \bigr\rangle
= \tr(\sigma_j X),
\qquad X \in \cB(\bC^2),
\end{equation}
which identifies the site overlap of a party-local operator with its expectation in Alice's reduced state $\sigma_j$. Example~\ref{bellEx} is the case $\mu_j \equiv \tfrac12$, where $\sigma_j = \tfrac12 I$ and \eqref{eq:weighted-char} reduces to the normalized trace.

Let
\begin{equation}
\label{eq:alice-algebra}
\cA_A = \bigotimes_j \bigl( \cB(\bC^2) \otimes I \bigr)
\end{equation}
denote Alice's quasi-local algebra, a UHF algebra of type $2^\infty$, and let $\omega_\mu = \bigotimes_j \omega_{\sigma_j}$, with $\omega_{\sigma_j}(X) = \tr(\sigma_j X)$, denote the product state that \eqref{eq:weighted-char} induces on it. Since each $\sigma_j$ is faithful, $\varphi$ is cyclic and separating for the weak closure, and
\begin{equation}
\label{eq:party-factor}
\begin{split}
M_\mu &= \pi_{\omega_\mu}(\cA_A)''\\
&= \bigotimes_j \bigl( M_2(\bC),\, \omega_{\sigma_j} \bigr)
\end{split}
\end{equation}
is the ITPFI factor of Araki and Woods \cite{AW68} attached to the weights $\mu$, presented in its standard form on the incomplete tensor product $\bigotimes^{[\varphi]}_j \cH_j \subset \cH_{\mathrm{univ}}$ -- the GNS space of $\omega_\mu$ -- with cyclic and separating vector $\Phi_\varphi = \otimes \varphi_j$. For $\mu_j \equiv \tfrac12$ this is the hyperfinite $\mathrm{II}_1$ factor of Example~\ref{bellEx}; for $\mu_j \equiv \mu \neq \tfrac12$ it is the Powers factor of
type $\mathrm{III}_\lambda$, $\lambda = (1-\mu)/\mu$ \cite{P67} (relabeling $\mu \leftrightarrow 1-\mu$, which exchanges $\lambda \leftrightarrow \lambda^{-1}$ and yields an isomorphic factor, normalizes $\lambda$ to $(0,1)$).

We record the two classification facts we shall refer to. Both are standard; we cite them, and we emphasize that none of the theorems below depends on them --- they enter only in the interpretation of the exponents we compute.

\begin{fact}
\label{fact:AW}
Let $M_\mu$ be as in \eqref{eq:party-factor}.
\begin{itemize}
\item[(i)] $M_\mu$ is of type $\mathrm{I}$ if and only if $\sum_j \min(\mu_j, 1-\mu_j) < \infty$.
\item[(ii)] $M_\mu$ is of type $\mathrm{II}_1$ if and only if $\omega_\mu$ is quasi-equivalent to the tracial product state, i.e. if and
only if
\begin{equation}
\label{eq:II1-series}
\sum_j \Bigl( \sqrt{\mu_j} - \sqrt{1-\mu_j} \Bigr)^{\!2} \;<\; \infty .
\end{equation}
\end{itemize}
\end{fact}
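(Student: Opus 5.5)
Both parts are standard results from the Araki--Woods theory. The paper states explicitly that none of the later theorems depends on them, so I would derive them from the earlier material where possible and cite the rest.

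\textbf{Part (ii).} The key identification is between quasi-equivalence of $\omega_\mu$ with the trace and weak equivalence of purifying sequences.

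First, I would note that the state $\tau=\bigotimes_j\tr(\tfrac12\,\cdot)$ on $\cA_A$ is the restriction of the vector state of $\Phi^+$, by \eqref{traceIdentity}. The state $\omega_\mu$ is the restriction of the vector state of $\varphi$, by \eqref{eq:weighted-char}.

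Next, I would invoke the quasi-equivalence criterion quoted in Section \ref{everettSec} from \cite{P67}, but applied to the product states of the party algebra $\cA_A$. The relevant form is the Bures--Powers criterion: two product states $\bigotimes\omega_{\rho_j}$ and $\bigotimes\omega_{\sigma_j}$ on a UHF algebra are quasi-equivalent if and only if
\begin{equation*}
\sum_j\bigl(1-F(\rho_j,\sigma_j)\bigr)<\infty,
\qquad
F(\rho,\sigma)=\tr\bigl|\sqrt\rho\sqrt\sigma\bigr|.
\end{equation*}
For the commuting diagonal pair $\sigma_j$ and $\tfrac12 I$ the fidelity is
\begin{equation*}
F=\tfrac{1}{\sqrt2}\bigl(\sqrt{\mu_j}+\sqrt{1-\mu_j}\bigr),
\end{equation*}
and therefore
\begin{equation*}
1-F=\frac{\bigl(\sqrt{\mu_j}-\sqrt{1-\mu_j}\bigr)^2}{2\sqrt2\,(\sqrt2+\sqrt{\mu_j}+\sqrt{1-\mu_j})}\asymp\bigl(\sqrt{\mu_j}-\sqrt{1-\mu_j}\bigr)^2.
\end{equation*}
This yields \eqref{eq:II1-series}. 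Equivalently, $1-F$ is exactly $1-|\langle\varphi_j,\Phi^+\rangle|$, so the condition says $\varphi\sim_w\Phi^+$.

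It remains to show that $M_\mu$ is of type $\mathrm{II}_1$ if and only if $\omega_\mu$ is quasi-equivalent to $\tau$.
\begin{itemize}
\item[$\Leftarrow$] If the states are quasi-equivalent, then $M_\mu\cong\pi_\tau(\cA_A)''$, which is the hyperfinite $\mathrm{II}_1$ factor.
\item[$\Rightarrow$] This direction is the genuinely deep Araki--Woods step. If $M_\mu$ is a finite factor, its unique trace restricted to $\cA_A$ must be the unique tracial state $\tau$ of the UHF algebra. Since the trace is normal on $M_\mu$, $\tau$ is $\pi_{\omega_\mu}$-normal, hence quasi-equivalent to $\omega_\mu$ because $M_\mu$ is a factor.
\end{itemize}
I expect the main subtlety to be justifying that the normal trace restricted to the weakly dense $\cA_A$ is $\tau$. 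This follows from the uniqueness of the trace on $M_{2^n}$ for each finite block.

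\textbf{Part (i).} I would cite \cite{AW68} directly. The quantity $\min(\mu_j,1-\mu_j)$ is, up to constants, $1-\max_k|\langle\varphi_j,e_k\otimes e_k\rangle|^2$. So the condition says that $\omega_\mu$ is quasi-equivalent to some pure product state, which is precisely the type $\mathrm{I}$ criterion. The proof again goes through the same fidelity criterion, now comparing with pure product states; this parallels (ii) with $\tfrac12 I$ replaced by rank-one projections.
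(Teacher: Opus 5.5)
Your proposal is correct and follows the paper's route: the paper likewise cites (i) and the equivalence of type $\mathrm{II}_1$ with quasi-equivalence to the tracial product state as standard, and derives \eqref{eq:II1-series} from Powers' fidelity criterion applied to the commuting pair $\sigma_j$, $\tfrac12 I$, exactly as you do (your added argument via uniqueness of the trace on the UHF algebra and normality of the trace in a factor representation is sound). One small arithmetic slip: the closed form is $1-F=(\sqrt{\mu_j}-\sqrt{1-\mu_j})^2/\bigl(\sqrt2\,(\sqrt2+\sqrt{\mu_j}+\sqrt{1-\mu_j})\bigr)$, without your extra factor $2$, and this does not affect the comparability with $(\sqrt{\mu_j}-\sqrt{1-\mu_j})^2$.
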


Part (ii) is worth a comment, because it is an instance of exactly the circle of ideas of the present paper. By Powers' quasi-equivalence criterion \cite{P67} (see also \cite{AW68}), two product states $\bigotimes \omega_{\rho_j}$ and $\bigotimes \omega_{\sigma_j}$ of a UHF algebra are quasi-equivalent if and only if
\begin{equation}
\label{eq:powers}
\sum_j \bigl( 1 - F(\rho_j, \sigma_j) \bigr) < \infty,
\end{equation}
where
\begin{equation}
\label{eq:bures}
F(\rho, \sigma) = \tr \bigl| \sqrt{\rho}\,\sqrt{\sigma} \bigr|
\end{equation}
denotes the \emph{fidelity} \cite{U76, J94}.

We claim that the left-hand side of \eqref{eq:powers} is, term by term, the defining series of the weak metric $\tilde d$. Recall from Section~\ref{weakSec} that $\tilde d$ depends on a pair of $C_0$-sequences only through the moduli of their inner products,
\begin{equation}
\label{eq:td-recall}
\tilde d\bigl( [\varphi]_w, [\psi]_w \bigr)
= \cexp\bigl( \bigl\{\, 1 - |\langle \varphi_j, \psi_j \rangle| \,\bigr\}_j \bigr),
\end{equation}
and is in particular insensitive to the phases of the representatives. On the other hand, by Uhlmann's theorem the fidelity is the maximal overlap of purifications,
\begin{equation}
\label{eq:uhlmann}
F(\rho_j, \sigma_j)
= \max \bigl\{\, |\langle \varphi_j, \psi_j \rangle|:\varphi_j, \psi_j \text{ purify } \rho_j, \sigma_j \,\bigr\},
\end{equation}
the maximum being taken independently in each factor $\cH_j \otimes \cH_j$. Choose, for every $j$, purifications $\varphi_j$ of $\rho_j$ and $\psi_j$ of $\sigma_j$ that attain \eqref{eq:uhlmann}; this is possible site by site, and by gauge invariance the resulting value of $\tilde d$ does not depend on the choice. For these representatives $|\langle \varphi_j, \psi_j \rangle| = F(\rho_j, \sigma_j)$ for every $j$, so the series in \eqref{eq:td-recall} and \eqref{eq:powers} have identical terms and hence the same convergence exponent. Therefore
\begin{equation}
\label{eq:d-is-powers}
\boxed{\;
\tilde d \bigl( [\varphi]_w, [\psi]_w \bigr) = \cexp \bigl( \bigl\{\, 1 - F(\rho_j, \sigma_j) \,\bigr\}_j \bigr)
\;}
\end{equation}
whenever $\varphi_j, \psi_j$ are optimal purifications of $\rho_j, \sigma_j$. Equivalently: the gauge-invariant metric of Section~\ref{weakSec} is the convergence exponent of the series in Powers' quasi-equivalence criterion, and Powers' criterion itself is the statement that quasi-equivalence of the product states is the $\tilde d = 0$, weak equivalence of the purifying sequences.

It is worth pausing on the information-theoretic content of \eqref{eq:d-is-powers}, independent of the operator-algebraic reading that follows. The per-site term is a Bures distance: with the convention $d_B(\rho,\sigma)^2 = 2\bigl(1 - F(\rho,\sigma)\bigr)$ \cite{Bu69,U76}, one has $1 - F(\rho_j,\sigma_j) = \tfrac12\,d_B(\rho_j,\sigma_j)^2$. Consequently
\begin{equation}
\label{eq:d-is-bures}
\tilde d \bigl( [\varphi]_w, [\psi]_w \bigr)
= \cexp \bigl( \bigl\{\, \frac12\, d_B(\rho_j,\sigma_j)^2 \,\bigr\}_j \bigr)
\end{equation}
is the convergence exponent of the accumulated squared Bures distance --- equivalently, of the accumulated infidelity --- between the two branches. The Bures metric is the infinitesimal form of the fidelity and, up to the normalization $g_{\mathrm{QFI}} = 4\,g_{\mathrm B}$, coincides with the quantum Fisher (Bures--Helstrom) information; thus $\tilde d$ is a growth rate of the quantum Fisher information geometry along the environment. This reading is operational. By the theorem of Fuchs and Caves \cite{FC95}, the fidelity is the minimum over measurements of the classical fidelity of the outcome distributions. Therefore, (i) $1 - F(\rho_j,\sigma_j)$ is the per-site statistical distinguishability under the optimal measurement, and (ii) the partial products $\prod_{j\le N} F(\rho_j,\sigma_j)$ control the error exponent of deciding, from the first $N$ sites, which of the two branches one inhabits. The exponent $\tilde d$ is thus the polynomial rate of that optimal distinguishability --- the precise sense in which, in the Everett model of Section~\ref{everettSec}, it is a decoherence exponent. Only the modulus enters, through $F$; the phase-sensitive metric $d$ of Section~\ref{ultraSec} has no such reading, which is one more reason to regard $\tilde d$ as the physically natural object.

The relation \eqref{eq:d-is-powers} is also the structural reason to expect a relation between $\tilde d$ and the Araki--Woods theory at all, and it is what the remainder of the section makes quantitative. For \eqref{eq:II1-series}, take $\rho_j = \sigma_j$ and $\sigma = \tfrac12 I$; both are diagonal, so the purifications
\eqref{eq:ref-mu} and $\tfrac{1}{\sqrt 2}(e_1\otimes e_1 + e_2 \otimes e_2)$ are optimal, and with $a = \sqrt{\mu_j}$, $b = \sqrt{1-\mu_j}$, $a^2 + b^2 = 1$,
\begin{equation*}
\begin{split}
1 - F_j &= 1 - \frac{a+b}{\sqrt 2}\\
&= 1 - \sqrt{1 - \frac12 (a-b)^2}\\
&\asymp \frac14 \,(a-b)^2 ,
\end{split}
\end{equation*}
which is \eqref{eq:II1-series}.

\subsection{The master identity}
\label{sec:AW-master}

Every computation of this section is a specialization of the following.

\begin{lemma}
\label{lem:master}
Let $\sigma = \diag(\mu, 1-\mu)$ with $\mu \in (0,1)$. Define $m = 2\mu - 1 \in (-1,1)$ and $\lambda = (1-\mu)/\mu$.
\begin{itemize}
\item[(i)] For a rotation $U = \exp\bigl( i \vartheta\, \hat n \cdot \vec\sigma \bigr)$, $\hat n \in S^2$, $\vartheta \in \bR$,
\begin{equation}
\label{eq:master-rot}
|\tr( \sigma U )|^{2}= 1 - \bigl( 1 - m^2 n_z^2 \bigr)\, \sin^2 \vartheta .
\end{equation}
\item[(ii)] For the modular gate $U = \sigma^{it}$, $t \in \bR$,
\begin{equation}
\label{eq:master-mod}
\bigl| \tr\bigl( \sigma^{1+it} \bigr) \bigr|^{2} = 1 - 2\mu(1-\mu) \bigl( 1 - \cos( t \log \lambda) \bigr) .
\end{equation}
\end{itemize}
\end{lemma}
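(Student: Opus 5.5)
The plan is to reduce both parts to a direct computation of a $2\times 2$ trace. Since $\sigma$ is diagonal, $\tr(\sigma X)=\mu X_{11}+(1-\mu)X_{22}$ for every $X\in\cB(\bC^2)$, so only the diagonal entries of the gate matter. No earlier results of the paper are needed beyond elementary facts about Pauli matrices.

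For (i), I will first use $(\hat n\cdot\vec\sigma)^2=I$ for $\hat n\in S^2$, which gives the closed form
\begin{equation*}
U=\cos\vartheta\, I+i\sin\vartheta\,(\hat n\cdot\vec\sigma).
\end{equation*}
The basic traces against $\sigma$ are $\tr(\sigma)=1$, $\tr(\sigma\sigma_x)=\tr(\sigma\sigma_y)=0$ (off-diagonal matrices), and $\tr(\sigma\sigma_z)=\mu-(1-\mu)=m$. Therefore
\begin{equation*}
\tr(\sigma U)=\cos\vartheta+i\,m\,n_z\sin\vartheta .
\end{equation*}
Its squared modulus is $\cos^2\vartheta+m^2n_z^2\sin^2\vartheta$. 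Writing $\cos^2\vartheta=1-\sin^2\vartheta$ turns this into \eqref{eq:master-rot}.

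For (ii), the modular gate is diagonal, $\sigma^{it}=\diag(\mu^{it},(1-\mu)^{it})$. Hence
\begin{equation*}
\tr(\sigma^{1+it})=\mu^{1+it}+(1-\mu)^{1+it}=\mu^{it}\bigl(\mu+(1-\mu)\lambda^{it}\bigr),
\end{equation*}
using $(1-\mu)^{it}=\mu^{it}\lambda^{it}$. Since $|\mu^{it}|=1$ and $\lambda^{it}=e^{it\log\lambda}$, the squared modulus equals
\begin{equation*}
\mu^2+(1-\mu)^2+2\mu(1-\mu)\cos(t\log\lambda).
\end{equation*}
Substituting $\mu^2+(1-\mu)^2=1-2\mu(1-\mu)$ yields \eqref{eq:master-mod}.

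I expect no genuine obstacle. The only points needing care are the exponential formula for a Pauli rotation, which rests on $(\hat n\cdot\vec\sigma)^2=I$, and the correct sign $m=2\mu-1$ in $\tr(\sigma\sigma_z)$. Since $n_z$ appears only squared, a sign slip there would not affect the result anyway.
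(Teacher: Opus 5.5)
Your proposal is correct and takes essentially the same route as the paper: in (i) you use the closed form $U=\cos\vartheta\, I+i\sin\vartheta\,(\hat n\cdot\vec\sigma)$ together with $\tr(\sigma\,\hat n\cdot\vec\sigma)=m n_z$, and in (ii) you factor a unimodular phase out of $\mu^{1+it}+(1-\mu)^{1+it}$. The only differences are cosmetic. You read off $\tr(\sigma\sigma_a)$ from the diagonal entries rather than via $\sigma=\tfrac12 I+\tfrac m2\sigma_z$ and Pauli orthogonality, and you factor out $\mu^{it}$ where the paper factors out $(1-\mu)^{it}$.
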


\begin{proof}
(i) Write $\sigma = \tfrac12 I + \tfrac{m}{2} \sigma_z$. Since $\tr(\sigma_a \sigma_b) = 2 \delta_{ab}$ and $\tr(\sigma_a) = 0$, we have $\tr(\sigma) = 1$ and
$\tr\bigl( \sigma\, \hat n \cdot \vec\sigma \bigr) = m\, n_z$. Expanding $U = \cos\vartheta \, I + i \sin\vartheta \, ( \hat n \cdot \vec\sigma )$ gives
$\tr(\sigma U) = \cos\vartheta + i\, m n_z \sin\vartheta$, whence $|\tr(\sigma U)|^2 = \cos^2\vartheta + m^2 n_z^2 \sin^2\vartheta$, which yields
\eqref{eq:master-rot}.

(ii) We have
\begin{equation*}
\begin{split}
\tr(\sigma^{1+it}) &= \mu^{1+it} + (1-\mu)^{1+it}\\
&= \mu\, e^{it\log\mu} + (1-\mu)\, e^{it \log(1-\mu)}.
\end{split}
\end{equation*}
The modulus is unchanged by an overall phase, so, factoring out $e^{it\log(1-\mu)}$, it equals $\bigl| \mu\, e^{i\gamma} + (1-\mu) \bigr|$ with $\gamma = t \log\frac{\mu}{1-\mu} = - t \log \lambda$. Therefore
\begin{equation*}
\begin{split}
\bigl| \tr(\sigma^{1+it}) \bigr|^2&= \mu^2 + (1-\mu)^2 + 2\mu(1-\mu)\cos\gamma\\
&= 1 - 2\mu(1-\mu)\bigl( 1 - \cos\gamma \bigr),
\end{split}
\end{equation*}
and $\cos\gamma = \cos(t\log\lambda)$.
\end{proof}

Both parts reduce, at $\mu = \tfrac12$, to the identities of Example~\ref{bellEx}: \eqref{eq:master-rot} becomes $|\tfrac12 \tr U|^2 = \cos^2 \vartheta$ (with $2\vartheta$ the $SU(2)$ rotation angle), and \eqref{eq:master-mod} becomes the statement that the modular gate of the tracial state is trivial, $\sigma^{it} = 2^{-it} I$, a pure phase.

Throughout we use the elementary bounds
\begin{equation}
\label{eq:sqrt-bounds}
\frac12\, x \;\le\; 1 - \sqrt{1-x} \;\le\; x,
\qquad x \in [0,1],
\end{equation}
which convert the squared moduli of Lemma~\ref{lem:master} into two-sided estimates on the site defects $b_j = 1 - |\langle \varphi_j, (X_j \otimes I) \varphi_j\rangle|$ of Proposition~\ref{branchingCriterionProp} without loss of exponent.

\subsection{The type does not determine the displacement}
\label{sec:AW-neg}

We first dispose of the naive hope. Recall from \eqref{bellDisplacement} -- with the weighted character formula \eqref{eq:weighted-char} in place of the trace identity \eqref{traceIdentity} -- that for a party-local dynamics $\cU_A = \bigotimes_j (U_j \otimes I)$ the displacement of the reference class is $\tilde d(\fc, \cU_A\fc) = \cexp\bigl( \{ 1 - |\tr(\sigma_j U_j)| \}_j \bigr)$.

\begin{proposition}
\label{prop:AW-neg}
Fix $\mu \in (0,1)$, $\mu \neq \tfrac12$, and let $\mu_j \equiv \mu$, so that the party factor $M_\mu$ is the Powers factor of type $\mathrm{III}_\lambda$, $\lambda = (1-\mu)/\mu$. Then for every $\alpha \in [0,1]$ there is a party-local dynamics $\cU_A$ with
\begin{equation*}
\tilde d (\fc,\, \cU_A\fc )= \alpha .
\end{equation*}
Consequently the displacement exponent is not a function of the Araki--Woods type of the party factor.
\end{proposition}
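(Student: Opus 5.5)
The construction uses the master identity, Lemma~\ref{lem:master}(i), with rotations about an axis in the equatorial plane. Take $\hat n=(1,0,0)$, so that $n_z=0$. Then \eqref{eq:master-rot} gives $|\tr(\sigma U)|^2=1-\sin^2\vartheta=\cos^2\vartheta$. This holds independently of $\mu$, so the weight of the Powers factor drops out entirely. The plan is therefore to choose gates $U_j=\exp(i\vartheta_j\sigma_x)$ and tune the angles $\vartheta_j\in[0,\pi/2]$. With the weighted character formula \eqref{eq:weighted-char}, the site defect is
\begin{equation*}
b_j=1-|\tr(\sigma U_j)|=1-\cos\vartheta_j .
\end{equation*}
By the displacement formula recalled before the proposition, $\tilde d(\fc,\cU_A\fc)=\cexp(\{b_j\})$.

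We then treat three cases according to $\alpha$.

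\textbf{Case $\alpha=1$.} Take $\vartheta_j\equiv\pi/2$. Then $b_j\equiv1$, so $\inf_j b_j>0$, and Lemma~\ref{tildeDistLem}(i) gives $\tilde d=1$.

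\textbf{Case $\alpha\in(0,1)$.} Choose $\vartheta_j$ with $1-\cos\vartheta_j=j^{-s}$, where $s=1-\alpha\in(0,1)$; this is possible since $j^{-s}\in(0,1]$. The series $\sum_j j^{-p}j^{-s}$ converges if and only if $p+s>1$, so $\cexp=1-s=\alpha$. Equivalently, via \eqref{partialSums}, $\Sigma_N\asymp N^{1-s}$.

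\textbf{Case $\alpha=0$.} Take $\vartheta_j\equiv0$, the identity dynamics, so $b_j\equiv 0$ and $\tilde d=0$. One could instead take $b_j=j^{-2}$ to get a nontrivial dynamics with the same value.

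The one point to check carefully is the reduction $\tilde d(\fc,\cU_A\fc)=\cexp(\{1-|\tr(\sigma_jU_j)|\})$. It follows from $\|\varphi_j\|=1$ and from the fact that $\cU_A$ is a product unitary on the sites $\cH_j=\bC^2\otimes\bC^2$, so that
\begin{equation*}
\langle\varphi_j,(U_j\otimes I)\varphi_j\rangle=\tr(\sigma U_j)
\end{equation*}
by \eqref{eq:weighted-char}. Hence $\tilde d$ is computed from these unit-norm representatives directly from the definition \eqref{tdDef}. No bounds of the form \eqref{eq:sqrt-bounds} are needed, because choosing $n_z=0$ makes the modulus exact.

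Finally, the party factor $M_\mu$ is the same Powers factor $\mathrm{III}_\lambda$ in every case, while $\alpha$ ranges over all of $[0,1]$. Hence the displacement exponent cannot be a function of the Araki--Woods type. I expect no real obstacle. The only subtlety is the $\mu$-independence, which the choice of equatorial axis settles.
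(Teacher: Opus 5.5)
Your proof is correct and follows the paper's argument: rotation gates $U_j=\exp(i\vartheta_j\,\hat n\cdot\vec\sigma)$ are fed through Lemma~\ref{lem:master}(i), with the angles tuned so that the site defects decay like $j^{-(1-\alpha)}$. The only difference is that you fix an equatorial axis ($n_z=0$), which makes $|\tr(\sigma U_j)|=|\cos\vartheta_j|$ exact and so removes the need for the constant $\kappa=1-m^2n_z^2$ and the bounds \eqref{eq:sqrt-bounds}; the paper allows an arbitrary axis and keeps $\kappa\ge 1-m^2>0$.
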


\begin{proof}
Fix a direction $\hat n \in S^2$ and set $U_j = \exp\bigl( i \vartheta_j\, \hat n \cdot \vec\sigma \bigr)$. Since $\mu \in (0,1)$ we have $|m| < 1$, hence
$\kappa := 1 - m^2 n_z^2 \ge 1 - m^2 > 0$, and Lemma~\ref{lem:master}(i) together with \eqref{eq:sqrt-bounds} gives 
\begin{equation*}
\begin{split}
b_j &= 1 - \bigl| \tr(\sigma U_j) \bigr|\\
&= 1 - \sqrt{ 1 - \kappa \sin^2 \vartheta_j }\\
&\asymp  \kappa \, \sin^2 \vartheta_j ,
\end{split}
\end{equation*}
with absolute constants. The convergence exponent is insensitive to the positive constant $\kappa$, so $\tilde d(\fc, \cU_A\fc) = \cexp\bigl( \{ \sin^2 \vartheta_j \}_j \bigr)$. Choosing $\sin^2 \vartheta_j = j^{-s}$ with $s = 1 - \alpha$ yields $\cexp = \alpha$ for $\alpha \in (0,1]$ (take $\vartheta_j \equiv \pi/2$ for $\alpha = 1$), and choosing $\sin^2 \vartheta_j = j^{-2}$ yields $\cexp = 0$. The weights $\mu_j \equiv \mu$, and hence the factor $M_\mu$ and its type, are the same in all cases.
\end{proof}

The mechanism is the one already visible in Example~\ref{qubitEx}: the exponent records how fast Alice's gates approach the stabilizer of her reduced state, and that rate is entirely free of the reduced state itself. The type is a property of the reference; the displacement is a property of the reference \emph{and} the dynamics. A correspondence between the two can therefore only be expected when the dynamics is itself canonically attached to the reference. There is exactly one such dynamics, and it is the modular one.

\subsection{The modular gate detects Connes' invariant \texorpdfstring{$T$}{T}}
\label{sec:AW-connes}

The modular group of the state $\omega_\mu$ on $M_\mu$ is $\sigma^{\omega_\mu}_t = \bigotimes_j \Ad(\sigma_j^{it})$, implemented on $\cH_{\mathrm{univ}}$ by the two-sided product unitary $\Delta^{it} = \bigotimes_j \bigl( \sigma_j^{it} \otimes \sigma_j^{-it} \bigr)$, which fixes $\Phi_\varphi$ and hence fixes the class $\fc$. Its \emph{one-sided} half,
\begin{equation}
\label{eq:modular-gate}
\cV_t = \bigotimes_j \bigl( \sigma_j^{it} \otimes I \bigr),
\end{equation}
is a party-local product unitary in the sense of Section~\ref{unitSec} and does not, in
general, fix $\fc$. Its displacement exponent is the quantity of interest.

\begin{theorem}
\label{thm:connesT}
Let $\mu_j \equiv \mu \in (0,1)$, $\mu \neq \tfrac12$, so that $M_\mu$ is the Powers factor of type $\mathrm{III}_\lambda$ with $\lambda = (1-\mu)/\mu$, and let $\cV_t$ be as in \eqref{eq:modular-gate}. Then
\begin{equation}
\label{eq:connesT}
\tilde d ( \fc,\, \cV_t\,\fc )
=
\begin{cases}
0, & t \in T(M_\mu), \\[2pt]
1, & t \notin T(M_\mu),
\end{cases}
\qquad\text{where}\quad
T(M_\mu) = \frac{2\pi}{|\log \lambda|}\, \bZ\,,
\end{equation}
is Connes' invariant, the set of $t$ for which $\sigma^{\omega_\mu}_t$ is inner \cite{C73}. The function $t \mapsto \tilde d(\fc, \cV_t\,\fc)$ is thus the indicator function of the complement of $T(M_\mu)$, and $\tilde d$ determines $|\log\lambda|$ --- that is, $\lambda$ up to the relabeling $\mu \leftrightarrow 1-\mu$ --- and hence the type of $M_\mu$.
\end{theorem}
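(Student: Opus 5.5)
The plan is to compute the per-site defect directly and observe that it is independent of $j$. The sequence is then either identically zero or bounded away from zero, and the dichotomy follows.

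First, reduce the displacement to a site overlap. We use the unit-norm representative $\varphi_j$ of \eqref{eq:ref-mu}, with $\mu_j\equiv\mu$, so that $\varphi_j$ and $\sigma_j=\sigma$ are the same for every $j$. The weighted character formula \eqref{eq:weighted-char} with $X=\sigma^{it}$ gives
\begin{equation*}
\langle\varphi_j,(\sigma^{it}\otimes I)\varphi_j\rangle=\tr(\sigma^{1+it}).
\end{equation*}
By \eqref{eq:td-recall}, $\tilde d(\fc,\cV_t\fc)$ is therefore the convergence exponent of the constant sequence $b_j\equiv b(t)=1-|\tr(\sigma^{1+it})|$. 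Lemma \ref{lem:master}(ii) gives
\begin{equation*}
|\tr(\sigma^{1+it})|^2=1-2\mu(1-\mu)\bigl(1-\cos(t\log\lambda)\bigr),
\end{equation*}
and \eqref{eq:sqrt-bounds} then yields $b(t)\asymp 2\mu(1-\mu)(1-\cos(t\log\lambda))$.

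Second, split into cases. If $t\log\lambda\in 2\pi\bZ$, i.e.\ $t\in\frac{2\pi}{|\log\lambda|}\bZ$, then $b(t)=0$ for all $j$. In that case $(\sigma^{it}\varphi_j)\sim_w(\varphi_j)$ and $\tilde d=0$ by Lemma \ref{tdLem}(ii). This is the trivial-modulus case, since $\sigma^{it}$ is then a scalar $\mu^{it}I$. Otherwise $b(t)>0$ is a constant independent of $j$, because $\mu(1-\mu)>0$ and $\lambda\ne1$ (which uses $\mu\neq\tfrac12$). The same sequence works for every $j$, so $\inf_j b_j>0$, and Lemma \ref{tildeDistLem}(i) gives $\tilde d=1$.

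Third, identify the zero set with Connes' invariant. Here I will cite the standard computation for Powers factors \cite{C73}: $T(M_\mu)=\frac{2\pi}{|\log\lambda|}\bZ$, and this is where the period comes from. As a consistency check, for $t$ in this set $\sigma^{it}$ is scalar, so each $\Ad(\sigma^{it})$ is trivial and $\sigma^{\omega_\mu}_t$ is the identity, hence inner. The converse inclusion is Connes' theorem, and I will not reprove it, in line with the section's declared policy that classification facts enter only interpretively.

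Finally, the function $t\mapsto\tilde d(\fc,\cV_t\fc)$ vanishes exactly on a lattice of period $2\pi/|\log\lambda|$. That period determines $|\log\lambda|$, hence $\lambda$ up to $\lambda\leftrightarrow\lambda^{-1}$, hence the type $\mathrm{III}_{\min(\lambda,\lambda^{-1})}$.

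The only real subtlety is the justification that $T(M_\mu)$ is exactly this lattice. Everything else is a direct specialization of Lemma \ref{lem:master}, so that step is the main obstacle, and I handle it by citation.
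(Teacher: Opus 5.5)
Your proposal is correct and follows the paper's proof essentially step for step: the weighted character formula reduces the displacement to the $j$-independent defect $b(t)=1-|\tr(\sigma^{1+it})|$, and Lemma~\ref{lem:master}(ii) places its zeros exactly at $t\log\lambda\in 2\pi\bZ$ (a genuine lattice because $\log\lambda\neq 0$ when $\mu\neq\tfrac12$), so that $\tilde d=0$ there by weak equivalence and $\tilde d=1$ elsewhere by Lemma~\ref{tildeDistLem}(i). As in the paper, you take the identification $T(M_\mu)=\frac{2\pi}{|\log\lambda|}\bZ$ from Connes rather than proving it, and your remark that $\sigma^{it}$ is scalar on this lattice (so that $\sigma^{\omega_\mu}_t$ is trivial there) is a correct check of one inclusion.
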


\begin{proof}
By \eqref{eq:weighted-char} the site defects are $b_j(t) = 1 - |\tr(\sigma^{1+it})|$, independent of $j$; call the common value $b(t)$. By Lemma~\ref{lem:master}(ii),
\begin{equation*}
b(t) = 1 - \sqrt{ 1 - 2\mu(1-\mu)\bigl( 1 - \cos(t \log \lambda)
\bigr) } .
\end{equation*}
Since $\mu \in (0,1)$ we have $2\mu(1-\mu) > 0$, so $b(t) = 0$ if and only if $\cos(t\log\lambda) = 1$, i.e. if and only if $t \log \lambda \in 2\pi\, \bZ$; as $\mu \neq \tfrac12$ we have $\log\lambda \neq 0$, and this is the condition $t \in \frac{2\pi}{|\log\lambda|}\,\bZ$. For such $t$ the sequence $(\sigma^{it}\varphi_j)$ is weakly equivalent to $(\varphi_j)$ and $\tilde d(\fc, \cV_t\fc) = 0$.

If $t \notin \frac{2\pi}{|\log\lambda|}\,\bZ$, then $b(t) > 0$, and since $b_j(t) = b(t)$ for every $j$ we have $\inf_j b_j(t) > 0$; Lemma~\ref{tildeDistLem}(i)
gives $\tilde d(\fc, \cV_t\fc) = 1$.

Finally, the set $\{ t : \tilde d(\fc,\cV_t\fc) = 0 \}$ is the cyclic group generated by $2\pi/|\log\lambda|$, from which $|\log\lambda|$ is recovered.
\end{proof}

Two comments are in order.

\begin{remark}
\label{rem:connes-recovery}
The dichotomy in \eqref{eq:connesT} is total: the modular gate either fixes the class or displaces it maximally, with nothing in between. This is a rigidity that Proposition~\ref{prop:AW-neg} shows is unavailable for a general party-local dynamics, and it is what makes the recovery of $\lambda$ possible. The theorem should be read as saying that the pathology of Reents \cite{R74} --- the failure of strong continuity of product one-parameter groups, invoked in Section~\ref{discussionSubsec} to force discrete time --- is, for the modular group of a Powers state, \emph{exactly} calibrated by Connes' invariant: the one-sided modular flow leaves the sector at the maximal rate for every $t$ except the periods at which the modular automorphism becomes inner, and at those periods it does not leave the sector at all.
\end{remark}

\begin{remark}
\label{rem:III1}
For type $\mathrm{III}_1$ one has $T(M) = \{0\}$, and one expects, in parallel with \eqref{eq:connesT}, that $\tilde d(\fc, \cV_t\fc) = 1$ for every $t \neq 0$. This does not follow from Theorem~\ref{thm:connesT}, whose hypothesis of constant weights forces $\mathrm{III}_\lambda$ with $\lambda = (1-\mu)/\mu$ fixed. For non-constant weights the defects $b_j(t)$ are no longer $j$-independent, Lemma~\ref{tildeDistLem}(i) no longer applies, and the exponent becomes sensitive to the rate at which the $\mu_j$ vary --- which is the content of the next subsection. We do not know whether $\tilde d(\fc, \cV_t\fc) = 1$ for all $t\neq 0$ characterizes $T(M_\mu) = \{0\}$
among the weight sequences with $\inf_j \mu_j > 0$; this seems to us the natural next question.
\end{remark}

\subsection{The tracial boundary: a graded refinement}
\label{sec:AW-tracial}

The remaining regime is the one in which the weights approach the tracial value. Here the displacement exponent of the modular gate does not distinguish values of $t$ at all; instead it computes the convergence exponent of the very series that decides, in Fact~\ref{fact:AW}(ii), whether the factor is of type $\mathrm{II}_1$.

For a weight sequence $\mu$ define its \emph{tracial defect exponent}
\begin{equation}
\label{eq:tau}
\begin{split}
\tau(\mu)
&=\cexp \bigl( \bigl\{\, \bigl( \sqrt{\mu_j} - \sqrt{1-\mu_j} \bigr)^{2}\,\bigr\}_j \bigr)\\
&=\limsup_{N \to \infty} \frac{ \log^{+} \sum_{j \le N} \bigl( \sqrt{\mu_j} - \sqrt{1-\mu_j} \bigr)^{2} }{ \log N }\in [0,1],
\end{split}
\end{equation}
the convergence exponent \eqref{rhoDef} of the Araki--Woods series \eqref{eq:II1-series}. Since
\begin{equation*}
\begin{split}
(\sqrt{\mu} - \sqrt{1-\mu})^2 &= 1 - \sqrt{1 - 4\big(\mu - \frac12\big)^2}\\
&\asymp \big(\mu - \frac12\big)^2,
\end{split}
\end{equation*} 
uniformly for $\mu$ in a neighborhood of $\tfrac12$, one may equivalently write $\tau(\mu) = \cexp\bigl( \{ (\mu_j - \tfrac12)^2 \}_j \bigr)$ whenever
$\mu_j \to \tfrac12$.

\begin{theorem}
\label{thm:tracial}
Assume $\mu_j \to \tfrac12$. Then for every $t \neq 0$
\begin{equation}
\label{eq:tracial}
\tilde d ( \fc,\, \cV_t\,\fc ) = \tau(\mu),
\end{equation}
independently of $t$. Consequently:
\begin{itemize}
\item[(i)] if $M_\mu$ is of type $\mathrm{II}_1$, then $\tilde d(\fc, \cV_t\fc) = 0$ for every $t$;
\item[(ii)] if $\tilde d(\fc, \cV_t\fc) > 0$ for some $t \neq 0$, then $M_\mu$ is of type neither $\mathrm{I}$ nor $\mathrm{II}_1$.
\end{itemize}
\end{theorem}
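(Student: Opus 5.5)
We reduce the displacement, via the weighted character formula, to a per-site computation. By \eqref{eq:weighted-char} with $X=\sigma_j^{it}$, the site defect is $b_j(t)=1-|\tr(\sigma_j^{1+it})|$, so
\[
\tilde d(\fc,\cV_t\fc)=\cexp\bigl(\{b_j(t)\}_j\bigr).
\]
Lemma~\ref{lem:master}(ii), applied at each site, gives
\[
|\tr(\sigma_j^{1+it})|^2=1-x_j(t),\qquad x_j(t)=2\mu_j(1-\mu_j)\bigl(1-\cos(t\log\lambda_j)\bigr)\in[0,1].
\]
By \eqref{eq:sqrt-bounds}, $\tfrac12 x_j\le b_j\le x_j$, so $\tilde d(\fc,\cV_t\fc)=\cexp(\{x_j(t)\})$. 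This is because the convergence exponent is unchanged when a sequence is replaced by one comparable to it up to positive constants, which is immediate from \eqref{rhoDef}.

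Next we compare $x_j(t)$ to $(\mu_j-\tfrac12)^2$. Since $\mu_j\to\tfrac12$, we have $2\mu_j(1-\mu_j)\to\tfrac12$ and $\lambda_j\to1$. A Taylor expansion of $\log\lambda_j=\log(1-\mu_j)-\log\mu_j$ at $\mu=\tfrac12$ gives $\log\lambda_j=-4(\mu_j-\tfrac12)+O(|\mu_j-\tfrac12|^3)$, so $|\log\lambda_j|\asymp|\mu_j-\tfrac12|$ for large $j$. Fix $t\neq0$. Then $t\log\lambda_j\to0$, and $1-\cos y\asymp y^2$ for $|y|\le1$. Hence for all $j$ beyond some $j_0(t)$,
\[
x_j(t)\asymp t^2(\log\lambda_j)^2\asymp t^2\bigl(\mu_j-\tfrac12\bigr)^2,
\]
with constants independent of $j$. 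The finitely many $j<j_0$ do not affect the exponent, since all terms lie in $[0,1]$ and a finite sum is finite. The factor $t^2>0$ is a harmless constant.

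Therefore $\tilde d(\fc,\cV_t\fc)=\cexp(\{(\mu_j-\tfrac12)^2\})$. By the remark after \eqref{eq:tau}, this equals $\tau(\mu)$, using $(\sqrt{\mu}-\sqrt{1-\mu})^2\asymp(\mu-\tfrac12)^2$ near $\tfrac12$. This proves \eqref{eq:tracial}.

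For the consequences:
\begin{itemize}
\item[(i)] If $M_\mu$ is of type $\mathrm{II}_1$, then \eqref{eq:II1-series} converges, so $\tau(\mu)=0$ because a convergent series has exponent $0$. At $t=0$ the gate is the identity, so the displacement vanishes there as well.
\item[(ii)] Suppose $\tau(\mu)>0$. Then \eqref{eq:II1-series} diverges, which excludes type $\mathrm{II}_1$ by Fact~\ref{fact:AW}(ii). For type $\mathrm{I}$: since $\mu_j\to\tfrac12$, we have $\min(\mu_j,1-\mu_j)\to\tfrac12$, so $\sum_j\min(\mu_j,1-\mu_j)=\infty$, and Fact~\ref{fact:AW}(i) excludes type $\mathrm{I}$.
\end{itemize}
Alternatively, avoiding Fact~\ref{fact:AW}, (ii) is simply the contrapositive reading of \eqref{eq:tracial}; the facts enter only in naming the types.

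The main obstacle is the uniform two-sided comparison $x_j(t)\asymp(\mu_j-\tfrac12)^2$. It needs both the small-angle regime for $1-\cos$, which is where $\mu_j\to\tfrac12$ and the fixed $t$ are used, and the linearization of $\log\lambda_j$. Both must be done with constants uniform in $j$ beyond a $t$-dependent threshold. I would first do this with explicit elementary inequalities: $\tfrac{2}{\pi^2}y^2\le1-\cos y\le\tfrac12y^2$ for $|y|\le\pi$, and the mean value theorem applied to $\log$ on $[\tfrac14,\tfrac34]$.
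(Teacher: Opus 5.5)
Your proof is correct and follows the paper's argument essentially step for step. The shared steps are: the per-site defect $b_j(t)=1-|\tr(\sigma_j^{1+it})|$ from \eqref{eq:weighted-char}; Lemma~\ref{lem:master}(ii) combined with \eqref{eq:sqrt-bounds}; the linearization $\log\lambda_j=-4(\mu_j-\tfrac12)+O(|\mu_j-\tfrac12|^3)$ together with $1-\cos\gamma\asymp\gamma^2$ beyond a $t$-dependent threshold; and Fact~\ref{fact:AW} for the two consequences.

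One quibble concerns your closing aside: (ii) cannot be obtained without Fact~\ref{fact:AW}. Equation \eqref{eq:tracial} alone only gives divergence of \eqref{eq:II1-series}, and it is the Araki--Woods criteria that turn this into a statement about the type. Your main argument, however, uses them exactly as the paper does.
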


\begin{proof}
Write $\epsilon_j = \mu_j - \tfrac12 \to 0$ and $\gamma_j = t \log \lambda_j$. Since
\begin{equation*}
\begin{split}
\log \lambda_j &= \log \frac{1-\mu_j}{\mu_j}\\
&= \log \frac{ \tfrac12 - \epsilon_j }{ \tfrac12 + \epsilon_j }\\
&= -4\epsilon_j + O(\epsilon_j^3),
\end{split}
\end{equation*}
we have, for each fixed $t$, $\gamma_j \asymp |t|\, |\epsilon_j| \to 0$. Hence $1 - \cos\gamma_j \asymp \gamma_j^2 \asymp t^2 \epsilon_j^2$ for $j$ large. Also $2\mu_j(1-\mu_j) \to \tfrac12$, so this factor is bounded above and below by positive constants for $j$ large. Lemma~\ref{lem:master}(ii) and
\eqref{eq:sqrt-bounds} therefore give
\begin{equation*}
\begin{split}
b_j(t) &= 1 - \bigl| \tr\bigl( \sigma_j^{1+it} \bigr) \bigr|\\
&\asymp  2\mu_j(1-\mu_j) \bigl( 1 - \cos \gamma_j \bigr)\\
&\asymp  t^2\, \epsilon_j^2,
\end{split}
\end{equation*}
for all sufficiently large $j$, with absolute constants. Since the convergence exponent $\cexp$ is unchanged by a positive multiplicative constant --- here $t^2$ --- and by the modification of finitely many terms,
\begin{equation*}
\begin{split}
\tilde d(\fc, \cV_t\,\fc)&= \cexp\bigl( \{ b_j(t) \}_j \bigr)\\
&= \cexp\bigl( \{ \epsilon_j^2 \}_j \bigr)\\
&= \tau(\mu).
\end{split}
\end{equation*}
The last equality is a consequence of the remark following \eqref{eq:tau}. This proves \eqref{eq:tracial}.

For (i): if $M_\mu$ is of type $\mathrm{II}_1$ then the series \eqref{eq:II1-series} converges by Fact~\ref{fact:AW}(ii), so its convergence exponent vanishes, $\tau(\mu) = 0$; and $\tilde d(\fc,\cV_0\fc) = 0$ trivially. For (ii): $\tau(\mu) > 0$ forces the series \eqref{eq:II1-series} to diverge, so $M_\mu$ is not of type
$\mathrm{II}_1$; and $\mu_j \to \tfrac12$ gives $\min(\mu_j, 1-\mu_j) \to \tfrac12$, so $\sum_j \min(\mu_j, 1-\mu_j) = \infty$ and, by Fact~\ref{fact:AW}(i),
$M_\mu$ is not of type $\mathrm{I}$.
\end{proof}

Thus, in the asymptotically tracial regime, the displacement of the reference class under the one-sided modular flow is a \emph{graded refinement of the Araki--Woods tracial boundary}: it vanishes on the $\mathrm{II}_1$ region and takes a definite positive value --- the polynomial divergence rate of the series \eqref{eq:II1-series} --- outside it. The exponent $\tau(\mu)$ measures, on a scale from $0$ to $1$, \emph{how badly} the ITPFI factor $M_\mu$ fails to be of type $\mathrm{II}_1$.

The converse of Theorem~\ref{thm:tracial}(i) is false, and its failure is the operator-algebraic incarnation of the degeneracy of the pseudometric that we have met twice already --- in the remark following Lemma~\ref{basicLem}, and in Section~\ref{pseudometricRoleSubsec}.

\begin{example}
\label{ex:marginal-factor}
Take $\mu_j = \tfrac12 + j^{-1/2}$ for $j \ge 5$ (and $\mu_j = \tfrac12$ otherwise). Then 
\begin{equation*}
\begin{split}
\sum_j \big(\mu_j - \frac12\big)^2 &\,= \,\sum_j \frac{1}{j}\\
&= \infty,
\end{split}
\end{equation*}
and so, by Fact~\ref{fact:AW}(ii), $M_\mu$ is \emph{not} of type $\mathrm{II}_1$; yet $\tau(\mu) = \cexp\bigl( \{ j^{-1} \}_j \bigr) = 0$, since the partial sums grow like $\log N = N^{o(1)}$. By Theorem~\ref{thm:tracial}, the modular gate displaces the class into a weakly inequivalent --- hence disjoint --- class, at exponent zero.
\end{example}

The stratum exhibited in Example~\ref{ex:marginal-factor} is exactly parallel to the marginal branching regime of Section~\ref{pseudometricRoleSubsec}: a genuine
separation, invisible to the metric because it proceeds subpolynomially. The degeneracy locus $\{ \tau = 0 \}$ of the pseudo-ultrametric strictly contains the $\mathrm{II}_1$ region, and the difference between the two is the subpolynomial collar just outside the tracial boundary. That the ``pseudo'' in pseudo-ultrametric should reappear here, at the boundary of the Araki--Woods classification, and with the same meaning it carries in the branching model, is to our mind the most substantive point of contact between the elementary metric geometry of this paper and the theory of hyperfinite factors.

\subsection{Summary}
\label{sec:AW-summary}

The relation between the metric $\tilde d$ and the Araki--Woods classification may be summarized as follows.
\begin{itemize}
\item[(A)] The gauge-invariant metric is, by \eqref{eq:d-is-powers}, the convergence exponent of the series in Powers' quasi-equivalence criterion. It is therefore an object of the same species as the Araki--Woods invariants, all of which are summability conditions on the weights.
\item[(B)] It is \emph{not} determined by the type: at a fixed $\mathrm{III}_\lambda$ party factor, party-local dynamics realize every displacement in $[0,1]$ (Proposition~\ref{prop:AW-neg}).
\item[(C)] Displacement under the \emph{modular} gate, however, is a classification-theoretic quantity. For Powers weights it is the indicator of the complement of Connes' invariant $T(M)$, and it determines the type (Theorem~\ref{thm:connesT}). For asymptotically tracial weights it is the convergence exponent of the Araki--Woods $\mathrm{II}_1$ series, and grades the failure of type $\mathrm{II}_1$ (Theorem~\ref{thm:tracial}).
\item[(D)] The gap between $\{\tau = 0\}$ and the $\mathrm{II}_1$ region is a subpolynomial collar, the exact analogue --- and, under the party correspondence, the exact image --- of the marginal branching stratum of Section~\ref{pseudometricRoleSubsec}.
\end{itemize}
The intermediate regime, in which the weights neither are constant nor converge to $\tfrac12$, and in which $M_\mu$ may be of type $\mathrm{III}_0$ or $\mathrm{III}_1$, is not covered by the results above; Remark~\ref{rem:III1} formulates what we take to be the natural question there.

\end{document}